\title{Source Localization in Networks: Trees and Beyond}
\author{\IEEEauthorblockN{Kai Zhu and Lei Ying}\\
\IEEEauthorblockA{School of Electrical, Computer and Energy Engineering\\ Arizona State University\\
Tempe, AZ, United States, 85287\\
Email: kzhu17@asu.edu, lei.ying.2@asu.edu}}
\date{}
\newtheorem{thm}{Theorem}
\newtheorem{lemma}{Lemma}
\newcommand{\obs}{{\cal O}}
\newcommand{\tree}{T}
\newcommand{\treeSet}{{\cal T}}
\newcommand{\obsTime}{t}
\newcommand{\ecce}{e}
\newcommand{\edges}{{\cal E}}
\newcommand{\nodes}{{\cal V}}
\newcommand{\infObs}{{\cal I}}
\newcommand{\healthObs}{{\cal H}}
\newcommand{\dist}{d}
\newcommand{\boundary}{{\cal B}}
\newcommand{\degree}{{\hbox{deg}}}
\newcommand{\source}{s}
\newcommand{\levelset}{{\cal L}}
\newcommand{\graph}{g}
\newcommand{\collisionset}{{\cal R}}
\newcommand{\collisionsize}{R}
\newcommand{\level}{l}
\newcommand{\offspring}{\Phi}
\newcommand{\offspringsize}{\phi}
\begin{document}
\maketitle
\begin{abstract}
Information diffusion in networks can be used to model many real-world phenomena, including rumor spreading on online social networks, epidemics in human beings, and malware on the Internet. Informally speaking, the source localization problem is to identify a node in the network that provides the best explanation of the observed diffusion. Despite significant efforts and successes over last few years, theoretical guarantees of source localization algorithms were established only for tree networks due to the complexity of the problem. This paper presents a new source localization algorithm, called the Short-Fat Tree (SFT) algorithm. Loosely speaking, the algorithm selects the node such that the breadth-first search (BFS) tree from the node has the minimum depth but the maximum number of leaf nodes. Performance guarantees of SFT under the independent cascade (IC) model are established for both tree networks and the Erdos-Renyi (ER) random graph. On tree networks, SFT is the maximum a posterior (MAP) estimator. On the ER random graph, the following fundamental limits have been obtained: $(i)$ when the infection duration $<\frac{2}{3}t_u,$  SFT identifies the source with probability one asymptotically, where $t_u=\left\lceil\frac{\log n}{\log \mu}\right\rceil+2$ and $\mu$ is the average node degree, $(ii)$ when the infection duration $>t_u,$ the probability of identifying the source approaches zero asymptotically under any algorithm; and $(iii)$ when infection duration $<t_u,$ the BFS tree starting from the source is a fat tree. Numerical experiments on tree networks, the ER random graphs and real world networks with different evaluation metrics show that the SFT algorithm outperforms existing algorithms.
\end{abstract}

\section{Introduction}\label{sec:introduction}
The information source detection problem (or called rumor source detection problem) is to identify the source of information diffusion in networks based on available observations like the states of the nodes and the timestamps at which nodes adopted the information (or called infected). The solution of the problem can be used to answer a wide range of important questions. For example, in epidemiology, the knowledge of the epidemic source has been used to understand the transmission media of the disease \cite{John_1854}.  For a computer virus spreading on the Internet, tracing the source helps locate the virus creator. For the news over the social media, locating the sources helps users verify the credibility of the news.

Because of its wide range of applications, the problem has gained a lot of attention in the last few years since the seminal work by Shah and Zaman \cite{ShaZam_11}. A number of effective information source detection algorithms have been proposed under different diffusion models. Despite significant efforts and successes, theoretical guarantees have been established only for tree networks due to the complexity of the problem in non-tree networks. In this paper, we first develop a new information source detection algorithm, called the Short-Fat Tree algorithm, and then present a comprehensive performance analysis of the algorithm under the IC model for both tree networks and the ER random graph. To the best of our knowledge, SFT is the first algorithm that has provable performance guarantees on both tree networks and the ER random graph \cite{ErdRen_59} (non-tree networks).

The fundamental possibility and impossibility results are summarized as follows.
\begin{enumerate}
	\item For tree networks, we prove that the Jordan infection center with the maximum weighted boundary node degree (WBND) is the MAP estimator of the source under the heterogeneous IC model.
	Based on the derivation, we propose an algorithm called the Short-Fat Tree (SFT) algorithm which is applicable to both tree and general networks.
\item We analyze the performance of the SFT algorithm on the ER random graph. Under some mild conditions on the average node degree, we establish the following three results:
\begin{itemize}
\item[(i)] Assume the infection duration $<\frac{\log n}{(1+\alpha)\log\mu}$ for some $\alpha>0.5,$ SFT identifies the source with probability 1 (w.p.1) asymptotically (as network size increases to infinity).

\item[(ii)] Assume the infection duration $\geq \left\lceil\frac{\log n}{\log \mu}\right\rceil+2,$ the probability of identifying the source approaches zero asymptotically under any information source detection algorithm, i.e., it is impossible to detect the source with non-zero probability.

\item[(iii)] Assume the infection duration $<\frac{\log n}{(1+\alpha)\log\mu}$ for some $\alpha>0,$ asymptotically, at least $1-\delta$ fraction of the nodes on the BFS-tree starting from the source are leaf-nodes, where $\delta>3\sqrt{\frac{\log n}{\mu}}$. This result does not provide any guarantee on the probability of correctly localizing the source, but states that the BFS-tree starting from the true source is a ``fat'' tree, which further justifies the SFT algorithm.
\end{itemize}
The results are summarized in Figure \ref{fig:PerformanceSummary}.
\begin{figure}
        \centering
 		\includegraphics[width=0.8\columnwidth]{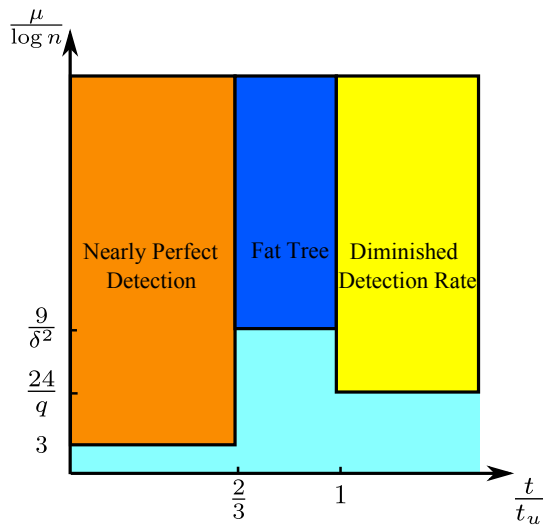}
        \caption{Summary of the main results. This figure summarizes the key results in terms of $t,$ the infection time, and $\mu,$ the average node degree. In the figure, $q$ is the lower bound on the infection probability; ``fat tree'' means that there are $1-\delta$ fraction of nodes are boundary nodes on the BFS tree rooted at the source; and $t_u=\left\lceil\frac{\log n}{\log \mu}\right\rceil+2,$ which is the lower bound of the observation time (we proved that all algorithms fail when $t>t_u.$)}\label{fig:PerformanceSummary}
\end{figure}
We remark that results (i) and (iii) are highly nontrivial because a subgraph of the ER random graph is a tree with high probability only when the diameter is $\frac{\log n}{2\log \mu}$, and (i) and (iii) deal with subgraphs that are not trees. To the best of our knowledge, these are the first theoretical results on information source detection on non-tree networks under probabilistic diffusion models.

\item One drawback of the WBND tie-breaking is that it requires the infection probabilities of all edges in the IC model. We simplify WBND to BND by using the boundary node degree in SFT. As shown in Section \ref{sec:performance-evaluation}, the performance of BND tie-breaking is very close to WBND tie-breaking. We conducted extensive simulations on trees, ER random graphs and real world networks. SFT outperforms existing algorithms by having a higher detection rate and being closer to the actual source. We further evaluated the scalability of the algorithm by measuring the running time. Our results demonstrate that SFT achieves a better performance with a reasonably short execution time.
\end{enumerate}

\subsection{Related Work}\label{sec:relatedWork}
\cite{ShaZam_11} is one of the first papers that study the information source detection problem, in which a new graph centrality called rumor centrality was proposed and proved to be the maximum likelihood estimator (MLE) on regular trees under the susceptible-infected (SI) model. In addition, the detection probability (the probability that the estimator is the source) for regular trees was proved to be greater than zero and the detection probability for geometric trees approaches one asymptotically as the increase of the spreading time. Later, \cite{ShaZam_12} quantified the detection probability of the rumor centrality on general random trees.

The rumor centrality has been further studied under different scenarios: 1) \cite{LuoTayLen_13} extended the rumor centrality to multiple sources and showed that the detection probability goes to one as the number of infected nodes increases for geometric trees when there are at most two sources; 2) \cite{KarFra_13} proved a similar performance guarantee for the single source case when only a subset of infected nodes are observed; 3) \cite{DonZhaTan_13} studied the detection probability when the prior knowledge of suspect nodes is available in the single source detection problem for  trees; 4) \cite{WanDonZha_14} analyzed the detection probability of the rumor centrality for tree networks when there are multiple observations of independent diffusion processes from the same source.

\cite{ZhuYin_14_2} proposed the sample path based approach for the single source detection problem. Define the infection eccentricity of a node to be the maximum distance between the node and the infected nodes. \cite{ZhuYin_14_2} proved that on tree networks, under the homogeneous susceptible-infected-recovered (SIR) model,  the root of the most likely sample path is a node with the minimum infection eccentricity (a Jordan infection center), which is within a constant distance to the actual source with a high probability. The approach has been extended to several directions: 1) \cite{ZhuYin_14} extended the approach to the case with partial observations and under the heterogeneous SIR model; 2) \cite{CheZhuYin_14} extended the analysis to multiple sources under the SIR model and proved that the distance between the estimator and its closest actual source is bounded by a constant with a high probability in tree networks; 3) \cite{LuoTay_13_2,LuoTay_13} proved that the Jordan infection centers are the optimal sample path estimators under the SI model \cite{LuoTay_13_2} and the susceptible-infected-susceptible (SIS) model \cite{LuoTay_13} for tree networks, respectively.

Besides the rumor centrality and the Jordan infection center,  several other heuristic algorithms based on a single snapshot of the network have been proposed in the literature: 1) \cite{LapTerGun_10} studied a similar problem under the independent cascade (IC) model \cite{GolLibMul_01} to minimize the l1 distance between the expected states and observed states of the nodes. A dynamic programming algorithm was proposed to solve the problem for tree networks and a Steiner tree heuristic was used for general networks; 2) \cite{PraVreFal_12} proposed an algorithm called NETSLEUTH which ranks the nodes according to an eigen vector based metric under the SI model. The algorithm was designed based on the Minimum Description Length principle; 3) \cite{LokMezOht_14} proposed a dynamic message passing algorithm based on the mean field approximation of the maximum likelihood estimation (MLE) of the source.

In addition, there exist several other algorithms which tackled the problem under the assumption that a subset of the infection timestamps are known: 1) \cite{PinThiVet_12} solved the MLE problem with partial timestamps for tree networks and extended the algorithm to general networks using a BFS tree heuristic; 2) \cite{ZhuCheYin_15} proposed two rank based algorithms using a modified BFS tree heuristic for general graphs; 3) \cite{AgaLu_13} proposed a simulation based Monte Carlo algorithm which utilizes the states of the sparsely placed observers within a fixed time window; 4) \cite{ZejGomSin_13} obtained sufficient conditions on the number of timestamps needed to locate the source correctly under the deterministic slotted SI models.

Our paper establishes possibility and impossibility results of SFT beyond tree networks, which differs it from the existing work mentioned above. The rest of the paper is organized as follows. In Section \ref{sec:model}, we first introduce the IC model and formulate a MAP problem for information source detection and SFT will be presented in Section \ref{sec:algorithm}. Section \ref{sec:mainresult} summarizes the main theoretical results of the paper including the analysis on both tree networks and the ER random graph.  The simulation based performance evaluation will be presented in Section \ref{sec:performance-evaluation}. All the proofs are provided in  the appendices.

\section{Model and Algorithm}\label{sec:model-algorithm-mainresult}

\subsection{Model}\label{sec:model}

Given an undirected graph $g,$ denote by $\edges(g)$ the set of edges in $g$ and denote by $\nodes(g)$ the set of nodes in $g.$ We consider the IC model \cite{GolLibMul_01} for information diffusion and assume a time-slotted system.  Each node has two possible states: active (or called infected) and inactive (or called susceptible). At time slot $0,$ all nodes are inactive except the source. At the beginning of each time slot, an active node attempts to activate its inactive neighbors. If an attempt is successful, the corresponding node becomes active at next time slot; otherwise, the node remains inactive. The weight of each edge represents the success probability of the attempt, called the \emph{infection probability} of the edge and each attempt is independent of others. Each active node only attempts to activate each of its inactive neighbors once. Denote by $q_{uv}$ the infection probability of edge $(u,v)$ and we assume $q_{uv}=q_{vu}$ throughout the paper since the graph is undirected. We assume that a complete snapshot $\obs=\{\infObs,\healthObs\}$ of the network at time $t$ (called the \emph{observation time}) is given,  where $\infObs$ is the set of active nodes and $\healthObs$ is the set of inactive nodes. Based on $\obs,$ we want to detect the source. We further assume the observation time $\obsTime$ is unknown. The problem can be formulated as a MAP problem as follows,
\[
\arg\max_{v\in\nodes(g)} \Pr(v|\obs).
\]
where $\Pr(v|\obs)$ is the probability that $v$ is the source given the snapshot $\obs.$ The infected nodes form a connected component under the IC model, called the \emph{infection subgraph} and denoted by $g_i.$ Since the source must be an infected node, the MAP problem can be simplified to
\[
\arg\max_{v\in\infObs} \Pr(v|\obs),
\]
and the search of the information source can be restricted to the infection subgraph. We assume the observation time $\obsTime,$ which itself is a random variable, is independent of the source node.

\subsection{The Short-Fat Tree Algorithm}\label{sec:algorithm}
In this section, we first present the SFT algorithm. We will show in Theorem \ref{thm:treepartialMAP} that the algorithm outputs the MAP estimator for tree networks, which motivates the algorithm. The performance on the ER random graph is studied in Theorems \ref{thm:sourceIsJordanER} and \ref{thm:approximateRatio}.

We first introduce several necessary definitions. Denote by $\dist^g_{uv}$ the distance from node $u$ to node $v$ in graph $g,$ where the distance is the minimum number of hops between two nodes.  Define the \emph{infection eccentricity} of an infected node to be the maximum distance from the node to all infected nodes on the infection subgraph $g_i,$ denote by $\ecce(v,\infObs),$
\[
\ecce(v,\infObs)=\max_{u\in \infObs}\dist^{g_i}_{uv}.
\]
Recall that the \emph{Jordan infection centers} of a graph are the nodes with the minimum infection eccentricity \cite{ZhuYin_14_2}.

Consider a BFS tree $T_v$ rooted at node $v$ on the infection subgraph $g_i.$ Denote by $\hbox{par}_v(u)$ the parent of node $u$ in $T_v.$
Define the set of \emph{boundary nodes} of $T_v$ to be
\[
\boundary(v,\infObs)=\{w\in \infObs|\dist^{T_v}_{vw}=\ecce(v,\infObs)\},
\]which are the set of active nodes furthest away from node $v$ in the infection subgraph.

The weighted boundary node degree (WBND) with respect to node $v$ is defined to be
\begin{align}
\sum_{(u,w)\in{{\cal F}'_v}} |\log (1-q_{uw})|,\label{eqn:boundaryDegreeCountNew}
\end{align}
where
\begin{align}
{\cal F}'_v=\{(u,w)|(u,w)\in\edges(g),w\neq \hbox{par}_v(u), u\in \boundary(v,\infObs)\}.\label{eqn:FvDefinition}
\end{align}

The SFT algorithm, presented in Algorithm \ref{alg:RI}, identifies the source based on the BFS trees on the infection subgraph.  The algorithm is called the \emph{Short-Fat Tree} algorithm because (1) it first identifies the \emph{shortest} BFS tree; and (2) the shortest BFS tree that maximizes the WBND is then selected in tie-breaking, which is usually the tree with a large number of leaf-nodes, i.e., a \emph{fat} tree. The pseudo codes of the algorithms are presented in Algorithm \ref{alg:RI} and \ref{alg:WBNDC}, which can be executed in a parallel fashion.

A simple example is presented in Figure \ref{fig:AlgorithmExample} to illustrate algorithm. Each node has a unique node ID. The red nodes are infected and the white nodes are healthy. For simplicity, we assume the weights of all edges equal to $|\log(0.5)|$. The vector next to each infected node records the distance from it to all infected nodes. Initially at Iteration 0, each infected node only knows the distance to itself. For example, $[0$ $*$ $*$ $*]$ next to node 1 means that the distance from node 1 to itself is 0 and the distance from node 1 to node 2 is unknown. At Iteration 1, each infected node broadcasts its ID to its neighbors in next iteration. Upon receiving the node ID from node 1, node 2 updates its vector to $[1$ $0$ $*$ $*],$ and broadcasts node 1's ID to its neighbors. The figure in the middle shows the updated vectors after all node ID exchanges occur at Iteration 1. At Iteration 2, node 1 and 2 do not receive any new node IDs. Therefore, node 1 and node 2 report themselves as the Jordan infection centers which are circled with blue in Figure \ref{fig:AlgorithmExample}. The boundary nodes of the BFS tree rooted at node 1 are 2,3,4. The WBND of node 1 is $13|\log(0.5)|$. Similarly, the boundary nodes of the BFS tree rooted at node 2 are 1,3,4 and the WBND is $9|\log(0.5)|.$ Therefore, node 1 has a larger WBND and is chosen to be our estimator of the information source.

\begin{figure*}
        \centering
		\includegraphics[width=0.75\textwidth]{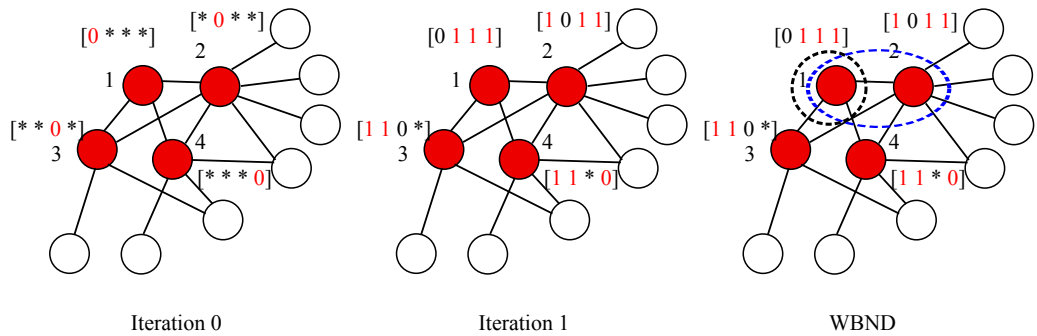}
        \caption{An example of the Short-Fat Tree algorithm}\label{fig:AlgorithmExample}
\end{figure*}

\begin{algorithm}
\SetAlgoLined

 \KwIn{$\infObs, g$;}
 \KwOut{$v^\dag$ (the estimator of information source)}
 Set subgraph $g_i$ to be a subgraph of $g$ induced by node set $\infObs.$

 \For {$v \in \infObs$}{
 Initialize an empty dictionary $D_v$ associating with node $v.$

 Set $D_v[v] = 0.$
 }
 Each node receives its own node ID at time slot $0.$

 Set time slot $t=1.$

\SetKwRepeat{doWhile}{do}{while}
 \doWhile{No node receives $|\infObs|$ distinct node IDs}{
	\For{$v\in \infObs$}{
		\If{$v$ received new node IDs in $t-1$ time slot, where ``new'' IDs means node $v$ did not receive them before time slot $t-1$}{
			$v$ broadcasts the new node IDs to its neighbors in $g_i$.
		}

	}
	\For{$v\in \infObs$}{
			\If{$v$ receives a new node ID $u$ which is not in $D_v.$}{
			Set $D_v[u]=t.$
			}
	}
	
	$t=t+1.$
 }

 Set ${\cal S}$ to be the set of nodes who receive $|{\cal I}|$ distinct node IDs.

 \For{$v\in {\cal S}$}
 {
	 Compute WBND of $\tree_v$ using Algorithm \ref{alg:WBNDC}.
 }

 \Return $v^\dag\in{\cal S}$ with the maximum WBND.
 \caption{The Short-Fat Tree Algorithm}\label{alg:RI}
\end{algorithm}

\begin{algorithm}
\SetAlgoLined

 \KwIn{$v$,$D_v$ (Dictionary of distance from $v$ to other nodes), $g$, $\infObs$, $t$;}
 \KwOut{WBND$(v)$ }

 Set ${\cal B}$ to be empty.

 \For {$u$ in the keys of $D_v$}
 {
  \If {$D_v[u]=t$}{ Add $u$ to ${\cal B}.$}
 }
 Set $x=0;$

 \For {$w\in{\cal B}$ }
 {
	 Find the neighbor $u$ of $w$ such that $D_v[u] = t-1.$
	
	 Set $x = x + \sum_{y\in \hbox{neighbors}(w)}|\log(1-q_{wy})|-|\log(1-q_{wu})|.$
 }

 \Return $x.$
 \caption{The WBND Algorithm}\label{alg:WBNDC}
\end{algorithm}

{\bf Remark:} Note Equation (\ref{eqn:boundaryDegreeCountNew}) requires the infection probabilities of all edges in the network which could be hard to obtain in practice. When the infection probabilities are not available, we can assume each edge has the same infection probability $q$ and WBND becomes,
\begin{align*}
\left(\sum_{u\in\boundary(v,\infObs)}\degree(u)-|\boundary(v,\infObs)|\right)|\log(1-q)|,
\end{align*}
where $\degree(u)$ is the degree of node $u.$

Define the boundary node degree (BND) of node $v$ to be
\begin{align}
\sum_{u\in\boundary(v,\infObs)}\degree(u)-|\boundary(v,\infObs)|\label{eqn:boundaryNodeDegreeCountHomo}
\end{align}
which is only related to the degree of the boundary nodes and can be used to replace WBND as the tie-breaking among the Jordan infection center in SFT when the infection probabilities are unknown. As shown in Section \ref{sec:performance-evaluation}, the performance using BND and WBND are similar. To differentiate the two algorithms, we call the algorithm which uses WBND as wSFT and the one which uses BND as SFT. Next, we analyze the complexity of the algorithm.
\begin{thm}\label{thm:RunningTime}
The worst case computational complexity of the SFT algorithm is $O(|\infObs|\degree(\infObs))$ where $\degree(\infObs)$ is the total degree of nodes in $\infObs$ in graph $g.$
\end{thm}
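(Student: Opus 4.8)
The plan is to bound the running time of the two phases of Algorithm~\ref{alg:RI} separately: the distributed breadth-first flooding that fills in every dictionary $D_v$, and the WBND evaluation over the candidate set ${\cal S}$ carried out by Algorithm~\ref{alg:WBNDC}. First I would observe that the flooding phase is nothing but $|\infObs|$ simultaneous BFS computations on the infection subgraph $g_i$, one rooted at each infected node's ID, since a node $v$ records $D_v[u]=\dist^{g_i}_{uv}$ exactly when the ID of $u$ first reaches it. The key accounting step is that each distinct ID triggers at most one broadcast from $v$---sending it to all $\degree_{g_i}(v)$ neighbors---and is received by $v$ at most $\degree_{g_i}(v)$ times, so each ID contributes $O(\degree_{g_i}(v))$ work at $v$, and the total work at $v$ across all $|\infObs|$ IDs is $O(|\infObs|\,\degree_{g_i}(v))$. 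Summing over $v\in\infObs$ and using $\sum_{v\in\infObs}\degree_{g_i}(v)=2|\edges(g_i)|\le\degree(\infObs)$ (the total degree in $g_i$ never exceeds the total degree in $g$) gives $O(|\infObs|\,\degree(\infObs))$ for the flooding phase.

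Second I would bound Algorithm~\ref{alg:WBNDC}. Each invocation scans the $|\infObs|$ keys of $D_v$ to assemble the boundary set $\boundary(v,\infObs)$ in $O(|\infObs|)$ time, and then, for each boundary node $w$, inspects the neighbors of $w$ in $g$ both to locate its parent and to accumulate $\sum_{y}|\log(1-q_{wy})|$; this costs $O(\degree(w))$ per boundary node, hence $O(\sum_{w\in\boundary(v,\infObs)}\degree(w))=O(\degree(\infObs))$ overall, since $\boundary(v,\infObs)\subseteq\infObs$. Because the candidate set ${\cal S}$ of nodes that have received all $|\infObs|$ IDs is a subset of $\infObs$, there are at most $|\infObs|$ such invocations, giving $O(|\infObs|\,\degree(\infObs))$ for the WBND phase as well.

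Finally I would account for the loop overhead and combine. The outer do-while loop runs for a number of iterations equal to the radius of $g_i$ (the minimum infection eccentricity $\min_v \ecce(v,\infObs)$), which is at most $|\infObs|$, and each iteration pays $O(|\infObs|)$ just to scan $\infObs$ in the two inner for-loops; this $O(|\infObs|^2)$ bookkeeping is absorbed because $g_i$ is connected, so every infected node has degree at least one and $\degree(\infObs)\ge|\infObs|$. Adding the three contributions yields the claimed $O(|\infObs|\,\degree(\infObs))$.

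The step I expect to require the most care is the message-counting in the flooding phase: one must argue that an edge of $g_i$ transmits each fixed ID only a bounded number of times, so that no ID is re-broadcast indefinitely, which relies on the ``new ID'' guard in the inner loop, and then convert the resulting bound, stated in terms of $|\edges(g_i)|$ and degrees within $g_i$, into the degree $\degree(\infObs)$ measured in the full graph $g$ as the theorem demands. The WBND bound similarly hinges on remembering that its degree sum is taken over edges of $g$, not $g_i$, which is precisely why the final complexity is expressed through $\degree(\infObs)$ rather than $|\edges(g_i)|$.
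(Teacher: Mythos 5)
Your proposal is correct and follows essentially the same decomposition as the paper's proof: bound the ID-flooding phase by $O(|\infObs|\,|\edges(g_i)|)$, bound each call of Algorithm~\ref{alg:WBNDC} by $O(\degree(\infObs))$ with at most $|\infObs|$ calls, and convert via $|\edges(g_i)|\leq \degree(\infObs)$. The only difference is that you derive the flooding bound from a first-principles message count (and handle the loop overhead explicitly), whereas the paper delegates that step to the argument in \cite{ZhuYin_14_2}; your version is self-contained but not a different route.
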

The detailed proof can be found in Appendix \ref{proof:RunningTime}.

\section{Main Results}\label{sec:mainresult}
In this section, we summarize the main results of the paper and present the intuitions of the proofs.
\subsection{Main Result 1 (The MAP Estimator on Tree Networks)}
On tree networks, the Jordan infection center of the infection subgraph with the maximum WBND is a MAP estimator.

\begin{thm}\label{thm:treepartialMAP}
Consider a tree network. Assume the following conditions hold.
\begin{itemize}
\item The probability distribution of the observation time satisfies $\Pr(\obsTime)\geq \Pr(\obsTime+1)$ for all $\obsTime.$
\item The source is uniformly and randomly selected, i.e., $\Pr(u)=\Pr(v).$
\end{itemize}
Denote by ${\cal J}$ the set of Jordan infection centers of the infection subgraph $g_i$. We have
\begin{align}
\arg\max_{u\in{\cal J}} \sum_{(v,w)\in{{\cal F}'_u}}|\log(1-q_{vw})|\subset \arg\max_u\Pr(u|\obs).\label{eqn:boundaryDegreeCount}
\end{align}
where
${\cal F}'_u$ is defined in Equation (\ref{eqn:FvDefinition}).
\end{thm}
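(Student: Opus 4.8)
The plan is to prove this via a two-level maximum-a-posteriori decomposition. First I would write $\Pr(u \mid \obs) \propto \Pr(\obs \mid u)\Pr(u)$, and since the source is uniform, this reduces to maximizing the likelihood $\Pr(\obs \mid u) = \sum_{\obsTime} \Pr(\obs \mid u, \obsTime)\Pr(\obsTime)$. The key structural fact on a tree is that, given a candidate source $u$ and observation time $\obsTime$, the IC diffusion on a tree has a unique propagation path to each node (the tree path), so $\Pr(\obs \mid u, \obsTime)$ factors cleanly. In particular, for the snapshot to be exactly $\obs$, every infected node must have been reached within $\obsTime$ hops along its tree path, and crucially every \emph{healthy} node adjacent to an infected node must have \emph{failed} to be infected across all its attempts. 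I would make explicit that the minimal observation time consistent with $u$ being the source is exactly the infection eccentricity $\ecce(u,\infObs)$, because the farthest infected node must have been reached.

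\medskip

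The argument then splits into the \emph{short} step and the \emph{fat} step. For the short step, I would show that any node $u$ whose infection eccentricity exceeds the minimum (i.e., $u \notin \cal J$) is strictly dominated. The intuition is a likelihood-ratio / stochastic-domination comparison: using the hypothesis $\Pr(\obsTime)\geq \Pr(\obsTime+1)$, a larger required observation time shifts probability mass toward larger, less likely values of $\obsTime$, and additionally forces more "survival" (non-infection) events on the boundary, each contributing a factor strictly less than one. Thus the likelihood $\Pr(\obs \mid u)$ is maximized only among the Jordan infection centers, which justifies restricting to $u \in \cal J$. This is where the monotonicity assumption on $\Pr(\obsTime)$ is consumed.

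\medskip

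For the fat step — the tie-breaking among Jordan centers — I would fix two candidates $u, u' \in \cal J$ sharing the common eccentricity $\obsTime^* = \ecce(u,\infObs)$, and compute the ratio $\Pr(\obs \mid u)/\Pr(\obs \mid u')$. Since both share the same optimal $\obsTime^*$ and the infection subgraph is identical, almost all factors in the likelihood cancel. The surviving discrepancy comes from the boundary: a boundary node $w$ of $T_u$ is, by definition, reached at exactly time $\obsTime^*$, so it has had no chance to infect its own children within the horizon, meaning each edge $(w,y)$ with $y$ not the parent contributes a factor $(1-q_{wy})$ (a guaranteed survival event) to the likelihood under hypothesis $u$. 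Taking logarithms, the log-likelihood difference between $u$ and $u'$ equals exactly $\sum_{(w,y)\in{\cal F}'_u}\log(1-q_{wy}) - \sum_{(w,y)\in{\cal F}'_{u'}}\log(1-q_{wy})$, i.e., the negative of the WBND difference. Hence maximizing the likelihood among Jordan centers is equivalent to maximizing $\sum_{(v,w)\in{\cal F}'_u}|\log(1-q_{vw})|$, giving the claimed inclusion in (\ref{eqn:boundaryDegreeCount}).

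\medskip

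The main obstacle I anticipate is making the boundary bookkeeping in the fat step fully rigorous: I must carefully account for \emph{every} edge incident to the infection subgraph and verify that under two different roots $u,u'$ the only factors that fail to cancel are precisely those indexed by ${\cal F}'_u$ versus ${\cal F}'_{u'}$. In particular, interior infected nodes and their healthy neighbors contribute identical survival factors under both hypotheses (they are reached strictly before $\obsTime^*$ and thus attempt infection the same number of times regardless of the root), so the subtlety is confined to the boundary layer; proving this cancellation cleanly — and handling the case where the set of boundary nodes itself differs between the two rooted trees — is the delicate part. A secondary subtlety is confirming that on a tree the minimum observation time $\obsTime^*$ that maximizes the per-$\obsTime$ likelihood coincides with the eccentricity, which requires the monotonicity hypothesis to rule out larger $\obsTime$ dominating.
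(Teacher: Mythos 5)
Your overall architecture is the same as the paper's: decompose the posterior over the unknown observation time, use $\Pr(\obsTime)\geq\Pr(\obsTime+1)$ to argue a MAP estimator must be a Jordan infection center, then break the tie at the minimal feasible time $e^*=\ecce(u,\infObs)$ via the product-form likelihood on a tree. (The paper makes your ``short step'' rigorous through a neighboring-nodes lemma, proved by a live-edge coupling showing ${\cal K}(\obs,v,t+1)\subset{\cal K}(\obs,u,t)$, combined with the fact that infection eccentricity strictly decreases along the path to a Jordan center; your sketch would need exactly these ingredients, but the mechanism you describe is the right one.)

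However, your fat step inverts the sign in the very mechanism that makes the theorem true. Under hypothesis $u$ with observation time $e^*$, an edge $(w,y)\in{\cal F}'_u$ from a boundary node $w$ of $T_u$ to a non-parent neighbor $y$ contributes a factor of $1$, \emph{not} $(1-q_{wy})$: as you yourself observe, $w$ becomes infected only at the last slot and never attempts to infect $y$ within the horizon, so no failure event is required and no survival factor is incurred. The $(1-q)$ factors attach to the complementary frontier edges, namely those whose healthy endpoint is within $e^*$ hops of $u$. Since the full frontier ${\cal F}=\{(v,w)\in\edges(\tilde{\tree})\,:\,v\in\infObs,\,w\in\healthObs\}$ is root-independent, the correct bookkeeping gives
\begin{align*}
\Pr(\obs|u,e^*) \;=\; \prod_{(v,w)\in{\cal B}}q_{vw}\prod_{(v,w)\in{\cal F}\setminus{\cal F}'_u}(1-q_{vw}) \;=\; \frac{C}{\prod_{(v,w)\in{\cal F}'_u}(1-q_{vw})},
\end{align*}
which is \emph{increasing} in the WBND of $u$. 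Your attribution instead makes $\Pr(\obs|u,e^*)$ proportional to $\prod_{(v,w)\in{\cal F}'_u}(1-q_{vw})$, i.e., decreasing in WBND: carried through, your stated log-likelihood difference $\sum_{{\cal F}'_u}\log(1-q_{wy})-\sum_{{\cal F}'_{u'}}\log(1-q_{wy})$ equals $\mathrm{WBND}(u')-\mathrm{WBND}(u)$, so your argument as written selects the Jordan center with \emph{minimum} WBND --- the opposite of the theorem. Your concluding sentence asserts the correct rule, but it contradicts your own intermediate claim; the sign flip is unjustified. A second, smaller gap: reducing the comparison of $\Pr(\obs|u)$ and $\Pr(\obs|u')$ to the single term $t=e^*$ requires noting that for every $t>e^*$ both conditional likelihoods equal the same constant $C$ (every frontier edge is then within reach of either root), which is how the paper disposes of the sum over $t$.
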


The detailed proof can be found in Appendix \ref{sec:TreeJordanIsMAP}. The theorem has been proved in two steps: 1) We show that one of the Jordan infection centers maximizes the posterior probability on tree networks following similar arguments in \cite{ZhuYin_14_2}. In particular, for two neighboring nodes, we show the one with smaller infection eccentricity has a larger posterior probability of being the source. Since there exists a path from any node to a Jordan infection center on the infection subgraph, along which the infection eccentricity strictly decreases, we conclude that a MAP estimator of the source must be a Jordan infection center; 2) Consider the case where the tree network has more than one (at most two according to \cite{Har_91}) Jordan infection centers. When the observation time is larger than the infection eccentricity of the Jordan infection center, the probability of having the observed infected subgraph from any Jordan infection center is the same. When the observation time equals the infection eccentricity, we prove that the probability for a Jordan infection center to be the source is an increasing function of WBND of the BFS tree starting from it.

\subsection{Main Result 2 (Detection with Probability One on the ER Random Graph)}
 Denote by $n$ the number of nodes in the ER random graph and $p$ the wiring probability of the ER random graph. Let $\mu=np.$ Recall that $t$ is the observation time. We show that the Jordan infection center is the actual source in the ER random graph with probability one asymptotically when $t<\frac{\log n}{(1+\alpha)\log \mu},$ which implies that SFT can locate the source w.p.1 asymptotically.
\begin{thm}\label{thm:sourceIsJordanER}
If the following conditions hold, source $\source$ is the only Jordan infection center on the infection subgraph with probability one asymptotically.
\begin{itemize}
\item $\mu> 3\log n.$
\item $\obsTime\leq \frac{\log n}{(1+\alpha)\log\mu},$ for some  $\alpha\in(\frac{1}{2},1).$
\end{itemize}
\end{thm}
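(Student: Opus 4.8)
The plan is to reduce the IC infection to a live-edge exploration and then show that the source $\source$ is the strict minimizer of the infection eccentricity over $\infObs$. First I would invoke the standard live-edge equivalence for the IC model: declaring each edge of $g$ ``live'' independently with probability $q$, the infected set $\infObs$ is exactly the ball of radius $\obsTime$ around $\source$ in the live-edge graph, and each node's infection time equals its live-distance from $\source$. The eccentricity, however, is measured in $g_i$, the subgraph of $g$ induced on $\infObs$, which also contains the non-live (``dead'') edges among infected nodes. Note that any edge whose endpoints differ in infection time by at least two is necessarily dead, so these dead edges are the only possible distance-reducing ``shortcuts.'' The entire difficulty is that $\obsTime$ may exceed the tree threshold $\frac{\log n}{2\log\mu}$ (which corresponds to $1+\alpha=2$), so $g_i$ genuinely contains cycles and these shortcuts must be controlled rather than excluded.

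Second, I would analyze the BFS exploration from $\source$ in the live-edge graph level by level, tracking the level sets $\levelset_\level$, the per-node offspring, and the number of collisions (re-encountered nodes) created at each level. Since $\obsTime\le\frac{\log n}{(1+\alpha)\log\mu}$ forces $|\infObs|\approx\mu^{\obsTime}\le n^{1/(1+\alpha)}$ with $1/(1+\alpha)<\tfrac{2}{3}$, a first/second-moment computation shows the expected number of collisions is negligible relative to the level sizes, so the levels grow geometrically, $|\levelset_\level|=\mu^{\level}(1+o(1))$. In particular $\source$ has $\Theta(\mu)$ live children (here $\mu>3\log n$ is used), and, treating each child-subtree as a supercritical branching process, all but a vanishing fraction of them reach depth $\obsTime$ with high probability.

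Third, I would pin down $\ecce(\source,\infObs)=\obsTime$. The bound $\ecce(\source,\infObs)\le\obsTime$ is immediate, since every $g_i$-distance is at most the corresponding live-distance, which is at most $\obsTime$. For the matching lower bound I would exhibit a depth-$\obsTime$ node whose ancestral corridor (its live-path back to $\source$) carries neither a dead shortcut edge nor a collision; because the number of such shortcut edges is of order $\mu^{2\obsTime+1}/n=O\!\left(\mu\, n^{(1-\alpha)/(1+\alpha)}\right)$, which is $o(\mu^{\obsTime})$, a positive fraction of the depth-$\obsTime$ nodes have clean corridors and therefore sit at $g_i$-distance exactly $\obsTime$ from $\source$.

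The crux is the final step: for every other infected node $v$ I must produce a witness $w$ with $\dist^{g_i}_{vw}\ge\obsTime+1$. I would take $w$ to be a depth-$\obsTime$ node lying in a child-subtree of $\source$ different from the one entered by the first edge of a shortest $\source$--$v$ path, and additionally choose $w$ so that its entire corridor avoids every shortcut reachable from $v$'s side within the distance budget; the sparsity of shortcuts established above leaves many admissible $w$. Any $v$--$w$ path is then forced either to route through $\source$, giving length $\dist^{g_i}_{\source v}+\obsTime\ge 1+\obsTime$, or to use a distance-reducing shortcut into $w$'s protected corridor, which by construction does not exist; hence $\ecce(v,\infObs)\ge\obsTime+1>\ecce(\source,\infObs)$. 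Making this selection valid \emph{simultaneously} for all $v\in\infObs$ through a union bound is the main obstacle, and it is precisely here that the hypothesis $\alpha>\tfrac12$ is consumed: it supplies enough slack between the number of candidate protected corridors ($\approx\mu^{\obsTime}$) and the number of shortcut edges to beat the union bound over all $\Theta(\mu^{\obsTime})$ infected nodes. Combining the three steps, $\source$ is with probability one asymptotically the unique minimizer of the infection eccentricity, hence the only Jordan infection center.
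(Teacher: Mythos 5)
Your overall strategy matches the paper's: pass to the live-edge model, run a level-by-level BFS analysis from $\source$ with geometric growth of level sizes, control the non-tree (``collision'') edges, and for each infected $v\neq\source$ exhibit a depth-$\obsTime$ witness in a child-subtree of $\source$ different from the one containing $v$, so that any short $v$--witness path is forced to use a shortcut. However, there is a genuine gap at exactly the step you call the crux. You bound the number of shortcut \emph{edges} (of order $\mu^{2\obsTime+1}/n$) and argue that this is $o(\mu^{\obsTime})$, hence sparse enough to leave admissible witnesses. But the quantity that must be compared against the $\approx[(1-\delta)^2\mu q]^{\obsTime-1}$ candidate witnesses is not the number of shortcut edges; it is the number of \emph{nodes} reachable from $v$ within $\obsTime$ hops along paths that use at least one shortcut. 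A single shortcut edge traversed with $r$ hops of budget remaining unlocks an entire ball of size roughly $\mu^{r}$, so the two counts differ by factors exponential in $\obsTime$. The paper's proof devotes its hardest lemma (Lemma \ref{prop:collisionNodesUpperBounds}, the ``collision-removed BFS tree'' analysis) precisely to this cascade accounting, obtaining $H\leq c[(1+\delta)\mu]^{\frac{3}{4}\obsTime+\frac{1}{2}}+c[(1+\delta)\mu]^{(\frac{5}{4}-\frac{\alpha}{2})\obsTime+2}$, and it is the comparison of the exponent $\frac{5}{4}-\frac{\alpha}{2}$ with $1$ that actually consumes the hypothesis $\alpha>\frac{1}{2}$. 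Your edge-count, by contrast, is already $o(\mu^{\obsTime})$ for every $\alpha>0$, which is a warning sign: if edge sparsity sufficed, the restriction $\alpha>\frac{1}{2}$ would never be needed, and Theorem \ref{thm:approximateRatio} shows the paper's high-probability events themselves hold for all $\alpha\in(0,1)$.

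Relatedly, you misplace where $\alpha>\frac{1}{2}$ enters: you attribute it to a union bound over the $\Theta(\mu^{\obsTime})$ infected nodes $v$, but no probabilistic union bound over $v$ is needed. Once the three events (offspring bounds $E_1$, collision-edge bounds $E_2$, infected-growth bounds $E_3$) hold, the comparison $H<Z^{\obsTime}_{\obsTime}(u)$ is deterministic and holds simultaneously for every $v$. Two smaller points: (i) your effort to pin down $\ecce(\source,\infObs)=\obsTime$ exactly is unnecessary, since only $\ecce(\source,\infObs)\leq\obsTime$ is used; (ii) protecting the witness's entire corridor is also unnecessary and actually makes your count worse (a contaminated ancestor at level $j$ contaminates all of its roughly $\mu^{\obsTime-j}$ depth-$\obsTime$ descendants), whereas protecting the witness $w$ alone suffices: any $v$--$w$ path of length at most $\obsTime$ either uses only tree edges (and then has length at least $\obsTime+1$ through $\source$) or uses a shortcut, in which case $w$ itself already lies in the shortcut-reachable set counted by $H$.
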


We present a brief overview of the proof and the details can be found in Appendix \ref{sec:ERJordanIsSource}. Note the infection eccentricity of the actual source is no larger than the observation time $t.$ We show in the proof that the infection eccentricity of an infected node other than the source is larger than $t.$ Consider the BFS tree $\tree^\dag$ rooted at the actual source $\source.$ A node is said to be on level $i$ if its distance to the source is $i.$ Consider another infected node $\source'.$ Denote by $a(\source')$ the ancestor of $\source'$ on level $1$ of $\tree^\dag.$ As shown in Figure \ref{fig:onebyoneexploration}, the yellow area shows the level $t$ infected nodes on subtree $T_u^{-s},$ which is the subtree of $T^\dag$ rooted at node $u,$ and the distance from $\source'$ to a node in the yellow area is larger than $t$ if any path between the two nodes can only traverse the edges on tree $\tree^\dag$. If $\source'$ has an infection eccentricity no larger than $t,$ there must exist a path from $\source'$ to each node in the yellow area with length no larger than $t.$ Such a path must contain edges that are not in $\tree^\dag$ (we call these edges \emph{collision edges}). We show in the proof that the number of nodes that are within $t$ hops from $\source'$ via collision edges are strictly less than the number of nodes in the yellow area. Therefore, the infection eccentricity of $\source'$ must be larger than $t$, which implies that $\source$ is the only Jordan infection center.

Existing theoretical results in the literature on information source detection problems are only for tree networks. As shown in the proof of Theorem \ref{thm:sourceIsJordanER}, the infection subgraph of the ER random graph is not a tree when $t>\frac{\log n}{2\log\mu}.$ From the best of our knowledge, this result is the first one on non-tree networks.

\begin{figure}[h!]
  \centering
    \includegraphics[width=0.8\columnwidth]{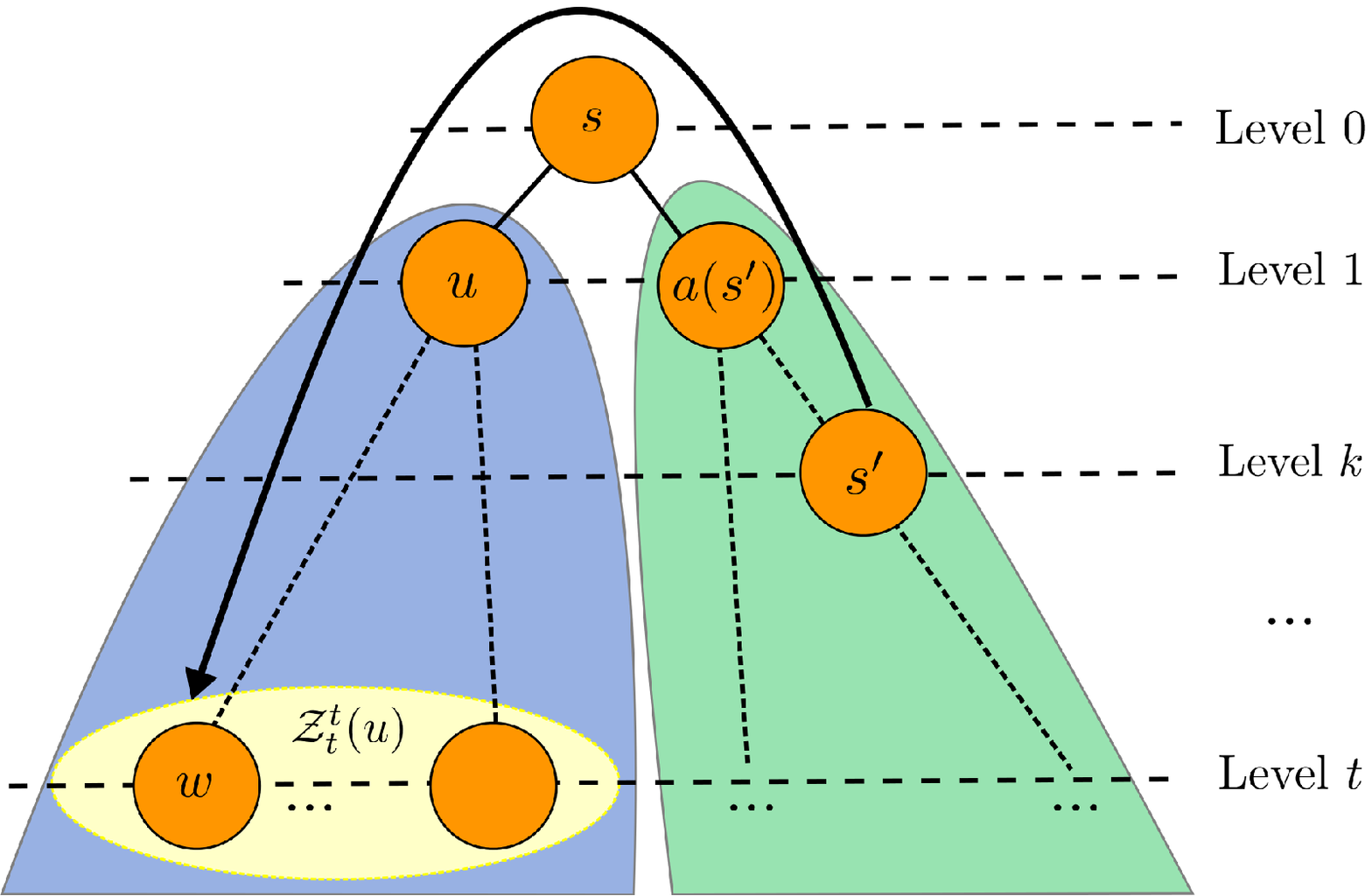}
  \caption{A pictorial example of ${\cal Z}^\obsTime_\obsTime(u)$ in BFS tree $\tree^\dag$}\label{fig:onebyoneexploration}
\end{figure}

\subsection{Main Result 3 (The Fat Tree Result on the ER Random Graph) }

\begin{thm}\label{thm:approximateRatio}
If the following conditions hold,
\begin{itemize}
\item $\mu> \frac{9}{\delta^2}\log n$.
\item $\obsTime\leq \frac{\log n}{(1+\alpha)\log\mu},$ for some $\alpha\in(0,1)$.
\end{itemize}
 the leaf-nodes of the BFS tree starting from the actual source consists of at least $1-\delta$ fraction of the BFS tree asymptotically.
\end{thm}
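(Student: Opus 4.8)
The plan is to reduce the statement to a claim about the sizes of the BFS levels and then to control those sizes by exposing the edges of the ER graph layer by layer. Root the BFS tree $\tree^\dag$ at the source $\source$, and for $0\le i\le D$ let $L_i$ be the number of infected nodes at distance $i$ from $\source$ in the infection subgraph $g_i$, where $D=\ecce(\source,\infObs)\le\obsTime$ is the infection eccentricity (the deepest level index); write $N=\sum_{i\le D}L_i$. The key structural observation is that every node at the deepest level is automatically a leaf (it has no level-$(D{+}1)$ children), so the set of internal nodes is contained in $\bigcup_{i<D}\levelset_i$ and the number of leaves is at least $N-\sum_{i<D}L_i=L_D$. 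Hence it suffices to prove that the deepest level alone already accounts for a $(1-\delta)$ fraction of the infected nodes, i.e. $\sum_{i<D}L_i\le\frac{\delta}{1-\delta}L_D$, which is equivalent to $L_D/N\ge 1-\delta$.

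To get this I will establish that the level sizes grow geometrically with high probability. Exposing the graph in the standard breadth-first fashion, condition on the infected set $S_i=\bigcup_{j\le i}\levelset_j$ revealed through level $i$; given the frontier $\levelset_i$ of size $L_i$, each not-yet-infected vertex joins level $i+1$ according to an independent event governed by the edge-presence probability $p$ and the infection probability. The number of such new vertices is a sum of independent indicators with conditional mean of order $\mu L_i$ (the infection probability contributes only a bounded multiplicative factor, absorbed into the constants), using that $|S_i|=o(n)$. The latter holds because $\obsTime\le\frac{\log n}{(1+\alpha)\log\mu}$ forces, via the matching upper bound on the growth, $N=O(\mu^{\obsTime})=O(n^{1/(1+\alpha)})=o(n)$, so the process is far from saturating the graph. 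A Chernoff bound then yields $L_{i+1}\ge(1-\delta)\,c_0\mu\,L_i$ for an absolute constant $c_0$, with conditional failure probability at most $\exp(-\Theta(\delta^2\mu L_i))\le\exp(-\Theta(\delta^2\mu))$ since $L_i\ge1$. This is exactly the step that consumes the hypothesis $\mu>\frac{9}{\delta^2}\log n$: it makes the exponent exceed $\log n$, so a union bound over the at most $\obsTime$ levels (together with the per-node and collision events inherited from the analysis of Theorem \ref{thm:sourceIsJordanER}, which range over $\mathrm{poly}(n)$ events) drives the total failure probability to zero.

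Given the per-level growth factor $c\ge(1-\delta)c_0\mu$, which exceeds $1/\delta$ for all large $n$ because $\mu>\frac{9}{\delta^2}\log n\gg\frac1\delta$, the reduction closes by an elementary geometric-series estimate: $L_i\le c^{-(D-i)}L_D$ gives $\sum_{i<D}L_i\le L_D\sum_{k\ge1}c^{-k}=\frac{L_D}{c-1}\le\frac{\delta}{1-\delta}L_D$, which is precisely the inequality required above. Therefore the deepest level holds at least a $(1-\delta)$ fraction of the infected nodes, all of them leaves, and the theorem follows.

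The main obstacle is the growth step carried out in the non-tree regime. Since $\alpha$ is only required to lie in $(0,1)$, the observation time may exceed $\frac{\log n}{2\log\mu}$, so the infection subgraph contains cycles (collision edges), BFS distance and infection time need not coincide, and the clean Galton–Watson approximation is unavailable. I will have to verify that the two sources of loss—edges of the frontier leading back into the already-infected set $S_i$, and collisions in which two frontier vertices reach the same new vertex—reduce the conditional mean of $L_{i+1}$ by only a $(1-o(1))$ factor rather than stalling the geometric growth. Controlling this requires the uniform bound $|S_i|=o(n)$ (so back-edges are negligible) and a lower-order estimate on double-counted vertices, arguments that parallel and can reuse the collision-edge bookkeeping already developed for Theorem \ref{thm:sourceIsJordanER}; crucially, because the target conclusion here is only the weaker ``fat tree'' property, it tolerates a larger collision budget and hence the relaxed range $\alpha\in(0,1)$.
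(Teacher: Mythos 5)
Your overall route is the same as the paper's, not a different one: the paper also reduces the statement to showing that the deepest BFS level dominates the node count (this is exactly its Lemma \ref{lem:infectedNodeRatio1}, $Z^t_t/\sum_{i=0}^t Z^{\leq t}_i\geq 1-\epsilon$), also obtains the growth by per-level Chernoff bounds whose union bound is paid for by $\mu>\frac{9}{\delta^2}\log n$ (events $E_1$, $E_3$), and also closes by observing that the machinery inherited from Theorem \ref{thm:sourceIsJordanER} only requires $\alpha\in(0,1)$. So the judgment comes down to whether your sketch closes the one step on which the paper spends most of its effort.

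There is a genuine gap there, and it is the conflation of BFS level with infection time. What the layer-by-layer exposure argument actually delivers is geometric growth of the \emph{diagonal} quantities $Z^i_i$ (nodes at level $i$ infected at time slot $i$): their tree offspring sets are disjoint, each offspring is infected with probability at least $q$, and Chernoff applies. It does not deliver $L_{i+1}\geq c\,L_i$ for the full level counts $L_i=\sum_{j\geq i}Z^j_i$, for two reasons. First, a level-$i$ node infected at time $t$ cannot infect anything within the observation window, so only $L_i-Z^t_i$ of your frontier expands. Second, and more importantly, your geometric-series step $L_i\leq c^{-(D-i)}L_D$ implicitly requires an \emph{upper} bound on the late arrivals $Z^j_i$ with $j>i$ at low levels --- nodes whose infecting path must use a collision edge --- and no lower-bound recursion on growth can produce that. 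This is not the loss you diagnose (back-edges and double counting depressing the conditional mean of the frontier expansion by a $(1-o(1))$ factor); it is a separate counting problem, and it is precisely what part B of the paper's Lemma \ref{lem:infectedNodeRatio1} does: it bounds $\sum_{i=0}^{m-1}Z^m_i$ by $\sum_{j=0}^{m}2\collisionsize_j[(1+\delta)\mu]^{m-j}$ and then runs a case analysis on $m$ against $m^-$ and $m^+$ using the collision budget $E_2$. Your instinct to ``reuse the collision-edge bookkeeping of Theorem \ref{thm:sourceIsJordanER}'' is the right one, but it must be applied to controlling these off-diagonal terms relative to $Z^t_t$, not to repairing the mean of the frontier expansion; as written, your recursion, and hence the geometric series and the final $1-\delta$ bound, does not go through without that step.
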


The detailed proof can be found in Appendix \ref{sec:approximateRate}. Consider the BFS tree from the source $\source$ in graph $g.$ The boundary nodes are the nodes at level $t$ and all boundary nodes must be infected at time $t.$ If we ignore the presence of collision edges, the number of infected nodes roughly increases by a factor of $q\mu$ at each level where $q=\min_{(u,v)\in \edges(g)}q_{uv}.$ Due to this exponential growth nature, the total number of infected nodes is dominated by those infected at the last time slot. We show this property holds with the presence of collision edges. Theorem \ref{thm:approximateRatio} suggests that the BFS tree rooted at the actual source is a ``fat" tree and the BND of the actual source is large. Hence, in the tie breaking, the SFT algorithm has a good chance to select the actual source, which suggests that BND is a good tie breaking rule for the ER random graph.

\subsection{Main Result 4 (The Impossibility Result on the ER Random Graph)}
We next present the threshold of $t$ after which it is impossible for any algorithm to find the actual source with a non-zero probability asymptotically.  The result is based on the analysis of the diameter of an ER random graph in Theorem 4.2 in \cite{DraMas_10}. For clarity purpose, we rephrase that theorem with our notation in the following lemma.
\begin{lemma}\label{thm:diameterER}
If $24\log n<np<<\sqrt{n},$ we have
\[
\lim_{n\rightarrow \infty} \Pr(\hbox{Diameter}(g)\leq D+2)=1,
\]
where $D=\lceil\frac{\log n}{\log np}\rceil.$
\end{lemma}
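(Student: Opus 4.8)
The plan is to prove Lemma~\ref{thm:diameterER} by invoking Theorem~4.2 of \cite{DraMas_10} directly and verifying that the hypotheses of that theorem are subsumed by the condition $24\log n < np \ll \sqrt{n}$ stated here. The essential idea is that this lemma is purely a \emph{restatement}, in the paper's own notation, of an established diameter concentration result for the ER random graph $G(n,p)$. First I would recall the precise form of the diameter result from \cite{DraMas_10}: for $np$ growing sufficiently fast relative to $\log n$ (their theorem requires something like $np \ge c\log n$ for a suitable constant, and an upper restriction $np \ll \sqrt{n}$ so that the graph is neither too sparse nor too dense), the diameter is concentrated on a narrow window around $\frac{\log n}{\log np}$. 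The key step is then to match the constant: I would confirm that taking the constant $c = 24$ in the lower bound $np > 24\log n$ (or whatever explicit constant appears in their statement) places us squarely inside the regime where their concentration bound gives diameter at most $D+2$ with probability tending to $1$, where $D = \lceil \frac{\log n}{\log np}\rceil$.

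The carry-out would proceed in three short moves. First, state the hypothesis $24\log n < np \ll \sqrt{n}$ and observe it implies $np\to\infty$ and $np = n^{o(1)}\cdot(\text{growing})$, so $\log np = (1+o(1))$ times the quantity appearing in the denominator of $D$; this ensures $D$ is a well-defined integer sequence. Second, apply Theorem~4.2 of \cite{DraMas_10} verbatim, which under these growth conditions asserts that the diameter lies in $\{D, D+1, D+2\}$ (or more precisely is at most $D+2$) with probability $1-o(1)$. Third, take the limit $n\to\infty$ to conclude
\[
\lim_{n\to\infty}\Pr(\hbox{Diameter}(g)\le D+2)=1,
\]
which is exactly the claimed statement.

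The only genuine obstacle is a bookkeeping one rather than a mathematical one: ensuring that the floor/ceiling conventions and the exact constant $24$ used here align with the precise inequalities in \cite{DraMas_10}. Their theorem may phrase the window using $\lfloor\cdot\rfloor$ or a slightly different additive slack, so I would check that ``$\le D+2$'' is implied by (and not stronger than) whatever upper tail they prove, and that the constant in their lower bound on $np$ is no larger than $24$. If their stated constant differed, I would either tighten the hypothesis or note that $24$ is a safe (possibly non-optimal) choice satisfying their requirement. Since the heavy probabilistic work—the second-moment or branching-process argument establishing diameter concentration—is already done in \cite{DraMas_10}, there is no new estimate to derive here; the lemma's content is entirely the translation of notation and the verification that the stated parameter regime falls within the cited theorem's scope.
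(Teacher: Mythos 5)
There is a genuine gap: your plan treats Lemma~\ref{thm:diameterER} as a verbatim restatement of Theorem~4.2 in \cite{DraMas_10}, but the hypotheses do not match, and the mismatch is not a bookkeeping issue. The cited theorem is stated under the asymptotic condition $\log n \ll (n-1)p \ll \sqrt{n}$, i.e., it requires $np/\log n \to \infty$. The lemma here, by contrast, allows $np$ to be of order $\log n$ (e.g., $np = 25\log n$), a regime the cited statement simply does not cover. So the step ``apply Theorem~4.2 verbatim'' fails: there is no constant in their hypothesis to match against, and no instantiation of an asymptotic condition $\log n \ll np$ yields the condition $24\log n < np$. Your two fallbacks do not rescue this. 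Tightening the hypothesis to $np \gg \log n$ changes the lemma's statement, and the explicit constant $24$ is precisely what the paper needs downstream in Theorem~\ref{thm:impossibility}, whose hypothesis is $24\log n < q\mu \ll \sqrt{n}$. And asserting that $24$ is ``a safe choice satisfying their requirement'' cannot be checked against the \emph{statement} of their theorem at all --- it can only be verified by reopening their \emph{proof} and tracking constants, which is exactly the work your proposal declares unnecessary.

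That constant-tracking is what the paper actually does, and the authors say so explicitly (``We explicitly calculated the lower bound according to the proof in \cite{DraMas_10}''). The appendix reproduces the neighborhood-growth argument with quantitative bounds: it defines the level sets $\Gamma_i(u)$, proves via Chernoff bounds that the events $E_i(u)=\{d_i^-\le d_i(u)\le d_i^+\}$ hold with probability at least $1-D'n^{-\gamma_l\epsilon^2/(2+\epsilon)}$ when $np\ge\gamma_l\log n$, then bounds $\Pr(d^g_{uv}>2D'+1)$ for each pair and applies a union bound over the $n^2$ pairs. Making that union bound vanish forces $\gamma_l>\max\left(\frac{2}{(1-\epsilon)^4},\frac{2(2+\epsilon)}{\epsilon^2}\right)$, and optimizing over $\epsilon$ (the minimum, roughly $23.35$, is attained near $\epsilon=0.459$) shows that $\gamma_l>24$ suffices. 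This optimization is the source of the constant $24$ in the lemma's hypothesis; the lemma is a quantitative refinement of the cited theorem, not a translation of it, and any correct proof must carry out (or at least audit) this explicit-constant version of the argument.
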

We remark that in \cite{DraMas_10}, the condition is $\log n<<(n-1)p<<\sqrt{n}.$ We explicitly calculated the lower bound according to the proof in \cite{DraMas_10}. For the sake of completeness, we present the proof in Appendix \ref{sec:proofImpossibility}.

Based on Lemma \ref{thm:diameterER}, we obtain the following impossibility result.
\begin{thm}\label{thm:impossibility}
If $24\log n<q\mu <<\sqrt{n}$ and $q>0$ is a constant,
\[
\lim_{n\rightarrow \infty}\Pr(\infObs=\nodes(g))=1
\]
when the observation time
\begin{align}
t\geq\left\lceil\frac{\log n}{\log \mu+\log q}\right\rceil+2 \triangleq t_u.\label{eqn:tRange}
\end{align}
In other words the entire network is infected. In such a case, asymptotically, the probability of any node being the source is $1/n.$
\end{thm}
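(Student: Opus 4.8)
The plan is to leverage Lemma~\ref{thm:diameterER} to argue that once the observation time exceeds the (slightly inflated) diameter of the graph, every node must have become infected, because the infection will have had enough time slots to reach the farthest node from the source. The key quantitative object is the effective infection diameter: under the IC model with every edge having infection probability at least $q$, the infection spreads along a BFS-like front, so a node at distance $\dist^g_{\source v}$ from the source $\source$ can be activated by time $\dist^g_{\source v}$, provided each of the attempts along a shortest path succeeds. First I would fix the source $\source$ and observe that a sufficient (though not necessary) condition for $v$ to be infected by time $\obsTime$ is that along some shortest $\source$-$v$ path, each successive infection attempt succeeds, which happens by time $\dist^g_{\source v}\leq \hbox{Diameter}(g)$. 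The subtlety is that this per-path argument only gives a nonzero probability, not probability one, so I would instead argue about the whole graph simultaneously.

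The core idea is to show that when $\obsTime\geq t_u$, the probability that \emph{some} node remains uninfected tends to zero. I would bound $\Pr(\healthObs\neq\emptyset)$ by a union bound over the $n$ nodes, where for each node $v$ I would upper-bound the probability that $v$ is still susceptible at time $\obsTime$. Here I would invoke the branching/exponential-growth structure already established in the analysis behind Theorem~\ref{thm:approximateRatio}: the number of infected nodes grows by roughly a factor of $q\mu$ per level, so after $D=\lceil\frac{\log n}{\log\mu+\log q}\rceil$ levels the infected front has size comparable to $n$, saturating the graph. Combined with Lemma~\ref{thm:diameterER}, which guarantees $\hbox{Diameter}(g)\leq D+2=t_u$ with probability one asymptotically, the two extra slots in the definition of $t_u$ absorb the diameter slack, so that by time $t_u$ the infection reaches every node. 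Concretely, conditioning on the high-probability event $\{\hbox{Diameter}(g)\leq t_u\}$, every node is within $t_u$ hops of $\source$, and the exponential-growth estimate shows each such node is infected with probability $1-o(1/n)$; a union bound then yields $\Pr(\infObs=\nodes(g))\to 1$.

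The last assertion --- that asymptotically the probability of any node being the source is $1/n$ --- follows immediately once $\infObs=\nodes(g)$ with probability one, because the snapshot $\obs$ then reveals nothing that distinguishes one node from another: every node is active, no node is inactive, and by the modeling assumption the source is chosen uniformly at random and the observation time $\obsTime$ is independent of the source. Hence the posterior $\Pr(v|\obs)$ coincides with the uniform prior $1/n$ for every $v$, and no algorithm can do better than a random guess.

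I expect the main obstacle to be making the per-node infection bound uniform enough to survive the union bound over all $n$ nodes. The naive shortest-path argument only certifies a single path and gives a probability bounded away from $1$, so the crux is to show that the number of (nearly) disjoint short paths from $\source$ to each $v$ is large enough --- equivalently, that the exponential expansion of the infected set makes the failure probability for any fixed node decay faster than $1/n$. Handling the dependence between infection events along overlapping paths, and verifying that the condition $24\log n<q\mu$ is exactly what is needed to push the per-node survival probability below $1/n$, is where the real work lies; the diameter bound of Lemma~\ref{thm:diameterER} then plays the complementary role of capping the distance so that time $t_u$ genuinely suffices.
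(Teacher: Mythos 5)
There is a genuine gap, and it is exactly where you say ``the real work lies'': your plan reduces the theorem to showing that each node's survival probability at time $t_u$ is $o(1/n)$ (uniformly, despite dependence among overlapping infection paths), but you never establish this bound --- you only name it as the obstacle. As written, the proposal is a program with its crux unfilled, and the route you sketch (diameter bound on $g$ plus exponential-growth estimates plus a union bound) would require substantial new probabilistic work that the paper never needs to do. There is also a concrete misstep in how you invoke Lemma~\ref{thm:diameterER}: applied to the ER graph $g$ itself (average degree $\mu$), it yields $\hbox{Diameter}(g)\leq\lceil\log n/\log\mu\rceil+2$, which is \emph{smaller} than $t_u$; the $\log q$ in the denominator of $t_u$ cannot arise from the diameter of $g$, so your claim that the lemma ``guarantees $\hbox{Diameter}(g)\leq D+2=t_u$'' conflates two different graphs, and the ``two extra slots absorb the diameter slack'' remark does not repair this.

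The idea you are missing is the paper's coupling. Flip all IC coins up front (the live-edge model already introduced in Appendix~\ref{sec:TreeJordanIsMAP}) and combine them with the ER edge generation: an edge is present in the combined graph iff it exists in $g$ (probability $p$) \emph{and} is live (probability $\geq q$). The combined live-edge graph is therefore itself an ER random graph with wiring probability $pq$, i.e., average degree $q\mu$, and the infection at time $\obsTime$ is \emph{exactly} the ball of radius $\obsTime$ around $\source$ in that graph --- infection over live edges is deterministic. Now Lemma~\ref{thm:diameterER} is applied to the combined graph: its hypothesis is precisely the theorem's condition $24\log n<q\mu\ll\sqrt{n}$, and its conclusion is precisely that the diameter is at most $\lceil\log n/(\log\mu+\log q)\rceil+2=t_u$ w.p.$1$ asymptotically. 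Hence $\obsTime\geq t_u$ implies $\infObs=\nodes(g)$ on a probability-$1-o(1)$ event, with no union bound, no per-node estimate, and no handling of path dependence at all; those difficulties are artifacts of decoupling the graph randomness from the infection randomness. Your final paragraph (uniform posterior once everything is infected) matches the paper and is fine.
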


The process to generate the ER random graph and the process of the information diffusion under the IC model can be viewed as a combined process. In this combined process, an edge exits only when the edge exists in the ER random graph and is live in the IC model. The detailed definition of the live edge could be found in Appendix \ref{sec:TreeJordanIsMAP}. Loosely speaking, an edge $(u,v)$ is said to be live if node $v$ is infected by node $u$ under the IC model. When the observation time is larger than or equal to the diameter of the coupled ER random graph, all nodes in the network are infected. In such a case, the probability of a node being the source is $1/n$ as the source was uniformly chosen. Based on Lemma \ref{thm:diameterER}, the diameter of the combine network is smaller than $\lceil\frac{\log n}{\log q+\log \mu}\rceil+2$ w.p.1 asymptotically.

{\bf Remark 1:} We compare $t_u$ in Equation (\ref{eqn:tRange}) and the upper bound in Theorem \ref{thm:sourceIsJordanER}. Since $q$ is a constant, the ratio between $t_u$ and the upper bound becomes $\frac{1}{1+\alpha}$ asymptotically. Since $\alpha$ can be arbitrarily close to $\frac{1}{2},$ the ratio becomes $\frac{2}{3}.$ Therefore, the Jordan infection center is the actual source when the observation time is in the range of $(0,\frac{2}{3}t_u)$ and it is impossible to locate the source when the observation time is $(t_u,\infty).$

{\bf Remark 2:} We compare $t_u$ and the upper bound in Theorem \ref{thm:approximateRatio} and asymptotically the ratio between $t_u$ and the upper bound becomes $\frac{1}{1+\alpha}$ where $\alpha \in (0,1).$ Since $\alpha$ can be arbitrarily close to $0$ and the ratio are close to $1$ which means the BFS tree from  the source has large BND before it becomes impossible to locate the source. While the theorem does not provide any guarantee on the detection rate, it justifies the tie-breaking using BND and WBND.

\section{Performance Evaluation}\label{sec:performance-evaluation}

In this section, we compare the proposed algorithms with existing algorithms on different networks such as tree networks, the ER random graphs and real world networks.

\subsection{Algorithms}
Among all the existing algorithms discussed in Section \ref{sec:introduction}, we choose the algorithms which require only a single snapshot of the network but not the infection probabilities which could be difficult to obtain in practice. We compared SFT and wSFT with the algorithms summarized as follows.
\begin{figure*}
        \centering
          \subfloat[Detection rate\label{fig:binomialDetectionRate}]{%
                           \includegraphics[width=0.3\textwidth]{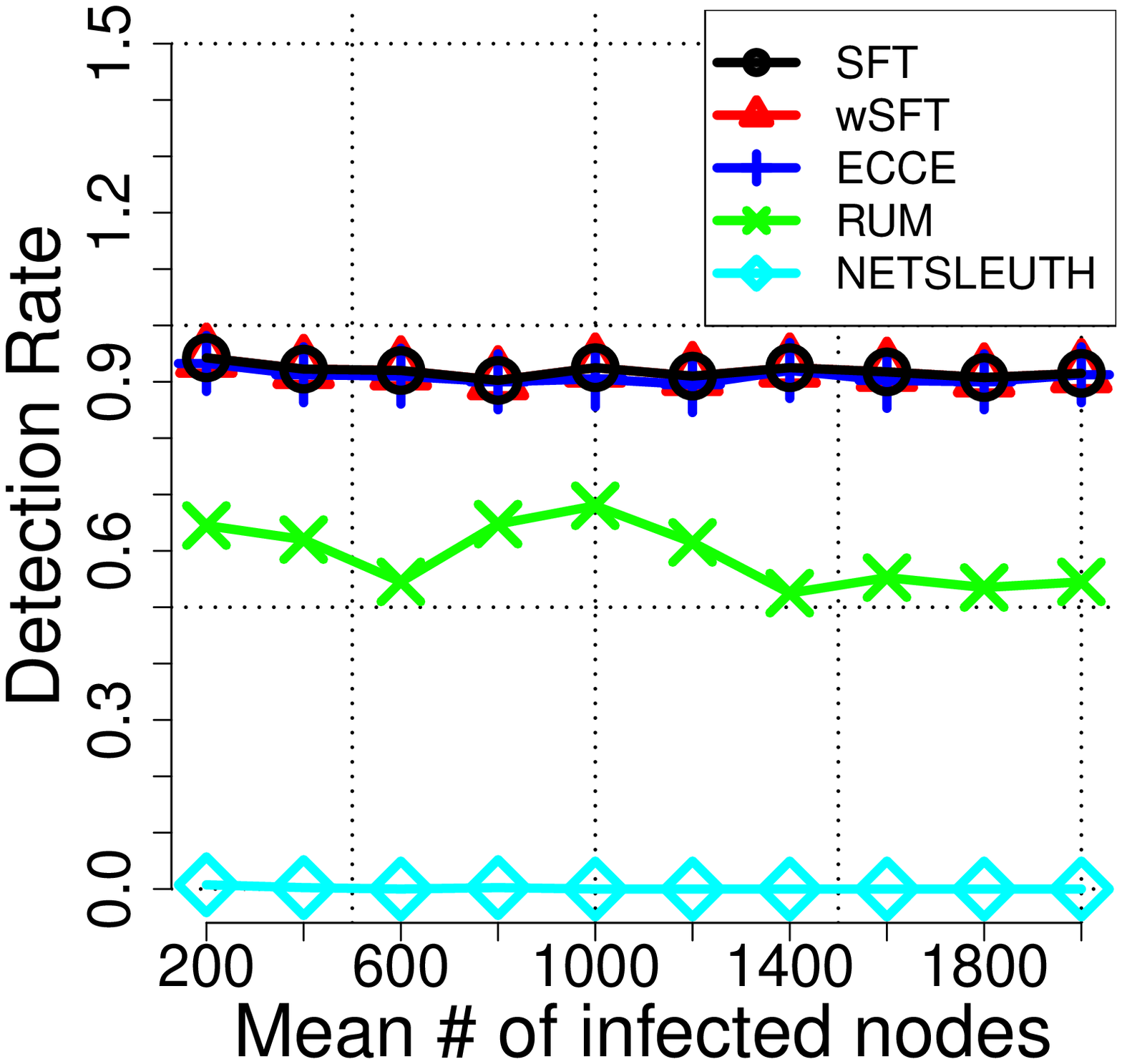} }
~
          \subfloat[Distance to the source\label{fig:binomialDist}]{%
                           \includegraphics[width=0.3\textwidth]{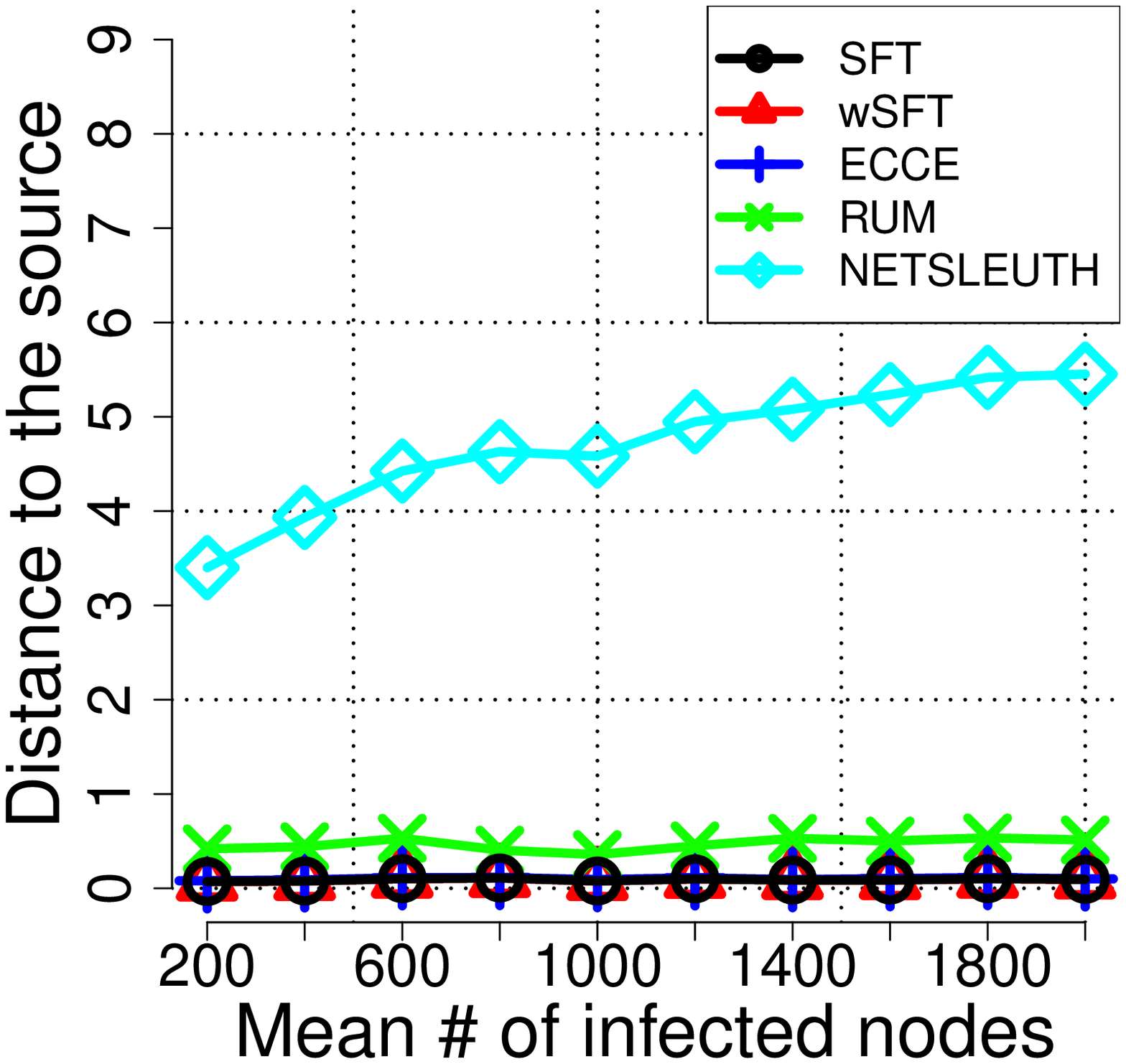}}
~        \subfloat[$\gamma\%$-accuracy\label{fig:binomialcdf}]{%
                         \includegraphics[width=0.3\textwidth]{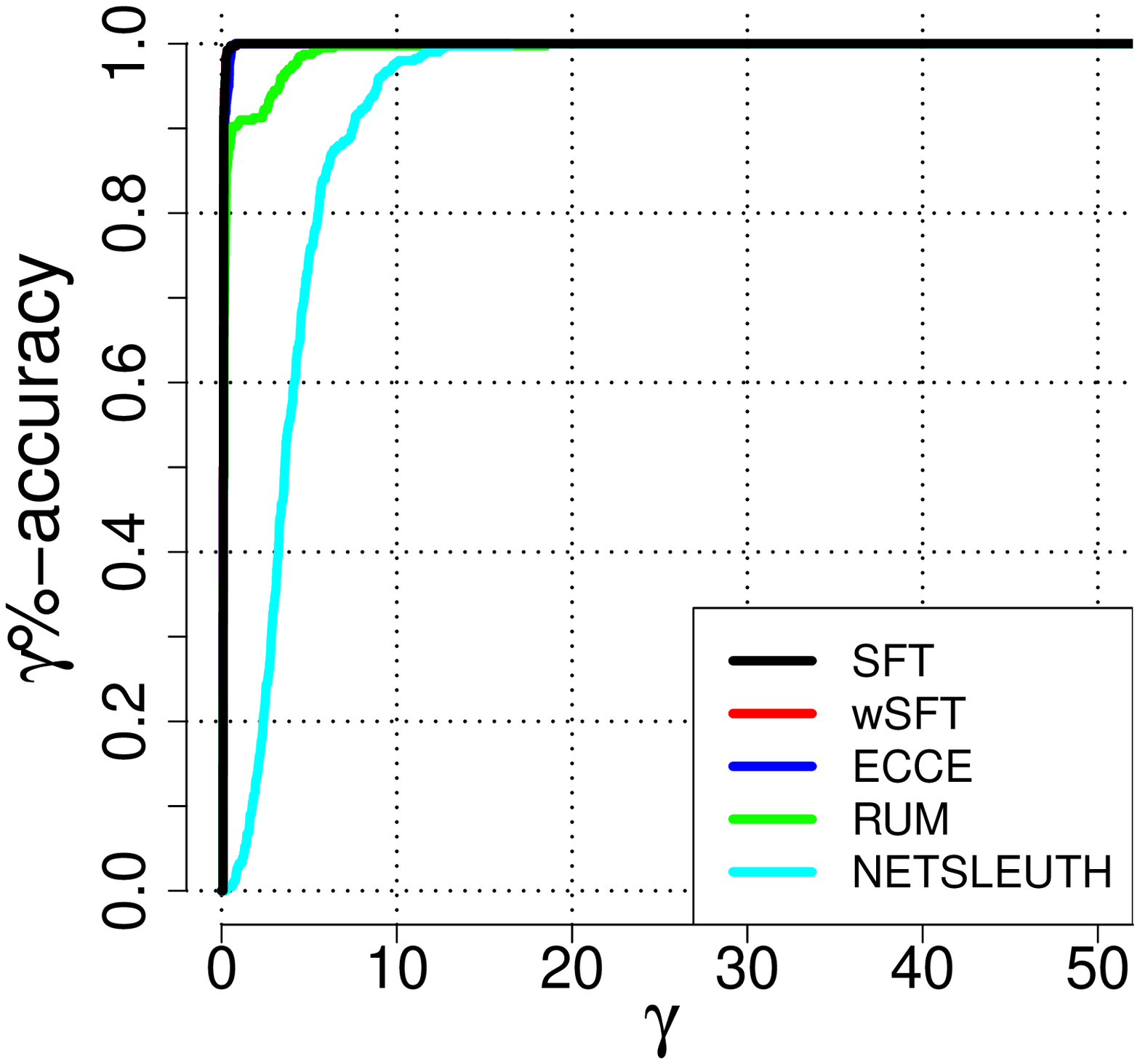}}
        \centering
        \caption{Performance in the binomial trees}\label{fig:binomial}
\end{figure*}

\begin{itemize}
\item {\bf ECCE:} Select the node with minimum infection eccentricity. Ties are breaking randomly. Recall the definition of the infection eccentricity is the maximum distance from the node to all infected nodes. \cite{ZhuYin_14_2} showed that the optimal sample path estimator on tree networks is the Jordan infection center of the graph under the SIR model.
\item {\bf RUM:} Select the node with maximum rumor centrality proposed in \cite{ShaZam_11}. The rumor centrality was proved to be the maximum likelihood estimator on regular trees under the continuous time SI model in which the infection time follows exponential distribution.
\item {\bf NETSLEUTH:} Select the node with maximum value in the eigenvector corresponding to the largest eigenvalue of a submatrix which is constructed from the infected nodes based on the graph Laplacian matrix. The algorithm was proposed in \cite{PraVreFal_12}.
\end{itemize}
Among the selected algorithms, only wSFT requires the infection probabilities. We included wSFT to evaluate the importance of the knowledge of edge weights to our algorithm.  We will see that the performance of SFT is almost identical to wSFT, so the infection probabilities are not important for our detection algorithm.

\subsection{Evaluation Metrics}
We evaluated the performance of the algorithms with three different metrics.
\begin{itemize}
\item Detection rate is the probability that the node identified by the algorithm is the actual source. A desired goal of the information source detection is to have a high detection rate.

\item Distance is the number of hops from the source estimator to the actual source. The distance is an often used metric for information source detection.
\item $\gamma\%$-accuracy is the probability with which the source is ranked among top $\gamma$ percent. Note that besides providing a source estimator, an information source algorithm can also be used to rank the infected nodes according to their likelihood to be the source. For example, SFT can rank the nodes in an ascendant order according to their infection eccentricity and then breaks ties using BND. Other algorithms can be used to rank nodes as well. $\gamma\%$-accuracy is a less ambitious alternation to the detection rate. When the detection rates of all algorithms are low, it is useful to compare $\gamma\%$-accuracy as a high $\gamma\%$-accuracy guarantee that the actual source is among the top ranked nodes with a high probability.
\end{itemize}

\begin{figure*}
        \centering
          \subfloat[Detection rate\label{fig:erDetectionRate}]{%
                           \includegraphics[width=0.3\textwidth]{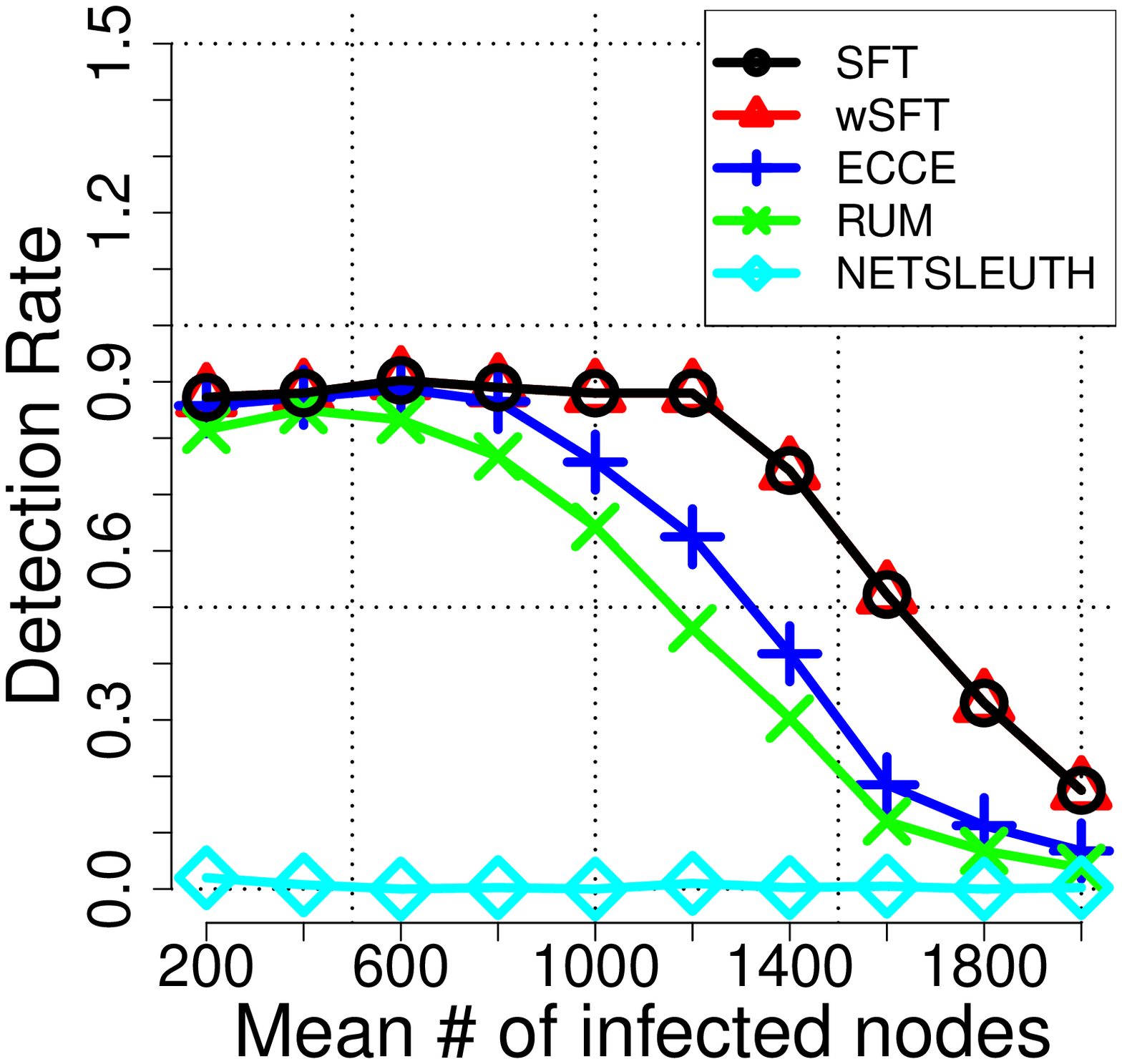} }
~
          \subfloat[Distance to the source\label{fig:erDist}]{%
                           \includegraphics[width=0.3\textwidth]{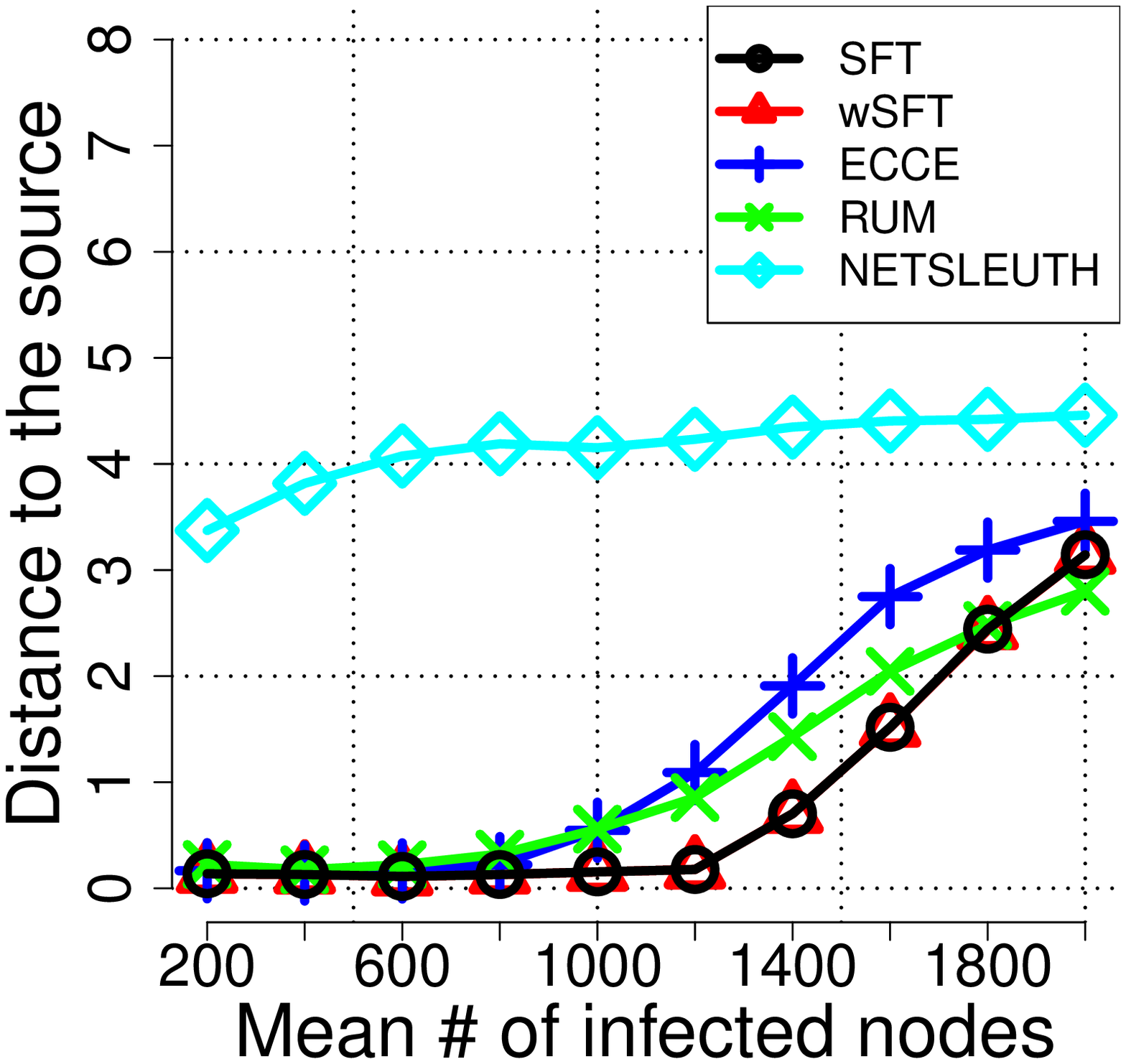}}
~        \subfloat[$\gamma\%$-accuracy\label{fig:ercdf1000}]{%
                         \includegraphics[width=0.3\textwidth]{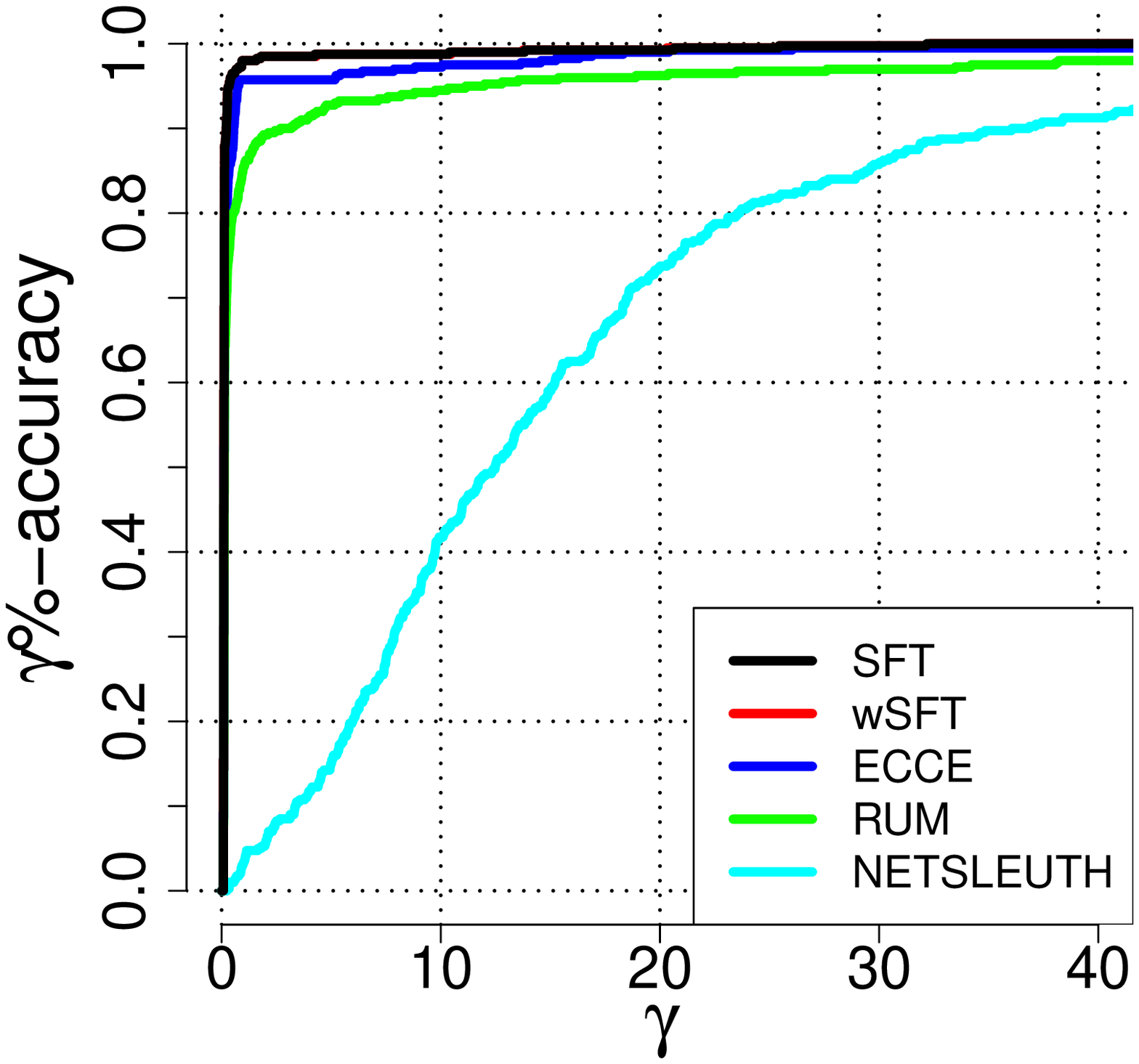}}
        \centering
        \caption{Performance in the ER random graph}\label{fig:er}
\end{figure*}

\subsection{Binomial Trees}
In this section, we evaluate the algorithms on binomial trees. Denote by $\hbox{Bi}(m,\beta)$ the binomial distribution with $m$ number of trials and each trial succeeds with probability $\beta.$ A binomial tree is a tree where the number of children of each node follows a binomial distribution $\hbox{Bi}(m,\beta).$ In the experiments, we set $m=20$ and $\beta=0.5.$ We adopted the IC model where the infection probability of each edge is assigned with a uniform distribution in $(0.2,0.5).$ The lower bound on the infection probability is set to be $0.2$ to prevent the diffusion process dies out quickly. We evaluated the performance for different infection size $x.$ Under a discrete infection model, it is hard to obtain the diffusion snapshots with exact $x$ infected nodes. Therefore, for each infection size $x,$ we generate the diffusion samples where the number of infected nodes are in range $[0.75x, 1.25x].$ The source was chosen uniformly at random among all nodes in the network. We varied $x$ from $200$ to $2000$ with a step size $200.$ For each infection size, we generate $400$ diffusion samples.

 Figure \ref{fig:binomialDetectionRate} shows the detection rates for different infection sizes. The detection rates of ECCE, SFT and wSFT do not change for different infection sizes since the structure of the binomial tree is simple. SFT, wSFT and ECCE have the highest detection rate (more than 0.9) while the detection rate of RUM and NETSLEUTH are much lower.

The distance results are shown in Figure \ref{fig:binomialDist}. As expected, SFT, wSFT and ECCE outperform RUM, which are all much better than NETSLEUTH.

Figure \ref{fig:binomialcdf} shows the $\gamma\%$-accuracy versus the rank percentage $\gamma.$ We picked infection size 1,000. As shown in Figure \ref{fig:binomialcdf}, all three algorithms based on infection eccentricity (ECCE, SFT, wSFT) have better performance than RUM and NETSLEUTH. Recall that the node identified by wSFT is a MAP estimator of the actual source.

\subsection{The ER random graph}\label{subsec:ERsimulation}
In this section, we compared the performance of the algorithms on the ER random graph. In the experiments, we generated the ER random graph with $n=5,000$ and wiring probability $p = 0.002.$ We again varied the infection network size from $200$ to $2,000$. The infection probability of each edge is assigned with a uniform distribution in $(0.2,0.5).$ We generated $400$ diffusion samples.

 Figure \ref{fig:erDetectionRate} shows the detection rate versus the infection size. The detection rate decreases as the infection size increases. SFT and wSFT have higher detection rates compared to other algorithms. Figure \ref{fig:erDist} shows the results on distance. As we expected, SFT and wSFT outperform other algorithms when the infection size is less than 1,600 nodes. As the size of the infected nodes increase, SFT and wSFT become close to RUM in term of distance to the source. However, the detection rate of both algorithms are still much higher than that of RUM.  Another observation is that SFT and wSFT have identical performance which indicates that the performance of SFT is robust to edge weights.

 Figure \ref{fig:ercdf1000} shows the $\gamma\%$-accuracy versus the rank percentage $\gamma$ with 1000 infected nodes. SFT and wSFT have similar or better performance compared to all other algorithms.

 Although the performance of ECCE and SFT algorithms are similar in tree networks, SFT outperforms ECCE significantly on the ER random graphs. The observation indicates that BND is an effective tie breaking rule and increases the detection accuracy.

 \begin{figure*}
         \centering
           \subfloat[Detection rate\label{fig:iasDetectionRate}]{%
                            \includegraphics[width=0.3\textwidth]{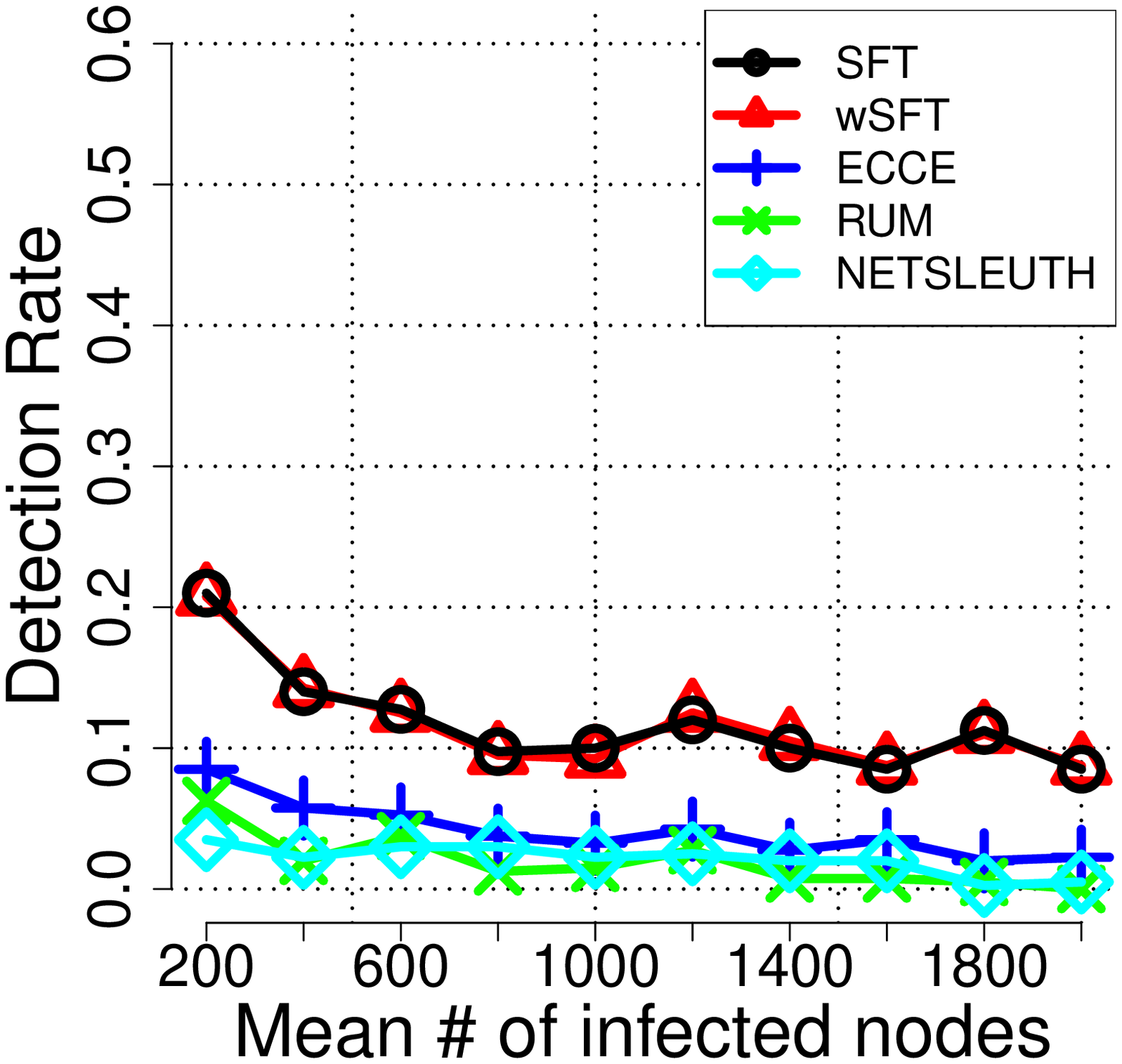} }
 ~		\subfloat[Distance to the source\label{fig:iasDist}]{%
                            \includegraphics[width=0.3\textwidth]{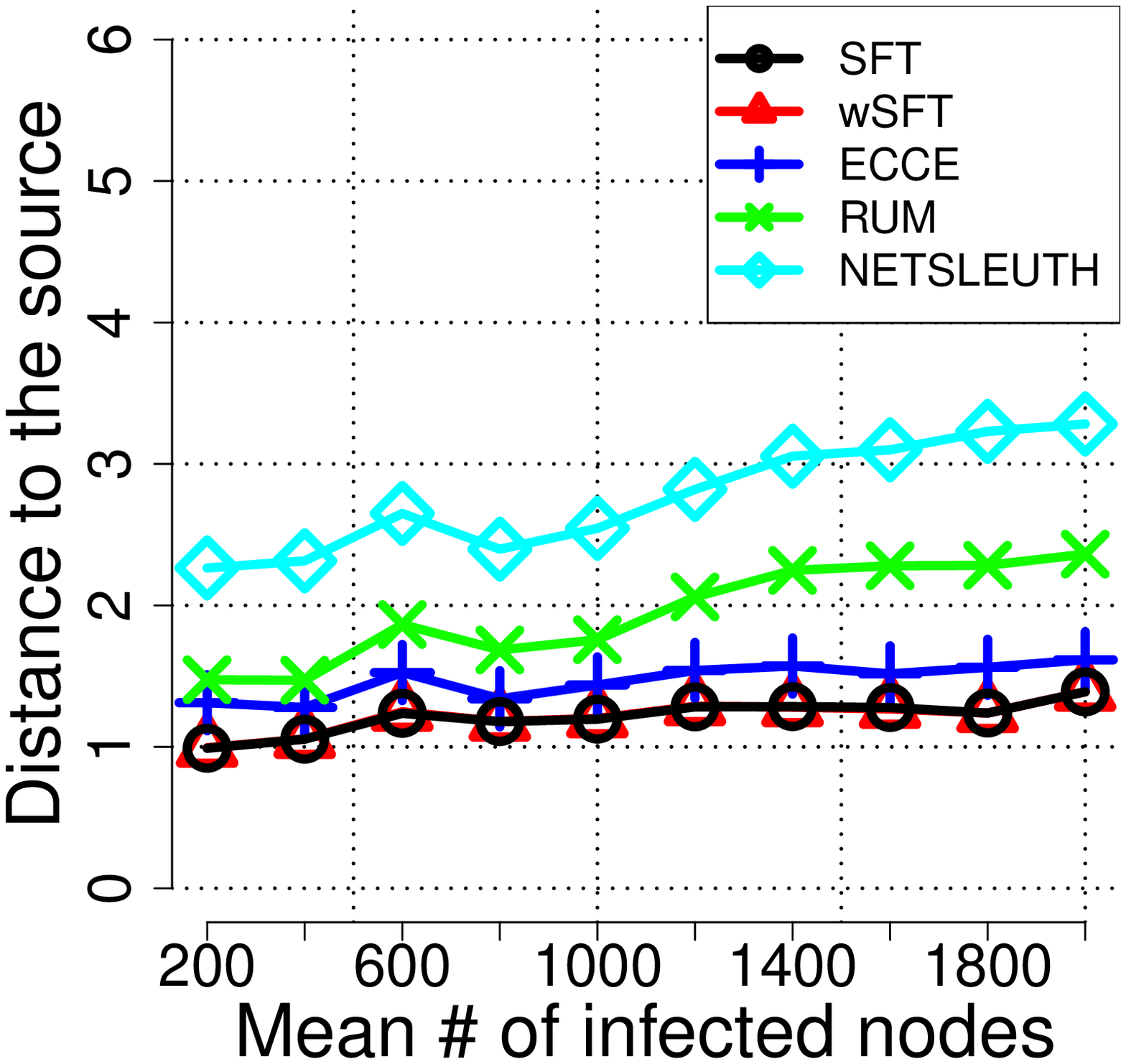}}
 ~			
 		\subfloat[$\gamma\%$-accuracy\label{fig:iascdf1000}]{%
                          \includegraphics[width=0.3\textwidth]{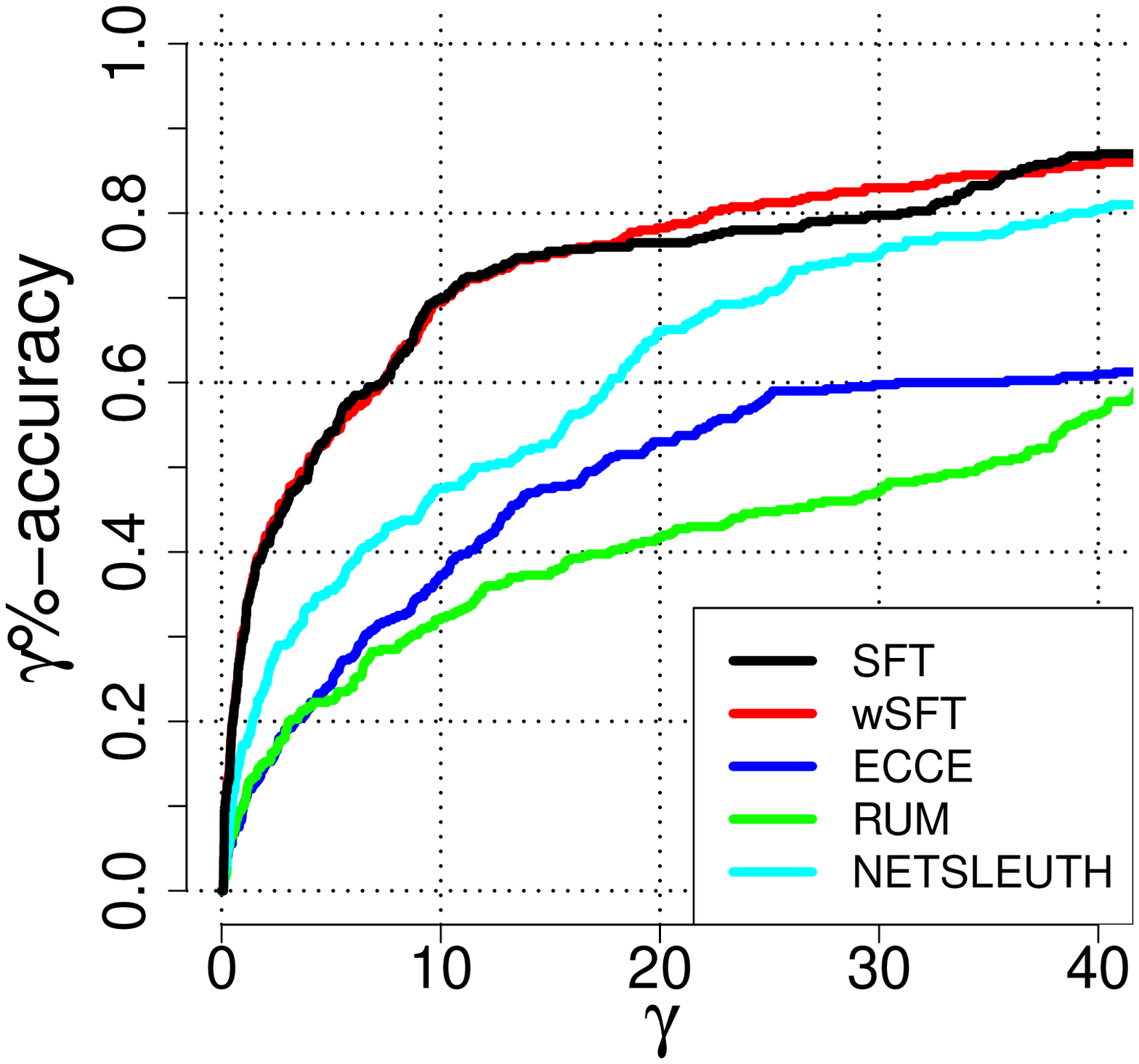}}
         \centering
         \caption{Performance in the IAS graph}\label{fig:ias}
 \end{figure*}
\subsection{The Internet Autonomous System Network }

The Internet autonomous systems (IAS) network \footnote{Available at
\url{http://snap.stanford.edu/data/index.html}} is the Internet autonomous system from Oregon route-views on March, 31st, 2001 with 10,670 nodes and 22,002 edges. The IAS network is a small world network. We adopted similar settings as in Section \ref{subsec:ERsimulation}.

The detection rates are shown in Figure \ref{fig:iasDetectionRate}. The detection rate of ECCE is low since the IAS graph is a small world network and there are multiple Jordan infection centers due to the small diameter of the network. With the tie breaking rule BND, the detection rate doubles in most cases which demonstrates the effectiveness of BND. While the detection rate of SFT is only $10\%$ when the infection size is 1,000, the distance to the actual source is slightly more than one-hop away as shown in Figure \ref{fig:iasDist}. In addition, the $\gamma\%$-accuracy versus $\gamma$ for 1,000 infection size is shown in Figure \ref{fig:ercdf1000}. The $10\%$-accuracies of SFT and wSFT are close to $70\%$ which are significantly higher than that of other algorithms.

\subsection{Running Time vs Performance}
In this section, we evaluated the scalability of the algorithms by comparing the running time. The experiments were conducted on an Intel Core i5-3210M CPU with four cores and 8G RAM with a Windows 7 Professional 64 bit system. All algorithms were implemented with python 2.7. The ER random graphs with 5,000 nodes and $p=0.002$ edge generation probability were used in the experiments. The infection probability of each edge is uniformly distributed over $(0.2,0.5).$ We generated 100 diffusion samples for the experiments. Figure \ref{fig:TimeVsDetection} show the average running time versus the detection rate. The infection size is chosen to be 1,000. SFT and wSFT took 1.11 seconds and achieves 0.87 detection rate while NETSLEUTH took 0.62 seconds with 0 detection rate and RUM took 14.86 seconds with 0.7 detection rate. The detection rate of SFT is much higher than NETSLEUTH and SFT is 14 times faster than RUM.

\section{Conclusions}
In this paper, we derived the MAP estimator of the information source on tree networks under the IC model. Based on that, the SFT algorithm has been proposed. We proved that the SFT algorithm identifies the information source with probability one asymptotically in the ER random graph when the observation time $t\leq \frac{2}{3}t_u,$ which is the first theoretical guarantee on non-tree networks to our best knowledge. We evaluated the performance of SFT on tree networks, the ER random graph and the IAS network.

\section*{Acknowledgement}
This work was supported in part by the U.S. Army Research Laboratory's Army Research Office (ARO Grant No. W911NF1310279).

\begin{figure}
        \centering
		\includegraphics[width=0.35\textwidth]{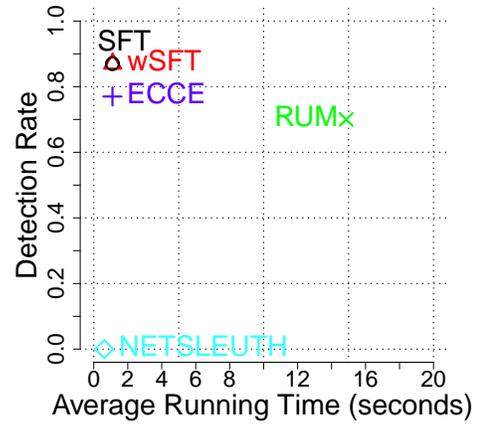}
        \caption{Detection rate versus running time in the ER random graph}\label{fig:TimeVsDetection}
\end{figure}

\bibliographystyle{ieeetr}
\bibliography{ERcitation}

\appendices

\section{Proof of Theorem \ref{thm:RunningTime}}\label{proof:RunningTime}
\begin{proof}
Follow the argument in \cite{ZhuYin_14_2}, the computational complexity of the node ID broadcasting phase of Algorithm \ref{alg:RI} is $O(|\nodes(g_i)||\edges(g_i)|)$ since the node IDs are passed on the subgraph $g_i.$ The complexity of Algorithm \ref{alg:WBNDC} is $O(\degree(\infObs))$ since the number of boundary nodes is bounded by $|{\cal I}|.$ In addition, Algorithms \ref{alg:WBNDC} are called at most $|{\cal S}|$ times in Algorithm \ref{alg:RI} and $|{\cal S}|\leq |\nodes(g_i)|.$ Therefore, the complexity of Algorithm \ref{alg:RI} is
\[
O(|\nodes(g_i)||\edges(g_i)|+|\nodes(g_i)|\degree(\infObs))
\]
Note $\nodes(g_i) = \infObs$ and $|\edges(g_i)|\leq \degree(\infObs).$ The complexity becomes
\[
O(|\infObs|\degree(\infObs)).
\]

\end{proof}
\section{Proof of Theorem \ref{thm:treepartialMAP}}\label{sec:TreeJordanIsMAP}

First, we prove the following lemma for neighboring nodes.

\begin{lemma}\label{cor:MAPneighbors}
{\bf Neighboring nodes inequality}
Consider nodes $u,v$ on tree $\tilde{\tree}$ satisfying the following conditions:
\begin{itemize}
\item  $(u,v)\in\edges(\tilde{\tree}).$
\item The observation time follows a distribution such that $\Pr(\obsTime)\geq \Pr(\obsTime+1)$ for all $\obsTime.$
\item The source is uniformly chosen among all nodes, i.e., $\Pr(u)=\Pr(v).$
\item $\ecce(v,\infObs)>\ecce(u,\infObs).$
\end{itemize}
We have
\[
\Pr(v|\obs)\leq \Pr(u|\obs)
\]
\end{lemma}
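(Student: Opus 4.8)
The plan is to turn the posterior comparison into a likelihood comparison and then evaluate the likelihood exactly using the tree structure. By Bayes' rule $\Pr(v|\obs)=\Pr(\obs|v)\Pr(v)/\Pr(\obs)$; since the prior is uniform ($\Pr(u)=\Pr(v)$) and $\Pr(\obs)$ is a common normalizer, the claim $\Pr(v|\obs)\le\Pr(u|\obs)$ is equivalent to $\Pr(\obs|v)\le\Pr(\obs|u)$. Because the observation time $\obsTime$ is random and independent of the source, I would expand $\Pr(\obs|x)=\sum_{\obsTime}\Pr(\obsTime)\,\Pr(\obs\mid x,\obsTime)$ for $x\in\{u,v\}$ and compare the two series.

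The main simplification is that on a tree there is a unique path from the source to each node and each edge is attempted exactly once, so a node at distance $d$ is infected at time exactly $d$, precisely when every edge on its path is live. Hence, for a fixed source $x$ and a time $\obsTime\ge\ecce(x,\infObs)$, the snapshot $\obs$ arises exactly when (a) every edge inside the infection subgraph $g_i$ is live, contributing $\prod_{e\in\edges(g_i)}q_e$, a factor that does not depend on which of $u,v$ is the source, and (b) every boundary edge whose infected endpoint lies within distance $\obsTime-1$ of $x$ fails, contributing a product of $(1-q_{ab})$ terms; a boundary edge whose infected endpoint sits at distance exactly $\obsTime$ carries no factor, since that endpoint was infected only at the observation instant and has not yet attempted its healthy neighbor. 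For $\obsTime<\ecce(x,\infObs)$ the likelihood is $0$. Consequently $\Pr(\obs\mid x,\obsTime)$ equals a source-independent constant $P:=\prod_{e\in\edges(g_i)}q_e\prod_{(a,b)}(1-q_{ab})$ (product over all boundary edges) for every $\obsTime>\ecce(x,\infObs)$, while only the single borderline time $\obsTime=\ecce(x,\infObs)$ gives a larger value, namely $P$ divided by $A_x$, where $A_x$ is the product of the $(1-q)$ factors attached to the boundary edges incident to the $\ecce(x,\infObs)$-distant (i.e.\ $T_x$-boundary) infected nodes.

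The hard part is controlling these two borderline terms. First I would use that eccentricity is $1$-Lipschitz on a tree, so the strict inequality $\ecce(v,\infObs)>\ecce(u,\infObs)$ between neighbors forces $\ecce(v,\infObs)=\ecce(u,\infObs)+1$; this makes the two tail series cancel and reduces everything to the times $\ecce(u,\infObs)$ and $\ecce(u,\infObs)+1$. The crucial geometric claim is that the boundary factors removed for $v$ form a subset of those removed for $u$, i.e.\ $A_u\le A_v\le1$. To prove it I would delete the edge $(u,v)$, splitting $\tilde{\tree}$ into the part $C_u$ containing $u$ and $C_v$ containing $v$: a node $a$ with $\dist^{\tilde{\tree}}_{va}=\ecce(u,\infObs)+1$ cannot lie in $C_v$, for then $\dist^{\tilde{\tree}}_{ua}=\ecce(u,\infObs)+2$ would exceed $\ecce(u,\infObs)$, which is impossible; hence $a\in C_u$ and $\dist^{\tilde{\tree}}_{ua}=\ecce(u,\infObs)$, so $a$ is also a $T_u$-boundary node and its boundary edges are removed for $u$ as well.

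Finally, assembling the pieces gives $\Pr(\obs|u)-\Pr(\obs|v)=P\big(\tfrac{\Pr(\ecce(u,\infObs))}{A_u}-\tfrac{\Pr(\ecce(v,\infObs))}{A_v}+\Pr(\ecce(v,\infObs))\big)$, and I would finish with $A_u\le A_v\le1$ together with the monotone-prior hypothesis $\Pr(\ecce(u,\infObs))\ge\Pr(\ecce(u,\infObs)+1)=\Pr(\ecce(v,\infObs))$: these give $\tfrac{\Pr(\ecce(u,\infObs))}{A_u}\ge\tfrac{\Pr(\ecce(v,\infObs))}{A_v}$, so the bracket is nonnegative and $\Pr(\obs|u)\ge\Pr(\obs|v)$. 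The empty-boundary case ($\healthObs=\emptyset$, $A_u=A_v=1$) falls out of the same inequality. I expect the only delicate point to be the bookkeeping of which boundary edges carry a $(1-q_{ab})$ factor at the borderline time, and hence the subset relation $A_u\le A_v$; the rest is the Lipschitz property of the eccentricity and the monotone prior.
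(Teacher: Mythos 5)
Your proof is correct, but it follows a genuinely different route from the paper's. The paper never evaluates the likelihood in closed form inside this lemma: it works in the live-edge model and proves the time-shifted inequality $\Pr(\obs|v,\obsTime_v=t+1)\le\Pr(\obs|u,\obsTime_u=t)$ by a set-inclusion argument, ${\cal K}(\obs,v,t+1)\subset{\cal K}(\obs,u,t)$ --- every live-edge graph on which $\obs$ is feasible for source $v$ at time $t+1$ is also feasible for $u$ at time $t$, verified by splitting $\tilde{\tree}$ at the edge $(u,v)$ and checking infected/healthy nodes on each side --- and then sums over $\obsTime$ against the monotone prior exactly as you do in your last step. You instead compute $\Pr(\obs|x,\obsTime)$ exactly, as $\prod_{e\in\edges(g_i)}q_e$ times $\prod(1-q_{ab})$ over frontier edges whose infected endpoint is within $\obsTime-1$ hops of $x$; this is precisely the paper's Equation (\ref{eqn:AP}), which the paper introduces only later, for the WBND tie-breaking half of Theorem \ref{thm:treepartialMAP}. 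With it, the lemma collapses to the two borderline observation times, and the only geometric input is your subset claim that every infected node at distance $\ecce(v,\infObs)$ from $v$ lies on $u$'s side of the edge $(u,v)$ and hence sits at distance $\ecce(u,\infObs)$ from $u$ --- a strengthening (``all'' far nodes rather than ``some'') of the fact the paper imports from Lemma 2 of \cite{ZhuYin_14_2}. What each buys: your route is more explicit (it yields the exact difference of the two likelihoods, and at $t=\ecce(u,\infObs)$ it reproduces the paper's key inequality as the comparison of the two borderline terms) and it unifies this lemma with the tie-breaking analysis, since both then rest on the same product formula; the paper's inclusion argument is more robust, since it needs only a feasibility characterization of live-edge graphs rather than an exact factorization of the likelihood, and that is the kind of argument that survives generalizations (e.g., partial observations) where the clean product formula is no longer available. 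Two small cautions for your write-up: state explicitly that $u,v\in\infObs$ (otherwise $\Pr(v|\obs)=0$ and the claim is trivial), and phrase the borderline comparison as ``the set of $(1-q_{ab})$ factors present for $v$ at time $\ecce(v,\infObs)$ contains the set present for $u$ at time $\ecce(u,\infObs)$'' rather than through the ratios $P/A_u$ and $P/A_v$, so that the degenerate case $q_{ab}=1$ (where $A_x=0$) causes no division issue.
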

\begin{proof}
Consider nodes $u,v$ on a tree $\tilde{\tree}$ where $(u,v)\in\edges(\tilde{\tree})$. Let $\obsTime_u,\obsTime_v$ be the observation times associated with $u,v.$ We will show that when $\ecce(v,\infObs)>\ecce(u,\infObs),$
\begin{align}
\Pr(\obs|v,\obsTime_v = t+1)\leq \Pr(\obs|u,\obsTime_u = t),\label{eqn:neighborInequality}
\end{align}
where $\Pr(\obs|v,\obsTime_v = t)$ is the probability of the snapshot $\obs$ given $v$ is the source and the observation time is at time slot $t.$ .

We adopt an equivalent view of the IC model called \emph{live edge} model \cite{KemKleTar_03}. In the IC model, after $u$ becomes infected, it attempts to infect its neighbor $w$ with probability $q_{uw}$ once.  Therefore, we can assume that a biased coin with parameter $q_{uw}$ is flipped for edge $(u,w)\in \edges(g)$ when $u$ tries to infect $w$ in the IC model. Note that the probability of node $w$ is infected by node $u$ remains the same whether the coin is flipped at the moment when node $u$ attempts to infect $w$ or prior to the infection but is revealed for the attempt. Assume the coins of all edges are flipped at the beginning of the infection process. When one node attempts to infect one of its neighbors, we check the stored coin realization to determine whether the infection succeeds. This process is called live edge model and it is equivalent to the IC model since we only change the time of the coin flippings, so the probability of an infection trace remains the same. In the live edge model, the infection process consists of two steps. First, each edge $(u,w)$ flips a biased coin with probability $q_{uw}$ to be a \emph{live edge} prior to the infection starts. After all coin flippings, the graph formed by live edges is called the \emph{live edge graph}. In the second step, the infection spreads over all live edges deterministically, starting from the source. We now analyze SFT under the live-edge model.

Denote by $\treeSet$ the set of all live edge graphs of $\tilde{\tree}$, i.e.,
\[
\treeSet=\{\tree|\edges(\tree)\subset\edges(\tilde{\tree}),\nodes(\tree)=\nodes(\tilde{\tree})\}
\]
Note there are no loops in $\tree \in \treeSet$ since $\tree$ is a subgraph of tree $\tilde{\tree}.$

 Denote by ${\cal K}(\obs,v,\obsTime_v)$ the set of all live edge graphs on which the observation $\obs$ is feasible if the source is $v$ and the observation time is $\obsTime_v.$ All infected nodes must be within $t_v$ hops from the source and all the observed healthy nodes must be more than $t_v$ hops away from the source in a feasible live edge graph. Formally, we have
\[
{\cal K}(\obs,v,\obsTime_v)=\{\tree\in \treeSet| \forall w \in \infObs, \dist^\tree_{vw}\leq \obsTime_v, \forall w \in \healthObs, \dist^\tree_{vw}>\obsTime_v\}.
\]

$\Pr(\obs| v, \obsTime_v)$ equals to the probability a live edge graph is in set ${\cal K}(\obs,v,\obsTime_v)$ due to the equivalence between the IC model and the live-edge model. The probability of a specific live edge graph is the product of edge live/dead probabilities.

Hence we have
\[
\Pr(\obs|v,\obsTime_v)=\sum_{\tree\in {\cal K}(\obs,v,\obsTime_v)}\Pr(\tree),
\]
To prove the lemma, we will prove the following claim,
\[
{\cal K}(\obs,v,\obsTime_v = t+1)\subset{\cal K}(\obs,u,\obsTime_u = t).
\]
To simplify notation, we next assume $t_u=t_v-1=t\geq e(u, {\cal I}),$ and ignore $t$ in the equations. Note that we only consider $t_u\geq e(u, {\cal I})$ because $\Pr({\cal O}|u, t_u) =0$ otherwise.

Consider $\tree\in {\cal K}(\obs,v,\obsTime_v).$ Denote by $\tree^{-u}_v$ the tree rooted at $v$ without the subtree starting from $u$ where $(u,v)\in\edges(\tree).$ Since $\ecce(v,\infObs)>\ecce(u,\infObs),$ according to Lemma 2 in \cite{ZhuYin_14_2}, there exists $w^\dag\in\infObs\cap\tilde{\tree}^{-v}_{u}$ which has $\dist^{\tilde{\tree}}_{vw^\dag}=\dist^{\tilde{\tree}}_{uw^\dag}+1=\ecce(v,\infObs).$ If $(v,u)\not \in \edges(\tree),$ we have $\dist^{T}_{vw^\dag}=\infty>t_v$ which is a contradiction to $\tree\in {\cal K}(\obs,v,\obsTime_v).$ Hence, we have $(v,u)\in \edges(\tree).$
\begin{itemize}
\item Consider $\tilde{\tree}^{-v}_u$ part.
\begin{itemize}
\item For any $w \in \tilde{\tree}^{-v}_u \cap {\cal I},$ we have
\[
\dist^{\tree}_{vw}\leq \ecce(v, {\cal I})\leq  \obsTime_v.
\]
Since $(v,u)\in\edges(g)$ and $\tree$ have no loops, we have
\[
\dist^{\tree}_{uw}=\dist^{\tree}_{vw}-1\leq \obsTime_v-1=\obsTime_u.
\]

\item For any $w \in \tilde{\tree}^{-v}_u \cap \healthObs,$ we have
\[
\dist^{\tree}_{vw} > e(v, {\cal I})\geq  \obsTime_v.
\]
Since $(v,u)\in\edges(g)$ and $\tree$ has no loops, we have
\[
\dist^{\tree}_{uw}=\dist^{\tree}_{vw}-1> \obsTime_v-1=\obsTime_u.
\]
\end{itemize}

\item Consider $\tilde{\tree}^{-u}_v$ part.
\begin{itemize}
\item For any $w \in \tilde{\tree}^{-u}_v \cap {\cal I},$ based on the proof of Lemma 2 in \cite{ZhuYin_14_2}, we have
\[
\dist^{\tilde{\tree}}_{vw}\leq \ecce(u,\infObs)-1.
\]
There is only one path ${\cal P}_{vw}$ from $v$ to $w$ in tree $\tilde{\tree}.$ If ${\cal P}_{vw}\not\subset \edges(\tree),$ $v,w$ are disconnected in $\tree$ which contradicts the fact that $\dist^\tree_{vw}\leq \obsTime_v.$ Hence ${\cal P}_{vw}\subset \edges(\tree).$ In addition, we have $(v,u)\subset \edges(\tree).$
Hence
\[\dist^{\tree}_{uw}=\dist^{\tree}_{vw}+1=\dist^{\tilde{\tree}}_{vw}+1\leq \ecce(u,\infObs)\leq \obsTime_u.
\]

\item For any $w \in \tilde{\tree}^{-u}_v \cap {\cal H},$
there is only one path ${\cal P}_{vw}$ from $v$ to $w$ in tree $\tilde{\tree}.$ If ${\cal P}_{vw}\not\subset \edges(\tree),$ we have
\[
\dist^\tree_{uw}=\infty>t_u.
\]
If ${\cal P}_{vw}\subset \edges(\tree),$
\[
\dist^{\tree}_{uw}=\dist^{\tilde{\tree}}_{uw}=\dist^{\tilde{\tree}}_{vw}+1=\dist^{\tree}_{vw}+1> \obsTime_v+1 > \obsTime_u
\]
where $d_{vw}^\tree >t_v$ because $\tree\in {\cal K}({\cal O}, v, t_v).$
\end{itemize}
\end{itemize}
As a summary, for any $\tree \in {\cal K}(\obs,v,\obsTime_v),$ we have
\[
\forall w \in \infObs, \dist^\tree_{uw}\leq \obsTime_u, \forall w \in \healthObs, \dist^\tree_{uw}>\obsTime_u
\]
Therefore, $\tree\in {\cal K}(\obs,u,\obsTime_u).$

Hence, we proved
\[
{\cal K}(\obs,v,\obsTime_v = t+1)\subset{\cal K}(\obs,u,\obsTime_u = t)
\]
which implies
\begin{align*}
\Pr(\obs|v,\obsTime_v = t+1)= & \sum_{\tree\in {\cal K}(\obs,v,\obsTime_v = t+1)}\Pr(\tree)\\
\leq & \sum_{\tree\in {\cal K}(\obs,u,\obsTime_u = t)}\Pr(\tree)\\
= & \Pr(\obs|u,\obsTime_u = t).
\end{align*}
Hence, we proved Inequality (\ref{eqn:neighborInequality}).

Denote by $\Pr(v)$ the probability that $v$ is the source, $\Pr(\obsTime)$ the probability that the observation time is $\obsTime,$ and $\Pr(\obs|v,\obsTime)$ is the probability of snapshot $\obs$ given $v$ is the source and $\obsTime$ is the observation time. Since the observation time $\obsTime$ is independent of the source node, we obtain
\begin{align*}
\Pr(v|\obs)&=\frac{1}{\Pr(\obs)}\Pr(v,\obs)\\
&=\frac{1}{\Pr(\obs)}\sum_{\obsTime\geq \ecce(v,\infObs)} \Pr(v,\obsTime,\obs)\\
&=\frac{1}{\Pr(\obs)}\sum_{\obsTime\geq \ecce(v,\infObs)}\Pr(\obs|v,\obsTime)\Pr(v,\obsTime)\\
&=\frac{\Pr(v)}{\Pr(\obs)} \sum_{\obsTime\geq \ecce(v,\infObs)}\Pr(\obs|v,\obsTime)\Pr(\obsTime)\\
&\leq_{\hbox{(a)}} \frac{\Pr(v)}{\Pr(\obs)} \sum_{\obsTime\geq \ecce(v,\infObs)}\Pr(\obs|u,\obsTime-1)\Pr(\obsTime)\\
&=_{\hbox{(b)}}\frac{\Pr(u)}{\Pr(\obs)} \sum_{\obsTime\geq \ecce(u,\infObs)}\Pr(\obs|u,\obsTime)\Pr(\obsTime+1)\\
&\leq_{\hbox{(c)}} \frac{\Pr(u)}{\Pr(\obs)} \sum_{\obsTime\geq \ecce(u,\infObs)}\Pr(\obs|u,\obsTime)\Pr(\obsTime) \\
&=\Pr(u|\obs)
\end{align*}
 (a) is due to Inequality (\ref{eqn:neighborInequality}), (b) is based on $\Pr(u)=\Pr(v)$ and $e(v, {\cal I})=e(u, {\cal I})+1,$  and (c) is based on $\Pr(\obsTime)\geq \Pr(\obsTime+1).$
\end{proof}

Based on the proof of Theorem 4 in \cite{ZhuYin_14_2}, there exists a path from any node to a Jordan infection center in the tree network such that the infection eccentricity strictly decreases along the path. By repeatedly applying Lemma \ref{cor:MAPneighbors}, we conclude that a MAP estimator must be a Jordan infection center.

Recall ${\cal J}$ is the set of Jordan infection centers. Next, we will show that the MAP estimator , say node $v,$  has the maximum $\sum_{(v,w)\in{{\cal F}'_u}}|\log(1-q_{vw})|$ among all nodes in ${\cal J}.$

Define an edge set
\[
{\cal F}=\{(v,w)|(v,w)\in\edges(\tilde{\tree}),v\in \infObs,w\in \healthObs\}.
\]
We call the edges in ${\cal F}$ the frontier edges since they are the edges between $\infObs$ and $\healthObs.$

Define another edge set
\[
{\cal B}=\{(v,w)|(v,w)\in \edges(\tilde{\tree}), v,w\in \infObs\}.
\]
The edges in ${\cal B}$ are the edges between infected nodes.

In addition, for any $u\in{\cal J}$ define
\[
{\cal F}_u(\obsTime_u)=\{(v,w)|(v,w)\in\edges(\tilde{\tree}),v\in \infObs,w\in \healthObs,\dist_{uw}\leq \obsTime_u\}.
\]
${\cal F}_u(\obsTime_u)$ is set of edges which cannot be live edges when $u$ is the source and $\obsTime_u$ is the observation time.

For a complete observation, we have
\begin{align}
\Pr(\obs|u,\obsTime_u)=\prod_{(v,w)\in {\cal B}}q_{vw}\prod_{(v,w)\in{\cal F}_u(\obsTime_u)}(1-q_{vw})\label{eqn:AP}
\end{align}

Denote by $e^*$ the minimum infection eccentricity, i.e.,
\[
\forall v \in {\cal J}, \ecce(v,\infObs) = e^*
\]
 Intuitively, when $t_u>e^*,$ none of frontier edges should be a live edge in a feasible live edge graph to make sure healthy nodes are not infected. So when $\obsTime_{u}>e^*,$ we have
\[
{\cal F}_u(\obsTime_u)={\cal F}.
\]
Hence,
\[
\Pr(\obs|u,\obsTime_u)=\prod_{(v,w)\in {\cal B}}q_{vw}\prod_{(v,w)\in{\cal F}}(1-q_{vw})\triangleq C,
\]
which is not a function of either $t_{u}$ or $u.$ Substituting into Equation (\ref{eqn:AP}), we have
\begin{align}
\Pr(\obs|u,e^*)=\frac{C}{\prod_{(v,w)\in{\cal F}\backslash{\cal F}_u(e^*)}(1-q_{vw})}\label{eqn:newAP}
\end{align}

Follow a similar procedure in Lemma \ref{cor:MAPneighbors}, for an Jordan infection center $u,$ we have
\begin{align*}
\Pr(u|\obs)&=\frac{\Pr(u)}{\Pr(\obs)} \sum_{\obsTime}\Pr(\obs|u,\obsTime)\Pr(\obsTime)\\
&=\frac{\Pr(u)}{\Pr(\obs)}\bigg( \Pr(\obs|u,e^*)\Pr(t = e^*)\\
&+\sum_{\obsTime>e^*}\Pr(\obs|u,\obsTime)\Pr(\obsTime)\bigg)\\
&=\frac{\Pr(u)}{\Pr(\obs)}\left( \Pr(\obs|u,e^*)\Pr(t = e^*)+\Pr(\obsTime>e^*)C\right)
\end{align*}
Therefore,
\begin{align}
&\arg\max_{u}\Pr(u|\obs)\\
&=\arg\max_{u\in {\cal J}}\Pr(u|\obs)\\
&=\arg\max_{u\in {\cal J}}\frac{\Pr(u)}{\Pr(\obs)}\big( \Pr(\obs|u,e^*)\Pr(t = e^*)\\
&+\Pr(\obsTime>e^*)C\big)\\
&=\arg\max_{u\in {\cal J}}\Pr(\obs|u,e^*)\label{eqn:MAP}\\
&=\arg \min_{u\in {\cal J}}\prod_{(v,w)\in{\cal F}\backslash{\cal F}_u(e^*)}(1-q_{vw}).\label{eqn:MLE}
\end{align}

Note we have
\begin{align*}
&{\cal F}\backslash{\cal F}_u(e^*)\\
=&\{(v,w)|(v,w)\in\edges(\tilde{\tree}),v\in \infObs,w\in \healthObs,\dist_{vw}>e^*\}
\end{align*}
Since $e^*$ is the minimum eccentricity and $u$ is the Jordan infection center, we have $\dist_{uv}\leq e^*$ for all $v\in \infObs.$ Hence for all $w\in \healthObs$ which have at least one edge to the infected nodes, we have $\dist_{uw}\leq e^*+1.$ Therefore, we have
\begin{align*}
&{\cal F}\backslash{\cal F}_u(e^*)\\
=&\{(v,w)|(v,w)\in\edges(\tilde{\tree}),v\in \infObs,w\in \healthObs,\dist_{vw}=e^*+1\}\\
=&{\cal F}'_u.
\end{align*}
Based on equations \ref{eqn:MAP} and \ref{eqn:MLE}, we conclude
\begin{align*}
&\arg\min_{u\in{\cal J}} \prod_{(v,w)\in{{\cal F}'_u}}(1-q_{vw})\\
=&\arg\max_{u\in{\cal J}} \sum_{(v,w)\in{{\cal F}'_u}}|\log(1-q_{vw})|\\
=&\arg\max_u\Pr(u|\obs).
\end{align*}

{\bf Remark:} Theorem \ref{thm:treepartialMAP} contains two important properties for the MAP estimator on tree networks: 1) the MAP estimator is a Jordan infection center; 2) the Jordan infection center with minimum $\prod_{(v,w)\in{{\cal F}'_u}}(1-q_{vw})$ is the MAP estimator. The short-fat tree algorithm is designed based on these properties, which identifies the Jordan infection centers first and then selects the one with maximum $\sum_{(v,w)\in{{\cal F}'_u}}|\log(1-q_{vw})|.$

\section{Proof of Theorem \ref{thm:sourceIsJordanER}}\label{sec:ERJordanIsSource}

We first introduce and recall some necessary notations. Consider an ER random graph $\graph.$

\begin{itemize}
\item Denote by $\source$ the actual source.
\item A node $v$ is said to locate on level $k$ if $\dist_{\source v}=k.$ Denote by $\levelset_k$ the set of nodes from level $0$ to level $k$ and $\level_k=|\levelset_k|.$
\item The \emph{descendants} of node $v$ in a tree are all the nodes in the subtree rooted at $v$ and $v$ is the \emph{ancestor} of these nodes.
\item The offsprings of a node on level $k$ (say $v$) the set of the nodes which are on level $k+1$ and have edges to $v.$ Denote by $\offspring(v)$ the offspring set of $v$ and $\offspringsize(v)=|\offspring(v)|.$
\item Denote by $p$ the wiring probability in the ER random graph.
\item Denote by $n$ the total number of nodes.
\item Denote by $\mu=np.$
\item Recall that $\hbox{Bi}(n,p)$ is the binomial distribution with $n$ number of trials and each trial succeeds with probability $p.$
\item Denote by $q$ the minimum infection probability of all the edges, i.e., $q=\min_{e\in\edges(\graph)} q_e.$
\end{itemize}
For simplicity, we use $\dist_{vu}=\dist^\graph_{vu}.$

We first elaborate the construction of the BFS tree.  Denote by $v_{ij}$ the $j$th saturated node on level $i$ of the BFS tree  from the source. $v_{01} = \source$ is the first node on level zero. Denote by $b_i$ the number of nodes on level $i$ of the BFS tree starting from the source. We start with an empty graph $\tree^\dag.$ Initially, we add $v_{01}$ to the tree. Starting from $v_{01},$ we explore all neighbors $v_{11},v_{12},\cdots,v_{1b_1}$ of $v_{01},$ mark $v_{01}$ as saturated and add the edges from $v_{01}$ to  $v_{11},v_{12},\cdots,v_{1b_1}$ to $\tree^\dag$. Then we explore all neighbors $v_{21},v_{22},\cdots,v_{2r_1}$ of $v_{11}$ in the set $\nodes(g)\backslash\{v_{01},v_{11},\cdots,v_{1b_1}\},$ mark $v_{11}$ as saturated and add the edges from $v_{11}$ to $v_{21},v_{22},\cdots,v_{2r_1}$ to $\tree^\dag$. Then we explore all neighbors $v_{2r_1+1},v_{2r_1+2},\cdots,v_{2r_1+r_2}$ of $v_{12}$ in the set $\nodes(g)\backslash\{v_{01},v_{11},\cdots,v_{1b_1},v_{21},v_{22},\cdots,v_{2r_1}\}$ and add the corresponding edges. Only after all nodes on level $i$ are saturated, we explore nodes on level $i+1.$ The exploration terminates after all nodes on level $t-1$ are saturated. The resulting tree $\tree^\dag$ is the BFS tree.

We further introduce some notations for the BFS tree.
\begin{itemize}
\item Denote by $\offspring'(v)$ the set of offsprings of node $v$ on $\tree^\dag$ and $\offspringsize'(v)=|\offspring'(v)|.$
\item Denote by $\graph_\obsTime$ the subgraph induced by all nodes within $\obsTime$ hops from $s$ on the ER graph.  The \emph{collision edges} are the edges which are not in $\tree^\dag$ but in $\graph_\obsTime,$ i.e., $e\in \edges(\graph_\obsTime)\backslash\edges(\tree^\dag).$ A node who is an end node of a collision edge is called a \emph{collision node}.  Denote by $\collisionset_k$ the set of collision edges whose end nodes are on level $0$ to level $k$ and $\collisionsize_k=|\collisionset_k|.$
\item Denote by ${\cal Z}^{i}_j(v)$ the set of nodes that are infected at time slot $i,$ on level $j$ and the descendants of node $v$ in the BFS tree $\tree^\dag$. In addition, denote by $Z^i_j(v)=|{\cal Z}^i_j(v)|$. We often use ${\cal Z}^{i}_j={\cal Z}^{i}_j(\source)$ and $Z^i_j=Z^{i}_j(\source)$ for simplicity.
\end{itemize}

We first define the probability space of the problem. Define the sample space $\Omega$ to be the set of live edge subgraphs of all ER graphs. The probability measure of a live-edge graph is defined by edge generations. Edge $(v,w)$ exists in a live edge subgraph with probability $pq_{vw}.$

To prove that $\source$ is the \emph{only} Jordan infection center, we consider the following asymptotically high probability events.

\begin{itemize}
\item {\bf Offsprings of each node.} Define
\[
E_1=\{\forall v \in \levelset_{\obsTime-1}, \phi'(v)\in((1-\delta)\mu,(1+\delta)\mu)\}.
\]
$E_1$, when occurs, provides upper and lower bounds for the number of offsprings of each node in $\levelset_{\obsTime-1}.$

\item {\bf Collision edges.} We define event $E_2$ when the following upper bound on the collision edges holds
\[
\collisionsize_j
\begin{cases} =0 & \mbox{if } 0<j\leq \lfloor m^-\rfloor, \\
\leq 8\mu & \mbox{if } \lfloor m^-\rfloor<j<\lceil m^+\rceil,\\
\leq \frac{4[(1+\delta)\mu]^{2j+1}}{n} & \mbox{if } \lceil m^+\rceil\leq j\leq \frac{\log n}{(1+\alpha)\log \mu}.
\end{cases}
\]
where $m^+=\frac{\log n}{2\log[(1+\delta)\mu]}$ and $m^-=\frac{\log n-2\log \mu-\log 8}{2\log[(1+\delta)\mu]}.$ $E_2$ provides the upper bounds for collision edges at different levels. Note that a subgraph with diameter $\leq m^-$ is a tree with high probability since there is no collision edge.

\item {\bf Infected nodes.} Define
\begin{align*}
E_3&=\{Z^1_1\geq (1-\delta)^2\mu q\}\\
&\cap \{\forall v\in{\cal Z}^1_1, \cap_{i=2}^t Z^{i}_i(v)\geq (1-\delta)^2\mu qZ^{i-1}_{i-1}(v)\}.
\end{align*}
Level $1$ has at least $(1-\delta)^2\mu q$ infected nodes and the number of nodes grows exponentially by each level with a factor of $(1-\delta)^2\mu q.$ One immediate consequence of event $E_3$ is that
\[
\forall v \in {\cal Z}^1_1, Z^{\obsTime}_{\obsTime}(v)\geq [(1-\delta)^2\mu q]^{\obsTime-1},
\]
i.e., there are at least $[(1-\delta)^2\mu q]^{\obsTime-1}$ infected descendants on level $t$ in $\tree^\dag$ for each infected node on level $1.$

\end{itemize}

 Based on Lemma  \ref{lem:ERAllOffspring}, \ref{lem:E1} and \ref{lem:E3}, for any $\epsilon>0,$ with the union bound, we have that when $t\leq \frac{\log n}{(1+\alpha)\log \mu}$ and $n$ is sufficiently large,
 \begin{align*}
 &\Pr(E_1\cap E_2\cap E_3)\\
 \geq & \Pr(E_1)\left(1-\Pr(\bar{E}_2|E_1)-\Pr(\bar{E}_3|E_1)\right)\\
 \geq &1-\epsilon
 \end{align*}

Next, we show that $\source$ is the only Jordan infection center when $E_1,E_2,E_3$ occur.

For $t\leq\lfloor m^-\rfloor,$ the nodes within $t$ hops from the source form a tree because there is no collision edge (due to event $E_2$). When event $E_3$ occurs, we have $\forall v \in {\cal Z}^1_1, Z^{\obsTime}_{\obsTime}(v)\geq [(1-\delta)^2\mu q]^{\obsTime-1}$ which means  there exists at least one observed infected node on $t$ level for each subtree rooted on level $1.$ Consider infected node $\source'.$ Recall that $a(\source')$ is the ancestor of $\source'$ on level $1$ of $\tree^\dag.$ Consider node $u\in {\cal Z}^1_1$ such that $u\neq a(\source')$ and node $w\in {\cal Z}^t_t(u).$ We have $d^{\tree^\dag}_{\source' w} = d^{\tree^\dag}_{\source' \source}+d^{\tree^\dag}_{\source w}> t.$ Hence the infection eccentricity of $\source'$ is larger than $t.$ Therefore, $\source$ is the only Jordan infection center. The positions of $\source,\source',a(\source'),u$ and $w$ are illustrated in Figure \ref{fig:onebyoneexploration}.

Consider the case $t>\lfloor m^-\rfloor$ and an infected node $\source'$ on level $k\in[1,t].$ In the rest of the proof, we show that there exists node $v\in\infObs$ such that $\dist_{\source' v}>t,$  which means that $s'$ cannot be the Jordan infection center.

\begin{figure}[h!]
  \centering
    \includegraphics[width=\columnwidth]{ER_lower_level_tree}
  \caption{A pictorial example of ${\cal Z}^\obsTime_\obsTime(u)$ in BFS tree $\tree^\dag$}\label{fig:onebyoneexplorationCopy}
\end{figure}
Consider node $u\in {\cal Z}_1^1, u\neq a(\source')$ (the existence of $u$ is guaranteed since $Z^1_1\geq (1-\delta)\mu q\geq 2$).  For the convenience of the reader, we copied Figure \ref{fig:onebyoneexploration} as Figure \ref{fig:onebyoneexplorationCopy} which shows the relative positions of $\source',a(\source'),u,$ and ${\cal Z}^\obsTime_\obsTime(u).$ The distance between a node in ${\cal Z}^\obsTime_\obsTime(u)$ and $\source'$ on the tree $\tree^\dag$ is $k+t.$ Therefore, if $\source'$ is the Jordan infection center, there exists at least one collision node on the path between $\source'$ and each node in ${\cal Z}^\obsTime_\obsTime(u)$ to make the distance $\leq t.$

Define $H$ to be the total number of nodes each of which has the shortest path to $\source'$ within $t$ hops and containing at least one collision node on $g_t$. If $H< Z^\obsTime_\obsTime(u),$ there exists a node $v\in{\cal Z}^\obsTime_\obsTime(u)$ such that $\dist_{\source' v}>t.$ Therefore, $\source'$ can not be the Jordan infection center and the theorem is proved.

In the rest of the proof, we will show that $H< Z^\obsTime_\obsTime(u).$ We first have the lower bound on $Z^\obsTime_\obsTime(u)$ according to $E_3,$
\begin{align}
Z^\obsTime_\obsTime(u)\geq [(1-\delta)^2\mu q]^{t-1}\label{eqn:ztt_upperbound}
\end{align}
The upper bound of $H$ is computed in Lemma \ref{prop:collisionNodesUpperBounds}.
\[
H\leq c[(1+\delta)\mu]^{\frac{3}{4}t+\frac{1}{2}}+c[(1+\delta)\mu]^{(\frac{5}{4}-\frac{\alpha}{2})t+2},
\]

 Since $\frac{1}{2}<\alpha<1,$ we have $\alpha=\frac{1}{2}+\alpha'$ where $0<\alpha'<\frac{1}{2}$ is a constant.
Based on Lemma \ref{prop:collisionNodesUpperBounds}, we have
\begin{align*}
\frac{H}{Z^t_t(u)}&\leq
\frac{c[(1+\delta)\mu]^{\frac{3}{4}t+\frac{1}{2}}+c[(1+\delta)\mu]^{(\frac{3}{2}-\alpha)t+2}}{ [(1-\delta)^2\mu q]^{t-1}}\\
&\leq \frac{c[(1+\delta)\mu]^{\frac{3}{4}t+\frac{1}{2}}}{ [(1-\delta)^2\mu q]^{t-1}}+\frac{c[(1+\delta)\mu]^{(\frac{5}{4}-\frac{\alpha}{2})t+2}}{ [(1-\delta)^2\mu q]^{t-1}}\\
&= \frac{c}{\mu}\left(\frac{(1+\delta)^{\frac{3}{4}+\frac{1}{2t}}}{[(1-\delta)^2q]^{1-\frac{1}{t}}\mu^{\frac{1}{4}-\frac{5}{2t}}}\right)^t\\
&+\frac{c}{\mu}\left(\frac{(1+\delta)^{\frac{5}{4}-\frac{\alpha}{2}+\frac{2}{t}}}{[(1-\delta)^2q]^{1-\frac{1}{t}}\mu^{\frac{\alpha}{2}-\frac{1}{4}-\frac{4}{t}}}\right)^t\\
&\leq \frac{c}{\mu}\left(\frac{(1+\delta)^{\frac{3}{4}+\frac{1}{2t}}}{[(1-\delta)^2q]^{1-\frac{1}{t}}\mu^{\frac{1}{4}-\frac{4}{t}}}\right)^t\\
&+\frac{c}{\mu}\left(\frac{(1+\delta)^{1-\frac{\alpha'}{2}+\frac{2}{t}}}{[(1-\delta)^2q]^{1-\frac{1}{t}}\mu^{\frac{\alpha'}{2}-\frac{4}{t}}}\right)^t
\end{align*}
For $t>16/\alpha'$ we have
\[
\frac{H}{Z^t_t(u)}\leq \frac{2c}{\mu}\left(\frac{(1+\delta)}{[(1-\delta)^2q]\mu^{\frac{\alpha'}{4}}}\right)^t.
\]
Since $\mu>3\log n$ and $\delta,q,\alpha'$ are constants, we have
\[
\frac{(1+\delta)}{[(1-\delta)^2q]\mu^{\frac{\alpha'}{4}}}<1
\]
when
\[
n> \exp\left(\frac{1}{2}\left(\frac{(1+\delta)}{(1-\delta)^2q}\right)^{\frac{4}{\alpha'}}\right).
\]
Therefore, we have
\[
\frac{H}{Z^t_t(u)}\leq \frac{2c}{\mu}\leq \epsilon',
\]
where $\epsilon'\in(0,1)$ is a constant and the inequality holds for sufficiently large $n.$
Therefore, there are at least $(1-\epsilon')Z^t_t(u)$ nodes which cannot be reached from $\source'$ on level $k$ with time $t.$ Hence we have $\ecce(\source',\obs)>t,\forall \source'\neq \source.$

\subsection{Bounds on the Number of Offsprings of Each Node}

\begin{lemma}\label{lem:ERAllOffspring}
Assume the conditions in Theorem \ref{thm:sourceIsJordanER} hold, for any $\epsilon>0,$ we have
\[
\Pr\left(E_1\right)\geq 1-\epsilon
\]
for sufficient large $n.$
\end{lemma}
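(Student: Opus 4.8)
The plan is to characterize each offspring count $\offspringsize'(v)$ as a binomial random variable through the principle of deferred decisions, apply a Chernoff bound node by node, and then take a union bound over all nodes in $\levelset_{\obsTime-1}$. Concretely, I would reveal the edges of the ER graph during the level-by-level BFS construction of $\tree^\dag$: at the moment a node $v$ on some level $i\leq \obsTime-1$ is saturated, its offsprings are precisely its neighbors among the nodes not yet discovered. Because each of these potential edges is present independently with probability $p$ and has not been examined earlier in the exploration, conditioned on the entire history $\offspringsize'(v)$ is distributed as $\hbox{Bi}(n-D_v,p)$, where $D_v$ is the number of already-discovered nodes when $v$ is saturated. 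Its conditional mean is $p(n-D_v)=\mu(1-D_v/n)$.

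The second step is to control $D_v$, i.e. to show that the portion of the graph explored so far is $o(n)$. I would argue inductively over the levels: assuming the two-sided offspring bound already holds for every node on levels $0,\dots,i-1$, the number of discovered nodes satisfies $D_v\leq\sum_{j=0}^{i}[(1+\delta)\mu]^j\leq 2[(1+\delta)\mu]^{\obsTime-1}$. Since $\obsTime\leq\frac{\log n}{(1+\alpha)\log\mu}$ and $\mu>3\log n$, one checks that $(1+\delta)^{\obsTime}=n^{o(1)}$ and $\mu^{\obsTime-1}\leq n^{1/(1+\alpha)}$, so $D_v\leq n^{1/(1+\alpha)+o(1)}$. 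Because $\alpha>\frac12$ we have $\frac{1}{1+\alpha}<\frac23$, hence $D_v=o(n)$ and the conditional mean of $\offspringsize'(v)$ equals $\mu(1-o(1))$, i.e. it lies within a $(1\pm o(1))$ factor of $\mu$.

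With the mean pinned down, I would apply the two-sided Chernoff bound to the binomial $\offspringsize'(v)$, absorbing the $o(1)$ shift of the mean into the constant, to obtain $\Pr\big(\offspringsize'(v)\notin((1-\delta)\mu,(1+\delta)\mu)\big)\leq 2e^{-c\delta^2\mu}$ for an absolute constant $c>0$. The hypothesis $\mu>3\log n$ turns this into a polynomially small per-node bound $2n^{-c'\delta^2}$. Taking the union bound over the at most $n^{1/(1+\alpha)+o(1)}$ nodes in $\levelset_{\obsTime-1}$ yields $\Pr(\bar E_1)\leq n^{1/(1+\alpha)-c'\delta^2+o(1)}$, which tends to $0$ for a fixed constant $\delta$ once the exponent is negative; this is where $\alpha>\frac12$ (forcing the node count below $n^{2/3}$) and the slack in the constant of $\mu>3\log n$ are used, so one chooses $\delta$ as a fixed constant close enough to $1$.

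The main obstacle is the dependency between $\offspringsize'(v)$ and the already-explored part of $\tree^\dag$, which threatens a circularity: the number of nodes over which I union bound is itself governed by the offspring bounds I am trying to prove. This is precisely what the sequential, level-by-level conditioning resolves --- by the deferred-decisions structure of BFS, when $v$ is processed its count of undiscovered neighbors is a fresh $\hbox{Bi}(n-D_v,p)$ independent of the past, so the induction closes cleanly. A secondary technical point is merely bookkeeping: verifying that the geometric sum over levels is dominated by its last term and that $(1+\delta)^{\obsTime}=n^{o(1)}$, both of which follow from $\obsTime=O(\log n/\log\log n)$ under $\mu>3\log n$.
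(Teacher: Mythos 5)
Your proposal is correct and follows essentially the same route as the paper's proof: both reveal edges sequentially along the BFS exploration (deferred decisions) so that each offspring count is conditionally binomial, sandwich it between $\hbox{Bi}((1-\delta')n,p)$ and $\hbox{Bi}(n,p)$ using the bound that the explored set is $o(n)$, apply two-sided Chernoff bounds per node, and combine over all nodes of $\levelset_{\obsTime-1}$. The only cosmetic differences are that the paper aggregates via a product of conditional success probabilities $\Delta^{\sum_{i=0}^{\obsTime-1}[(1+\delta)\mu]^i}$ rather than your union bound over first-failure events (the two are equivalent in strength), and your appeal to $\alpha>\tfrac{1}{2}$ is not actually needed---any $\alpha>0$ suffices once $\delta$ is chosen close to $1$, which is exactly the slack the paper exploits when it reuses this lemma for Theorem \ref{thm:approximateRatio}.
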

\begin{proof}
Consider $\delta\in(0,1).$ Since $\obsTime\leq\frac{\log n}{(1+\alpha)\log\mu},$ we have for sufficiently large $n,$
\[
\sum_{i=0}^\obsTime[(1+\delta)\mu]^i\leq 2[(1+\delta)\mu]^t\leq \delta'n,
\]
where $\delta'\in(0,1)$ is a constant which can be arbitrarily close to $0$. This condition shows that the $\obsTime$ hop neighborhood of node $\source$ includes at most a constant fraction of the total number of nodes.

Denote by $\edges(\nodes_1, \nodes_2)$ the set of edges between node set $\nodes_1$ and $\nodes_2.$ Recall that $v_{ij}$ is the $j$th nodes on level $i$ to be explored in the BFS tree starting from the source and $b_i$ is the number of nodes on level $i.$

Define the edge set from $v_{01}$ to all other nodes in the ER graph to be $$\Psi(v_{01}) = \edges(\{v_{01}\}, \nodes(g)\backslash\{v_{01}\}),$$ which is the set of edges between $v_{01}$ and all other nodes in the graph.

Define $$\Phi'(v_{01})=\{v|(v,v_{01})\in \Psi(v_{01})\}.$$

Define $$\Psi(v_{01},v_{11}) =  \edges(\{v_{11}\}, \nodes(g)\backslash(\Phi'(v_{01})\cup \{v_{01}\})),$$ which is the set of edges from node $v_{11}$ to all nodes that are not already included in the BFS tree and $$\Phi'(v_{01},v_{11}) = \{v|(v,v_{11})\in \Psi(v_{01},v_{11})\},$$
which is the set of offsprings of $v_{11}.$

For simplicity, we use $\Psi(v_{ij})$ to denote
\[
\Psi(v_{01},v_{11},\cdots, v_{1 b_1},\cdots, v_{i1},\cdots,v_{ij}).
\]
and use $\Phi'(v_{ij})$ to denote
\[
\Phi'(v_{01},v_{11},\cdots, v_{1 b_1},\cdots, v_{i1},\cdots,v_{ij}).
\]

Iteratively, we define
\begin{align*}
&\Psi(v_{ij})\\
 \triangleq &\edges(\{v_{ij}\}, \nodes(g)\backslash(\{v_{01}\}\cup \Phi'(v_{01})\cup \cdots \cup \Phi'(v_{i j-1})))
\end{align*}
and
\begin{align*}
&\Phi'(v_{ij}) \triangleq\{v|(v,v_{ij})\in\Psi(v_{ij})\}
\end{align*}
which is the set of offsprings of node $v_{ij}$ in the BFS tree from the source.
Define $\phi'(v_{ij}) = |\Phi'(v_{ij})|$ and $\psi(v_{ij}) = |\Psi(v_{ij})|.$

Note that $\Psi(v_{ij})$ uniquely determines $\Phi'(v_{ij})$ and vice versa. In addition, according to the definition, $\Psi(v_{ij})$ for any $i$ and $j$ are pairwise disjoint.

Define
\[
\Lambda(v_{ij}) = \{\Phi'(v_{ij})|\phi'(v_{ij})\in(\mu(1-\delta),\mu(1+\delta))\}.
\]
which is the set of $\Phi'(v_{ij})$ which satisfies the given bounds on the number of offsprings.

Therefore, we have
\begin{align*}
&\Pr(E_1)\\
&=\Pr\left(\forall v \in \levelset_{\obsTime-1}, \phi'(v)\in(\mu(1-\delta),\mu(1+\delta))\right)\\
&= \sum_{\Phi'(v_{01})\in \Lambda(v_{01})}\Pr\Big(\Phi'(v_{01}), \\
& \phi'(v)\in(\mu(1-\delta),\mu(1+\delta)),\forall v \in \levelset_{\obsTime-1}\backslash\{v_{01}\}\Big)\\
& = \sum_{\Phi'(v_{01})\in \Lambda(v_{01})}\Pr(\Phi'(v_{01}))\Pr\big(\forall v \in \levelset_{\obsTime-1}\backslash\{v_{01}\},\\
& \phi'(v)\in(\mu(1-\delta),\mu(1+\delta))|\Phi'(v_{01})\big)
\end{align*}
Given $\Phi'(v_{01}),$ the order of the nodes to be explored during the construction of the BFS tree on the next level is determined and we have
\begin{align*}
&\Pr\left(\forall v \in \levelset_{\obsTime-1}, \phi'(v)\in(\mu(1-\delta),\mu(1+\delta))\right)\\
& = \sum_{\Phi'(v_{01})\in \Lambda(v_{01})}\Pr(\Phi'(v_{01}),\Phi'(v_{11}))\\
&\times\Pr\big(\forall v \in \levelset_{\obsTime-1}\backslash\{v_{01} , v_{11}\},\\
& \phi'(v)\in(\mu(1-\delta),\mu(1+\delta))|\Phi'(v_{01}) ,\Phi'(v_{11}) \big)
\end{align*}
Iteratively, we have
\begin{align}
&\Pr\left(\forall v \in \levelset_{\obsTime-1}, \phi'(v)\in(\mu(1-\delta),\mu(1+\delta))\right)\\
& = \sum_{\Phi'(v_{01})\in \Lambda(v_{01})}\cdots \sum_{\Phi'(v_{t-1 b_{t-1}-1})\in \Lambda(v_{t-1 b_{t-1}-1})}\label{eqn:iterativeForNumberOfOffsprings1}\\
&\Pr(\Phi'(v_{01}),\cdots,\Phi'(v_{t-1 b_{t-1}-1}))\\
&\times\Pr\Big(\phi'(v_{t-1 b_{t-1}})\in(\mu(1-\delta),\mu(1+\delta))\\
&|\Phi'(v_{01}),\cdots,\Phi'(v_{t-1 b_{t-1}-1})\Big)\label{eqn:iterativeForNumberOfOffsprings2}
\end{align}
Next, we focus on the last term in Equation (\ref{eqn:iterativeForNumberOfOffsprings2}). Note, $\Psi(v_{ij})$ uniquely determines $\Phi'(v_{ij})$ and vice versa.
Therefore,
\begin{align}
\Pr\Big(&\phi'(v_{t-1 b_{t-1}})\in(\mu(1-\delta),\mu(1+\delta))\\
&|\Phi'(v_{01}),\cdots,\Phi'(v_{t-1 b_{t-1}-1})\Big)\\
= \Pr\Big(&\phi'(v_{t-1 b_{t-1}})\in(\mu(1-\delta),\mu(1+\delta))\\
&|\Psi(v_{01}),\cdots,\Psi(v_{t-1 b_{t-1}-1})\Big)
\end{align}
Since $\Psi(v_{t-1 b_{t-1}})$ is disjoint with $\Psi(v_{01}),\cdots,\Psi(v_{t-1 b_{t-1}-1})$ and each edge is generated independently in the ER graph. Therefore, conditioned on $\Psi(v_{01}),\cdots,\Psi(v_{t-1 b_{t-1}-1}),$ we have
$\phi'(v_{t-1 b_{t-1}})$ follows
\[
\hbox{Bi}(n - \sum_{i=0}^{t-2}\sum_{j=1}^{b_{i}}\phi'(v_{ij})-\sum_{j=1}^{b_{t-1}-1}\phi'(v_{t-1j})-1, p).
\]

Note, $\phi(v_{01}),\cdots,\phi(v_{t-1 b_{t-1}-1})$ are in $(\mu(1-\delta),\mu(1+\delta))$ according to the condition in Equation (\ref{eqn:iterativeForNumberOfOffsprings1}). Hence
\[
\sum_{i=0}^{t-2}\sum_{j=1}^{b_{i}}\phi'(v_{ij})+\sum_{j=1}^{b_{t-1}-1}\phi'(v_{t-1j})+1\leq\sum_{i = 0}^{t}[\mu(1+\delta))]^i
\]

Therefore, $\phi'(v_{t-1 b_{t-1}})$ stochastically dominates $\hbox{Bi}(n - \sum_{i = 0}^{t}[\mu(1+\delta))]^i, p)$ and is stochastically dominated by $\hbox{Bi}(n,p)$ which implies

\begin{align*}
&\Pr\big(\phi'(v_{t-1 b_{t-1}})\in(\mu(1-\delta),\mu(1+\delta))\\
&|\Phi'(v_{01}),\cdots,\Phi'(v_{t-1 b_{t-1}-1})\big)\\
&\geq 1-\Pr\left(\hbox{Bi}\left(n-\sum_{i = 0}^{t}[\mu(1+\delta))]^i,p\right)\leq (1-\delta)\mu\right)\\
&-\Pr\left(\hbox{Bi}\left(n,p\right)\geq \mu(1+\delta )\right)
\end{align*}
Note  $\sum_{i=0}^\obsTime[(1+\delta )\mu]^i\leq \delta' n.$ Therefore, we have
\begin{align}
&\Pr\big(\phi'(v_{t-1 b_{t-1}})\in(\mu(1-\delta),\mu(1+\delta))\\
&|\Phi'(v_{01}),\cdots,\Phi'(v_{t-1 b_{t-1}-1})\big)\\
&\geq 1-\Pr\left(\hbox{Bi}\left((1-\delta')n,p\right)\leq (1-\delta)\mu\right)\\
&-\Pr\left(\hbox{Bi}\left(n,p\right)\geq \mu(1+\delta )\right)\label{eqn:singleCase}
\end{align}

By using the Chernoff bound in Lemma \ref{lem:chernoff}, we have
\[
\Pr\left(\hbox{Bi}\left((1-\delta')n,p\right)\leq \mu(1-\delta)\right)\leq \exp\left(-\frac{(\delta-\delta')^2\mu}{2(1-\delta')}\right),
\]
and
\[
\Pr\left(\hbox{Bi}\left(n,p\right)\geq \mu(1+\delta)\right)\leq  \exp\left(-\frac{\delta ^2\mu}{2+\delta }\right)
\]
Substitute into Inequality (\ref{eqn:singleCase}), we obtain
\begin{align}
&\Pr\big(\phi'(v_{t-1 b_{t-1}})\in(\mu(1-\delta),\mu(1+\delta))\\
&|\Phi'(v_{01}),\cdots,\Phi'(v_{t-1 b_{t-1}-1})\big)\\
&\geq 1-\exp\left(-\frac{(\delta-\delta')^2\mu}{2(1-\delta')}\right) -\exp\left(-\frac{\delta ^2\mu}{2+\delta }\right)\triangleq \Delta\label{eqn:singleCaseFinal}
\end{align}
Substitute Inequality (\ref{eqn:singleCaseFinal}) into Equation (\ref{eqn:iterativeForNumberOfOffsprings2}), we obtain
\begin{align}
&\Pr\left(\forall v \in \levelset_{\obsTime-1}, \phi'(v)\in(\mu(1-\delta),\mu(1+\delta))\right)\\
&\geq \sum_{\Phi'(v_{01})\in \Lambda(v_{01})}\cdots \sum_{\Phi'(v_{t-1 b_{t-1}-1})\in \Lambda(v_{t-1 b_{t-1}-1})}\\
&\Pr(\Phi'(v_{01}),\cdots,\Phi'(v_{t-1 b_{t-1}-1}))\times\Delta\\
& = \Delta\sum_{\Phi'(v_{01})\in \Lambda(v_{01})}\cdots \sum_{\Phi'(v_{t-1 b_{t-1}-2})\in \Lambda(v_{t-1 b_{t-1}-2})}\\
&\Big( \sum_{\Phi'(v_{t-1 b_{t-1}-1})\in \Lambda(v_{t-1 b_{t-1}-1})}\Pr(\Phi'(v_{01}),\cdots,\Phi'(v_{t-1 b_{t-1}-1})) \Big)\\
& = \Delta\sum_{\Phi'(v_{01})\in \Lambda(v_{01})}\cdots \sum_{\Phi'(v_{t-1 b_{t-1}-2})\in \Lambda(v_{t-1 b_{t-1}-2})}\\
&\Pr(\Phi'(v_{01}),\cdots,\Phi'(v_{t-1 b_{t-1}-2}))\\
& \Pr\big(\phi'(v_{t-1 b_{t-1}-1})\in(\mu(1-\delta),\mu(1+\delta))\\
&|\Phi'(v_{01}),\cdots,\Phi'(v_{t-1 b_{t-1}-2})\big)\\
& = \Delta^2\sum_{\Phi'(v_{01})\in \Lambda(v_{01})}\cdots \sum_{\Phi'(v_{t-1 b_{t-1}-2})\in \Lambda(v_{t-1 b_{t-1}-2})}\\
&\Pr(\Phi'(v_{01}),\cdots,\Phi'(v_{t-1 b_{t-1}-2}))\label{eqn:IterativeApply}
\end{align}
Applying Equation (\ref{eqn:IterativeApply}) iteratively, we have
\begin{align*}
&\Pr\left(\forall v \in \levelset_{\obsTime-1}, \phi'(v)\in(\mu(1-\delta),\mu(1+\delta))\right)\\
\geq & \Delta^{\sum_{i=0}^{\obsTime-1}[(1+\delta )\mu]^i}\\
\geq & \bigg(1-\exp\left(-\frac{(\delta-\delta')^2\mu}{2(1-\delta')}\right)-\exp\left(-\frac{\delta ^2\mu}{2+\delta }\right)\bigg)^{\sum_{i=0}^{\obsTime-1}[(1+\delta )\mu]^i}\\
\end{align*}
 When $\delta'\rightarrow 0,$ we have
\[
\frac{(\delta-\delta')^2\mu}{2(1-\delta')}\rightarrow \frac{\delta^2}{2}>\frac{\delta^2}{2+\delta}
\]
Therefore, we can choose a sufficiently small $\delta'$ such that
\begin{align*}
&\Pr\left(\forall v \in \levelset_{\obsTime-1}, \phi'(v)\in(\mu(1-\delta),\mu(1+\delta))\right)\\
\geq & \left(1-2\exp\left(-\frac{\delta^2\mu}{2+\delta}\right)\right)^{\sum_{i=0}^{\obsTime-1}[(1+\delta )\mu]^i}\\
\geq & \left(1-2\exp\left(-\frac{\delta^2\mu}{2+\delta}\right)\right)^{2[(1+\delta )\mu]^{\obsTime-1}}\\
\geq_{(a)}& \exp\left(-8[(1+\delta )\mu]^{\obsTime-1}\exp\left(-\frac{\delta^2\mu}{2+\delta}\right)\right)\\
\geq & \exp\left(-8\exp\left(-\frac{\delta^2\mu}{2+\delta}+(\obsTime-1)\log[(1+\delta)\mu]\right)\right),
\end{align*}
where $(a)$ is based on Lemma \ref{lem:exponentialBounds} and holds when $\mu$ is sufficiently large (i.e., when $n$ is sufficiently large).
To make the above bound greater than $1-\epsilon,$ we need
\[
\obsTime\leq \frac{\frac{\delta^2\mu}{2+\delta}-\log8+\log\log\left(\frac{1}{1-\epsilon}\right)}{\log(1+\delta)+\log \mu}+1.
\]
When $\mu> \frac{2+\delta}{\delta^2}\log n$, we have
\begin{align*}
\obsTime &\leq\frac{\log n}{(1+\alpha)\log\mu}\\
&<\frac{\log n-\log8+\log\log\left(\frac{1}{1-\epsilon}\right)}{\log(1+\delta)+\log\mu}+1.
\end{align*}
for sufficiently large $n.$

Note $ \frac{2+\delta}{\delta^2}\rightarrow 3$ when $\delta \rightarrow 1$ which matches the condition that $\mu>3\log n.$ Therefore, we prove the lemma.

\end{proof}

\subsection{Bounds on the Number of Collision Edges}

Next, we analyze the number of collision edges on different levels. We have the following lemma.

\begin{lemma}\label{lem:E1}
 If the conditions in Theorem \ref{thm:sourceIsJordanER} hold, for any $\epsilon>0,$
\[
\Pr(E_2|E_1)\geq 1-\epsilon
\]
for sufficiently large $n.$
\end{lemma}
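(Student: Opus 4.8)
The plan is to control $\collisionsize_j$ one level at a time by reusing the edge-exposure bookkeeping from the proof of Lemma~\ref{lem:ERAllOffspring}. When $\tree^\dag$ is grown by saturating nodes one at a time, saturating a node $v_{ij}$ exposes only the edges from $v_{ij}$ to the currently \emph{undiscovered} vertices; the status of an edge between two vertices that are each discovered before either of them is saturated is never revealed during the construction. These unexposed intra-neighborhood edges are precisely the candidate collision edges. Since the edges of the ER graph are mutually independent, conditioned on any realization of $\tree^\dag$ restricted to $\levelset_\obsTime$ (and hence conditioned on $E_1$, which is a function of the exposed offspring counts), every such unexposed pair inside $\levelset_j$ is still present independently with probability $p=\mu/n$. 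Consequently, conditioned on $E_1$, $\collisionsize_j$ is stochastically dominated by $\hbox{Bi}\!\left(\binom{\level_j}{2},p\right)$.

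First I would turn the offspring bound of $E_1$ into a bound on $\level_j$: on $E_1$ every node in $\levelset_{\obsTime-1}$ has at most $(1+\delta)\mu$ offspring, so $\level_j\le\sum_{i=0}^{j}[(1+\delta)\mu]^i\le 2[(1+\delta)\mu]^j$, which yields the conditional mean
\[
\mathbb{E}[\collisionsize_j\mid E_1]\le \binom{\level_j}{2}p\le \frac{2[(1+\delta)\mu]^{2j}\mu}{n}.
\]
The three cases defining $E_2$ correspond exactly to three regimes of this mean. For $j\le\lfloor m^-\rfloor$ the definition of $m^-$ forces $\mathbb{E}[\collisionsize_j\mid E_1]\le\frac{1}{4\mu}$, so Markov's inequality gives $\Pr(\collisionsize_{\lfloor m^-\rfloor}\ge 1\mid E_1)\le\frac{1}{4\mu}$, and since $\collisionsize_j$ is nondecreasing in $j$ this shows all levels up to $\lfloor m^-\rfloor$ are collision-free with the same probability. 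For the middle range $\lfloor m^-\rfloor<j<\lceil m^+\rceil$ the mean is at most $2\mu$ while the target $8\mu$ is four times that cap, and for $j\ge\lceil m^+\rceil$ the mean is at least $2\mu$ and the target $\frac{4[(1+\delta)\mu]^{2j+1}}{n}$ is at least $2(1+\delta)$ times the mean; in both regimes I would apply the Chernoff bound of Lemma~\ref{lem:chernoff} to the dominating binomial (using its monotonicity in the number of trials) to conclude that $\collisionsize_j$ violates the stated bound with probability at most $\exp(-c\mu)$ for some constant $c>0$.

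Finally I would combine the three estimates with a union bound over the at most $\obsTime\le\frac{\log n}{(1+\alpha)\log\mu}$ relevant levels. Because $\mu>3\log n$, each Chernoff failure probability is at most $\exp(-c\mu)\le n^{-3c}$, which is polynomially small, while the number of levels is only polylogarithmic; together with the $O(1/\mu)$ Markov term this drives $\Pr(\bar E_2\mid E_1)<\epsilon$ for sufficiently large $n$. The main obstacle is the conditional-independence step in the first paragraph: one must argue rigorously that conditioning on $E_1$ does not spoil the freshness of the unexposed edges, i.e., that saturation exposes only tree edges together with certain absent edges and never touches any of the $\binom{\level_j}{2}$ candidate collision pairs, so that each remains present with probability $p$. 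This requires repeating the disjointness accounting of the sets $\Psi(v_{ij})$ from Lemma~\ref{lem:ERAllOffspring}; once that is in place, the domination, concentration, and union-bound steps are routine.
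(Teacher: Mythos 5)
Your proposal is correct and follows essentially the same route as the paper's proof: the same three-regime decomposition at $\lfloor m^-\rfloor$ and $\lceil m^+\rceil$, the same stochastic domination of $\collisionsize_j$ conditioned on $E_1$ by a binomial with $O([(1+\delta)\mu]^{2j})$ trials, Chernoff bounds in the two upper regimes, and a closing union bound that exploits $\mu>3\log n$ and the polylogarithmic number of levels. The only cosmetic differences are that you invoke Markov's inequality for the collision-free regime where the paper directly computes $(1-p)^{l_j^2}\geq\exp(-1/\mu)$, and that you spell out the edge-exposure justification for the conditional domination, a step the paper merely asserts.
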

\begin{proof}
We have
\begin{align*}
\Pr(E_2|E_1)&\geq 1-\Pr(R_{\lfloor m^-\rfloor}\neq 0|E_1)\\
&-\sum_{j=\lfloor m^-\rfloor+1}^{\lceil m^+\rceil-1}\Pr(R_j>8\mu|E_1)\\
&-\sum_{j=\lceil m^+\rceil}^t\Pr\left(R_j>\frac{4[(1+\delta)\mu]^{2j+1}}{n}|E_1\right)
\end{align*}
\begin{itemize}
\item {\bf No collision edge at the first $\lfloor m^- \rfloor$ levels.}

We will show that
\[
\Pr\left(\collisionsize_{\lfloor m^-\rfloor}\neq 0|E_1\right)\leq 1-\exp\left(-\frac{1}{\mu}\right)\leq \frac{1}{\mu}
\]
when $n$ is sufficiently large.

Conditioning on $E_1,$ we have $\collisionsize_j$ is stochastically dominated by $\hbox{Bi}(l_j^2,p).$ Since $l_j\leq 2[(1+\delta)\mu]^j,$ $\collisionsize_j$ is stochastically dominated by $\hbox{Bi}(4[(1+\delta)\mu]^{2j},p).$ We have for sufficiently large $n,$
\begin{align*}
\Pr\left(\collisionsize_j=0\big|E_1\right)\geq& \left(1-p\right)^{\left(2 [(1+\delta)\mu]^j\right)^2}\\
= & \left(1-\frac{\mu}{n}\right)^{4 [(1+\delta)\mu]^{2j}}\\
\geq_{(a)} & \exp\left(-8[(1+\delta)\mu]^{2j}\frac{\mu}{n}\right)\\
\geq_{(b)}&\exp\left(-\frac{1}{\mu}\right)
\end{align*}
Inequality $(a)$ is based on Lemma \ref{lem:exponentialBounds}.
To obtain Inequality $(b),$ note $$j\leq m^-=\frac{\log n-2\log \mu-\log 8}{2\log[(1+\delta)\mu]}.$$
 We have
\[
8[(1+\delta)\mu]^{2j}\frac{\mu}{n}\leq \frac{1}{\mu}
\]
which explains $(b).$
\item {\bf The number of collision edges at levels between $\lfloor m^-\rfloor+1$ and $\lceil m^+\rceil -1$.}

We will show
\[
\Pr\left(\collisionsize_j> \frac{4[(1+\delta)\mu]^{2j+1}}{n}\bigg|E_1 \right)\leq\exp\left(-\frac{4\delta^2}{2+\delta}\mu\right)
\]
when $n$ is sufficiently large.

Let
\[
\delta'=\frac{2n}{[(1+\delta)\mu]^{2j}}-1
\]
Since $j\leq m^+ = \frac{\log n}{2\log[(1+\delta)\mu]},$ we have $n\geq[(1+\delta)\mu]^{2j}.$
Hence
\[
\delta'\geq 1
\]
Conditioned on event $E_1,$ $\collisionsize_j$ is stochastically dominated by $\hbox{Bi}(4[(1+\delta)\mu]^{2j},p).$
Using the Chernoff bound in Lemma \ref{lem:chernoff}, we have,
\begin{align*}
&\Pr\left(\collisionsize_j\leq (1+\delta')4[(1+\delta)\mu]^{2j}p \big|E_1\right)\\
&\geq \Pr\left(\hbox{Bi}(4[(1+\delta)\mu]^{2j},p)\leq (1+\delta')4[(1+\delta)\mu]^{2j}p\right)\\
&\geq 1-\exp\left(-\frac{\delta'^2}{2+\delta'} 4[(1+\delta)\mu]^{2j}\frac{\mu}{n}\right)\\
&\geq 1-\exp\left(-\frac{\delta'}{2+\delta'} 4(2n-[(1+\delta)\mu]^{2j})\frac{\mu}{n}\right)\\
&\geq_{(a)}1-\exp\left(-\frac{\delta'}{2+\delta'} 4\mu\right)\\
&\geq_{(b)}1-\exp\left(-\frac{4}{3}\mu\right)\\
\end{align*}
Note $(a)$ is due to $n>[(1+\delta)\mu]^{2j}$ and $(b)$ is due to $\delta'\geq 1.$
Note,
\begin{align*}
&(1+\delta')4[(1+\delta)\mu]^{2j}p\\
=&\left(1+\frac{2n}{[(1+\delta)\mu]^{2j}}-1\right)4[(1+\delta)\mu]^{2j}p\\
=&\frac{2n}{[(1+\delta)\mu]^{2j}}4[(1+\delta)\mu]^{2j}\frac{\mu}{n}\\
=& 8 \mu
\end{align*}

\item {\bf The number of collision edges at levels between $\lceil m^+\rceil$ and $\frac{\log n}{(1+\alpha) \log \mu}.$ }

We will show that
\[
\Pr\left(\collisionsize_{j}> 8\mu|E_1\right)\leq \exp\left(-\frac{4}{3}\mu\right)
\]
when $n$ is sufficiently large.

Conditioned on event $E_1,$ $\collisionsize_j$ is stochastically dominated  by $\hbox{Bi}(l_j^2,p).$ Since $l_j\leq 2[(1+\delta)\mu]^j,$ $\collisionsize_j$ is stochastically dominated by $\hbox{Bi}(4[(1+\delta)\mu]^{2j},p).$
Then
\begin{align*}
&\Pr\left(\collisionsize_j\leq \frac{4 [(1+\delta)\mu]^{2j+1}}{n} \big|E_1\right)\\
&\geq \Pr\left(\hbox{Bi}(4[(1+\delta)\mu]^{2j},p)\leq \frac{4 [(1+\delta)\mu]^{2j+1}}{n}\right)\\
&\geq \Pr\left(\hbox{Bi}(4[(1+\delta)\mu]^{2j},p)\leq (1+\delta)4 [(1+\delta)\mu]^{2j}p\right)\\
&\geq 1-\exp\left(-\frac{\delta^2}{2+\delta} 4[(1+\delta)\mu]^{2j}\frac{\mu}{n}\right)\\
&\geq_{(a)} 1-\exp\left(-\frac{\delta^2}{2+\delta}4\mu\right)
\end{align*}
From $j\geq \lceil m^+\rceil\geq \frac{\log n}{2\log[(1+\delta)\mu]},$ we obtain
\begin{align}
n\leq [(1+\delta)\mu]^{2j}.\label{eqn:LowLevelCollision}
\end{align}
we obtain Inequality $(a)$ by substituting Inequality (\ref{eqn:LowLevelCollision}).

\end{itemize}

Since $m^+-m^-< 2$  we have
\begin{align*}
\Pr(E_2|E_1)\geq & 1-\frac{1}{\mu}-(m^+-m^-)\exp\left(-\frac{4}{3}\mu\right)\\
&-\sum_{j=\lceil m ^+\rceil}^t\exp\left(-\frac{4\delta^2}{2+\delta}\mu\right)\\
\geq & 1-\frac{1}{\mu}-2\exp\left(-\frac{4}{3}\mu\right)-t\exp\left(-\frac{4\delta^2}{2+\delta}\mu\right)
\end{align*}

Note we have $\mu\geq 3\log n$ and $\obsTime\leq\frac{\log n}{(1+\alpha)\log\mu},$ therefore, for $n$ sufficiently large, we have
\[
\Pr(E_2|E_1)\geq 1-\epsilon.
\]
\end{proof}

\subsection{Bounds on the Number of Infected Nodes}
\begin{lemma}\label{lem:E3}
Assume the conditions in Theorem \ref{thm:sourceIsJordanER} hold, for any $\epsilon>0,$ we have
\[
\Pr\left(E_3|E_1\right)\geq 1-\epsilon
\]
for sufficiently large $n.$
\end{lemma}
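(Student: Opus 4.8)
The plan is to condition on the realization of the BFS tree $\tree^\dag$ and then exploit the separation between edge \emph{existence} and edge \emph{liveness} in the live-edge model. The event $E_1$ constrains only the offspring counts $\phi'(v)$, which are determined by edge existence in the ER graph, whereas $E_3$ concerns which of the \emph{existing} tree edges are live; since existence and liveness are independent, conditioning on any fixed $\tree^\dag$ consistent with $E_1$ leaves each tree edge $(w,w')$ live independently with probability $q_{ww'}\ge q$. It therefore suffices to prove $\Pr(E_3\mid \tree^\dag)\ge 1-\epsilon$ uniformly over all $\tree^\dag$ in which every node on levels $0,\dots,\obsTime-1$ has offspring count in $((1-\delta)\mu,(1+\delta)\mu)$, and then average over $\tree^\dag$ conditioned on $E_1$.

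For the base case, $Z^1_1$ equals the number of live edges among the $\phi'(\source)\ge(1-\delta)\mu$ tree edges out of the source, so it stochastically dominates $\hbox{Bi}((1-\delta)\mu,q)$ with mean $(1-\delta)\mu q$; a multiplicative Chernoff lower-tail bound (Lemma~\ref{lem:chernoff}) gives $\Pr(Z^1_1<(1-\delta)^2\mu q)\le\exp(-\tfrac{\delta^2(1-\delta)\mu q}{2})$. For the recursive bound, fix $v\in{\cal Z}^1_1$ and $i\in\{2,\dots,\obsTime\}$. Because a level-$i$ node has graph-distance exactly $i$ from the source, it cannot be infected before time $i$, and its length-$i$ tree path being live is sufficient for it to belong to ${\cal Z}^i_i(v)$; hence $Z^i_i(v)$ is at least the number of live offspring edges emanating from ${\cal Z}^{i-1}_{i-1}(v)$. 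These offspring edges lie strictly between levels $i-1$ and $i$, so they are disjoint from---and their liveness independent of---the edges that determine ${\cal Z}^{i-1}_{i-1}(v)$. Writing $N=Z^{i-1}_{i-1}(v)$, the total number of such offspring edges is $\sum_{w\in{\cal Z}^{i-1}_{i-1}(v)}\phi'(w)\ge(1-\delta)\mu N$, each live with probability $\ge q$, so conditioned on ${\cal Z}^{i-1}_{i-1}(v)$ the count $Z^i_i(v)$ stochastically dominates $\hbox{Bi}((1-\delta)\mu N,q)$. Chernoff again yields
\[
\Pr\!\left(Z^i_i(v)<(1-\delta)^2\mu q\,N \,\big|\, {\cal Z}^{i-1}_{i-1}(v)\right)\le\exp\!\left(-\tfrac{\delta^2(1-\delta)\mu qN}{2}\right)\le\exp\!\left(-\tfrac{\delta^2(1-\delta)\mu q}{2}\right),
\]
where the last step uses $N\ge 1$.

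To finish I would take a union bound over the base event and the at most $\phi'(\source)\,\obsTime\le(1+\delta)\mu\,\obsTime$ pairs $(v,i)$ with $v\in{\cal Z}^1_1$ and $2\le i\le \obsTime$. Each failure probability is at most $\exp(-c\mu)$ with $c=\tfrac{\delta^2(1-\delta)q}{2}$ a constant, so the total failure probability is at most $\big((1+\delta)\mu\,\obsTime+1\big)\exp(-c\mu)$. Since $\mu>3\log n\to\infty$ and $\obsTime\le\frac{\log n}{(1+\alpha)\log\mu}$, the prefactor grows at most polynomially in $n$ while $\exp(-c\mu)$ decays faster than any polynomial, so the product is below $\epsilon$ for all sufficiently large $n$, establishing $\Pr(E_3\mid\tree^\dag)\ge1-\epsilon$ and hence $\Pr(E_3\mid E_1)\ge 1-\epsilon$.

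The step I expect to be most delicate is the conditioning bookkeeping, exactly as in Lemma~\ref{lem:ERAllOffspring}. One must argue carefully that conditioning on $E_1$ (an edge-existence event spanning \emph{all} levels simultaneously) does not bias the liveness coins, and that the per-level Chernoff estimates may be chained even though each ${\cal Z}^{i-1}_{i-1}(v)$ is itself random. The clean way to handle this is to fix the entire tree realization first and only then reveal the liveness of the tree edges one level at a time; the disjointness of the edge sets between consecutive levels is precisely what makes this sequential revelation---and the resulting stochastic-domination arguments---legitimate.
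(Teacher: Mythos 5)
Your proposal is correct and follows essentially the same route as the paper's proof: both establish the base bound on $Z^1_1$ and the per-level bound via stochastic domination of $Z^i_i(v)$ by $\hbox{Bi}\left((1-\delta)\mu Z^{i-1}_{i-1}(v),q\right)$ together with a Chernoff bound, and then combine over levels $i\le t$ and over $v\in{\cal Z}^1_1$ using $t\le \log n\le \mu$. The only differences are bookkeeping: the paper chains conditional success probabilities level by level (inserting the exponentially growing lower bound on $Z^{i-1}_{i-1}(v)$ into the Chernoff exponent), whereas you fix the tree realization up front and use a flat union bound with the crude bound $Z^{i-1}_{i-1}(v)\ge 1$ --- a harmless simplification, since the prefactor is $O(\mu^2)$ and $\mu^2 e^{-c\mu}\to 0$ as $n\to\infty$ (note that with $\mu>3\log n$ the decay need not beat every polynomial, contrary to your phrasing, but this monotonicity argument suffices exactly as in the paper).
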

\begin{proof}
\begin{itemize}
\item We first show that for any $\epsilon>0,$
\[
\Pr(Z^1_1\geq(1-\delta)^2\mu q|E_1)\geq 1-\epsilon
\]
for sufficient large $n.$
$Z^1_1$ is lower bounded by a binomial distribution $\hbox{Bi}((1-\delta)\mu, q).$ Hence using Chernoff bound, we have
\[
\Pr(Z^1_1\geq(1-\delta)(1-\delta)\mu q|E_1)\geq 1-\exp\left(-\frac{\delta^2}{2}(1-\delta)\mu q\right)
\]
Note $\mu\rightarrow \infty$ while all other parameters are constants, for any $\epsilon>0,$ we have
\[
\Pr(Z^1_1\geq(1-\delta)^2\mu q|E_1)\geq 1-\epsilon
\]

\item We show that for any $\epsilon>0,$
\[
\Pr\left(\{\forall v \in {\cal Z}^1_1, Z^{\obsTime}_{\obsTime}(v)\geq [(1-\delta)^2\mu q]^{\obsTime-1} \}|E_1\right)\geq 1-\epsilon
\]
for sufficiently large $n.$
Define
\[
E_4=\{(1-\delta)^2\mu q\leq Z^1_1\leq(1+\delta)\mu\}
\]
Note when $n$ is sufficiently large, the following holds
\[
\Pr(Z^1_1\geq(1-\delta)^2\mu q)\geq 1-\frac{\epsilon}{4}.
\]
When $E_1$ occurs, we have $Z^1_1\leq(1+\delta)\mu.$ Therefore, we have
\[
\Pr(E_4|E_1)\geq 1-\frac{\epsilon}{4}
\]

Define
\begin{align*}
&{\cal S}^t_2(v)=\{({\cal Z}^2_2(v),\cdots,{\cal Z}^t_t(v))| \\
&\cap_{i=2}^t Z^i_i(v)\geq (1-\tilde{\delta} )(1-\delta)\mu qZ^{i-1}_{i-1}(v).\}
\end{align*}
We have
\begin{align}
&\Pr\left(\cap_{i=2}^t Z^{i}_i(v)\geq (1-\tilde{\delta} )(1-\delta)\mu qZ^{i-1}_{i-1}(v)|E_4,E_1\right)\\
=&\Pr\big( Z^{t}_t(v)\geq (1-\tilde{\delta} )(1-\delta)\mu qZ^{t-1}_{t-1}(v),\\
&\cap_{i=2}^{t-1} Z^{i}_i(v)\geq (1-\tilde{\delta} )(1-\delta)\mu qZ^{i-1}_{i-1}(v)|E_4,E_1\big)\\
&= \sum_{{\cal Z}^2_2(v),\cdots,{\cal Z}^{t-1}_{t-1}(v)\in {\cal S}^{t-1}_2(v)}\Pr\big( Z^{t}_t(v)\geq \\
&(1-\tilde{\delta} )(1-\delta)\mu qZ^{t-1}_{t-1}(v)|{\cal Z}^{2}_2(v),\cdots, {\cal Z}^{t-1}_{t-1}(v), E_4,E_1\big)\\
&\times \Pr({\cal Z}^{2}_2(v),\cdots, {\cal Z}^{t-1}_{t-1}(v)| E_4,E_1)\label{eqn:ZiiIterative}
\end{align}

Conditioned on $E_1$ and $E_4,$ we have $Z_1^1\neq 0.$
For any $v\in {\cal Z}^1_1,$ $Z^i_i(v)$ stochastically dominates $\hbox{Bi}((1-\delta)\mu Z^{i-1}_{i-1}(v),q)$ given ${\cal Z}^{i-1}_{i-1}(v).$ Therefore, denote by $\tilde{\delta} \in(0,1),$ we have
\begin{align*}
&\Pr\big( Z^{t}_t(v)\geq (1-\tilde{\delta} )(1-\delta)\mu qZ^{t-1}_{t-1}(v)\\
&|{\cal Z}^{2}_2(v),\cdots, {\cal Z}^{t-1}_{t-1}(v), E_4,E_1\big)\\
\geq &\Pr(Z^t_t(v)\geq (1-\tilde{\delta} )(1-\delta)\mu qZ^{t-1}_{t-1}(v)|{\cal Z}^{t-1}_{t-1}(v),E_4,E_1)\\
\geq & 1-\exp\left(-\frac{\tilde{\delta} ^2(1-\delta)\mu q Z^{t-1}_{t-1}(v)}{2+\tilde{\delta} }\right)\\
\geq & \exp\left(-2\exp\left(-\frac{\tilde{\delta} ^2(1-\delta)\mu q Z^{t-1}_{t-1}(v)}{2+\tilde{\delta} }\right)\right)
\end{align*}
Since we have ${\cal Z}^2_2(v),\cdots,{\cal Z}^{t-1}_{t-1}(v)\in {\cal S}^{t-1}_2(v),$ therefore,
\[
 Z^{t-1}_{t-1}(v)\geq [(1-\tilde{\delta} )(1-\delta)\mu q]^{t-2}
\]
Hence, we have
\begin{align*}
&\Pr\big( Z^{t}_t(v)\geq (1-\tilde{\delta} )(1-\delta)\mu qZ^{t-1}_{t-1}(v)\\
&|{\cal Z}^{2}_2(v),\cdots, {\cal Z}^{t-1}_{t-1}(v), E_4,E_1\big)\\
\geq & \exp\left(-2\exp\left(-\frac{\tilde{\delta} ^2(1-\delta)\mu q [(1-\tilde{\delta} )(1-\delta)\mu q]^{t-2}}{2+\tilde{\delta} }\right)\right)\\
\geq & \exp\left(-2\exp\left(-\frac{\tilde{\delta} ^2[(1-\tilde{\delta} )(1-\delta)\mu q]^{t-1}}{(2+\tilde{\delta}) (1-\tilde{\delta} )}\right)\right)
\end{align*}
Substituting back to Equation (\ref{eqn:ZiiIterative}), we obtain \begin{align}
&\Pr\left(\cap_{i=2}^t Z^{i}_i(v)\geq (1-\tilde{\delta} )(1-\delta)\mu qZ^{i-1}_{i-1}(v)|E_4,E_1\right)\\
&\geq \sum_{{\cal Z}^2_2(v),\cdots,{\cal Z}^{t-1}_{t-1}(v)\in {\cal S}^{t-1}_2(v)}\exp\Big(-2\exp\big(\\
&-\frac{\tilde{\delta} ^2[(1-\tilde{\delta} )(1-\delta)\mu q]^{t-1}}{(2+\tilde{\delta}) (1-\tilde{\delta} )}\big)\Big)\\
&\times \Pr({\cal Z}^{2}_2(v),\cdots, {\cal Z}^{t-1}_{t-1}(v)| E_4,E_1)\\
&=\exp\left(-2\exp\left(-\frac{\tilde{\delta} ^2[(1-\tilde{\delta} )(1-\delta)\mu q]^{t-1}}{(2+\tilde{\delta}) (1-\tilde{\delta} )}\right)\right)\\
&\times\sum_{{\cal Z}^2_2(v),\cdots,{\cal Z}^{t-1}_{t-1}(v)\in {\cal S}^{t-1}_2(v)}\Pr({\cal Z}^{2}_2(v),\cdots, {\cal Z}^{t-1}_{t-1}(v)| E_4,E_1)\\
&=\exp\left(-2\exp\left(-\frac{\tilde{\delta} ^2[(1-\tilde{\delta} )(1-\delta)\mu q]^{t-1}}{(2+\tilde{\delta}) (1-\tilde{\delta} )}\right)\right)\\
&\times \Pr(\cap_{i=2}^{t-1} Z^{i}_i(v)\geq (1-\tilde{\delta} )(1-\delta)\mu qZ^{i-1}_{i-1}(v)| E_4,E_1)
\label{eqn:ZiiIterative2}
\end{align}
Use Equation (\ref{eqn:ZiiIterative2}) iteratively on all levels, we obtain
\begin{align*}
&\Pr\left(\cap_{i=2}^t Z^{i}_i(v)\geq (1-\tilde{\delta} )(1-\delta)\mu qZ^{i-1}_{i-1}(v)|E_4,E_1\right)\\
\geq & \prod_{i=2}^t\exp\left(-2\exp\left(-\frac{\tilde{\delta} ^2[(1-\tilde{\delta} )(1-\delta)\mu q]^{i-1}}{(2+\tilde{\delta}) (1-\tilde{\delta} )}\right)\right)\\
=&  \exp\left(-2\sum_{i=2}^t\exp\left(-\frac{\tilde{\delta} ^2[(1-\tilde{\delta} )(1-\delta)\mu q]^{i-1}}{(2+\tilde{\delta}) (1-\tilde{\delta} )}\right)\right)\\
\geq& \exp\left(-2(t-1)\exp\left(-\frac{\tilde{\delta} ^2(1-\delta )\mu q}{2+\tilde{\delta} }\right)\right)\\
\geq & 1-2(t-1)\exp\left(-\frac{\tilde{\delta} ^2(1-\delta )\mu q}{2+\tilde{\delta} }\right)
\end{align*}
Using the union bound for all $v\in{\cal Z}^1_1,$ we have
\begin{align*}
&\Pr\bigg(\forall v\in{\cal Z}^1_1, \cap_{i=2}^t Z^{i}_i(v)\\
&\geq (1-\tilde{\delta} )(1-\delta)\mu qZ^{i-1}_{i-1}(v)|E_4,E_1\bigg)\\
\geq &1-2(1+\delta)\mu t\exp\left(-\frac{\tilde{\delta} ^2(1-\delta )\mu q}{2+\tilde{\delta} }\right)
\end{align*}
Note $t\leq \frac{\log n }{(1+\alpha)\log \mu}$ and $\mu>3\log n.$ We have $t\leq \log n \leq \mu,$ and

\begin{align*}
&\Pr\bigg(\forall v\in{\cal Z}^1_1, \cap_{i=1}^t Z^{i}_i(v)\\
&\geq (1-\tilde{\delta} )(1-\delta)\mu qZ^{i-1}_{i-1}(v)|E_4,E_1\bigg)\\
&\geq 1-2(1+\delta)\mu^2\exp\left(-\frac{\tilde{\delta} ^2(1-\delta )\mu q}{2+\tilde{\delta} }\right)\\
&\geq 1-\frac{\epsilon}{2}
\end{align*}
for sufficiently large $n.$

Define
\[
E_5 =  \{\forall v\in{\cal Z}^1_1, \cap_{i=2}^t Z^{i}_i(v)\geq (1-\delta)^2\mu qZ^{i-1}_{i-1}(v)\}
\]
We further have,
\begin{align*}
&\Pr\left(E_5|E_1\right)\\
\geq &\Pr\left(E_5|E_4,E_1\right)\\
\times & \Pr(E_4|E_1)\\
\geq & (1-\frac{\epsilon}{4})(1-\frac{\epsilon}{4})\\
\geq & 1-\frac{\epsilon}{2}.
\end{align*}
Choosing $\tilde{\delta}=\delta,$ we have
\begin{align*}
&\Pr\left(\forall v\in{\cal Z}^1_1, \cap_{i=2}^t Z^{i}_i(v)\geq (1-\delta)^2\mu qZ^{i-1}_{i-1}(v)|E_1\right)\\
\geq  & 1-\frac{\epsilon}{2}.
\end{align*}
\end{itemize}
Note $E_3 = E_4\cap E_5.$
We have
\begin{align*}
\Pr(E_3|E_1) &= \Pr( E_4\cap E_5|E_1)\\
&\geq 1-  \Pr( \bar{E}_4|E_1)- \Pr(\bar{E}_5|E_1) \\
&\geq 1-\epsilon.
\end{align*}
where $\bar{E}_4, \bar{E}_5$ are the complement of event $E_4,E_5.$

\end{proof}

\begin{lemma}\label{prop:collisionNodesUpperBounds}
When $E_1$ and $E_2$ occurs, if $\lfloor m^-\rfloor<t\leq\frac{\log n}{(1+\alpha)\log \mu},$ we have
\[
H\leq c[(1+\delta)\mu]^{\frac{3}{4}t+\frac{1}{2}}+c[(1+\delta)\mu]^{(\frac{5}{4}-\frac{\alpha}{2})t+2},
\]
where $c$ is a constant.
\end{lemma}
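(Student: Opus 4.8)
The plan is to bound $H$ by decomposing every counted shortest path at a collision edge, charging each destination node to a collision node, and then summing the tree-expansion after that collision node over the two level-ranges in which $E_2$ controls the collision count. Fix a node $v$ counted in $H$ and a shortest $\source'$-$v$ path of length $\le t$ that meets a collision node; reading it backwards from $v$, let $(a,c)$ be the last collision edge traversed, with $c$ the endpoint nearer to $v$. The portion from $c$ to $v$ then uses only edges of $\tree^\dag$, so $v$ lies on the unique tree path out of $c$ and is within tree-distance $t-\dist_{\source' c}$ of $c$. Under $E_1$ every node has at most $(1+\delta)\mu$ offspring, so the number of nodes within tree-distance $L$ of a fixed node is at most $2[(1+\delta)\mu]^{L}$ (the count is dominated by the downward direction, upward movement contributing only one node per level). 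Writing $\beta=(1+\delta)\mu$, and noting that each $v$ is charged to exactly one collision node, this yields the master bound
\[
H \le \sum_{c} 2\,\beta^{\,t-\dist_{\source' c}},
\]
the sum ranging over collision nodes $c$ with $\dist_{\source' c}\le t$.

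Next I would group the collision nodes by their level $j$ and control two quantities: how many collision nodes sit at level $j$, and how close to $\source'$ they can be. For the first, $E_2$ gives $\collisionsize_j=0$ for $j\le\lfloor m^-\rfloor$, $\collisionsize_j\le 8\mu$ for $\lfloor m^-\rfloor<j<\lceil m^+\rceil$, and $\collisionsize_j\le 4\beta^{2j+1}/n$ for $j\ge\lceil m^+\rceil$, so only levels above $m^-$ contribute and the number of collision nodes at each level is pinned down. The delicate point is the second quantity: $\dist_{\source' c}$ is a \emph{graph} distance, which can be strictly smaller than the tree distance because collision edges act as shortcuts, and the master bound requires a \emph{lower} bound on $\dist_{\source' c}$. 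I would lower-bound it by combining the level gap $|j-k|$ with the scarcity of collisions (a short $\source'$-$c$ path can beat the tree distance only by itself spending collision edges, and $E_2$ limits how often that is possible), so that the number of collision nodes at level $j$ lying within graph-distance $d$ of $\source'$ is simultaneously at most $O(\collisionsize_{j+1})$ and at most the tree-ball count $\beta^{(j-k+d)/2}$.

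Finally I would substitute these estimates into the master bound and evaluate the geometric sums separately over the middle band $\lfloor m^-\rfloor<j<\lceil m^+\rceil$ (where at most $8\mu$ collision edges sit at levels near $m^+\approx\frac{1+\alpha}{2}t$, recalling $\beta^{m^+}=\sqrt n$) and the deep band $\lceil m^+\rceil\le j\le t$ (where $\collisionsize_j$ grows like $\beta^{2j}/n$ but the leftover budget shrinks). The middle band should produce the first term $c\,\beta^{\frac34 t+\frac12}$ and the deep band the second term $c\,\beta^{(\frac54-\frac\alpha2)t+2}$, the exponents emerging from balancing the expansion factor $\beta^{\,t-d}$ against the level-dependent collision counts and using $\tfrac12<\alpha<1$ together with $t\le\frac{\log n}{(1+\alpha)\log\mu}$. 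I expect this last step, together with the graph-distance lower bound, to be the main obstacle: because collision edges are shortcuts, na\"ively charging a full tree-ball of radius $t-|j-k|$ to every collision node overcounts badly and gives a bound as large as $\mu\,\beta^{t}$, so the argument must invoke the collision-scarcity bounds of $E_2$ to show that collision nodes cannot be simultaneously numerous and close to $\source'$, and must track the $\alpha$-dependence carefully to land on exactly the stated exponents.
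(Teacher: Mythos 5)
Your charging scheme is sound and is essentially the paper's first step: the paper likewise reduces $H$ to a union of balls centered at collision nodes (via ``collision removed'' BFS trees rooted at collision nodes) and then, like you, uses only the crude level-gap bound $\dist_{\source' v}\ge |k-j|$ on graph distance. The genuine gap is your ball-size estimate $2[(1+\delta)\mu]^{L}$ for the set of nodes within tree-distance $L$ of a collision node. This ignores that $g_\obsTime$ is truncated at depth $t$ from the source: below a node at level $j'$ the tree has height at most $t-j'$, so under $E_1$ a ball of radius $L\ge t-j$ around a node at level $j$ has size $O\bigl([(1+\delta)\mu]^{(L+t-j)/2}\bigr)$ --- to use up $L$ hops one must first climb before descending, and optimizing the split essentially halves the exponent. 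Because $E_2$ forces every collision node to lie at level $j>\lfloor m^-\rfloor\ge t/2$, this truncation is precisely what produces the exponent $\frac34 t$ in the statement (the ball exponent becomes $t-j+\frac{k}{2}\le\frac34 t$ for $j\ge\max(k,t/2)$, and $t-\frac{k}{2}\le \frac34 t$ for $j<k$); it is the content of the paper's Step 1 recursion $U^{\lambda}(w)\le\sigma^{\min(t-j,\lambda)}(w)+U^{\lambda-1}(\hbox{par}(w))$ and the ensuing case analysis, and it is the only place the stated exponents can come from.

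Your proposed repair --- sharpening the lower bound on $\dist_{\source' c}$ and counting collision nodes near $\source'$ by $\min\bigl(O(\collisionsize_{j+1}),\,[(1+\delta)\mu]^{(j-k+d)/2}\bigr)$ --- cannot close this gap. Take $\source'$ on level $k=t$ and $\Theta(\mu)$ collision nodes that are siblings of $\source'$ (level $j=t$, graph distance $2$ from $\source'$ via their common parent, each carrying one collision edge, say edges among the siblings themselves); this configuration is fully consistent with $E_1\cap E_2$ (the deep-band allowance $\collisionsize_t\le 4[(1+\delta)\mu]^{2t+1}/n$ is far larger than $\mu$ for large $t$) and with both of your refined counting bounds. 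Your accounting then yields $\Theta(\mu)\cdot 2[(1+\delta)\mu]^{t-2}=\Omega\bigl([(1+\delta)\mu]^{t-1}\bigr)$, which exceeds $c[(1+\delta)\mu]^{\frac34 t+\frac12}+c[(1+\delta)\mu]^{(\frac54-\frac{\alpha}{2})t+2}$ once $t>\frac{12}{2\alpha-1}$, i.e.\ for all sufficiently large $n$. The quantity $H$ is in fact much smaller in this configuration, but only because each such ball, being centered at the bottom level of a height-$t$ tree, has size $O\bigl([(1+\delta)\mu]^{(t-2)/2}\bigr)$ rather than $[(1+\delta)\mu]^{t-2}$: your ``dominated by the downward direction'' heuristic fails exactly where all collision nodes are forced to live. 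So the missing idea is the truncation-aware ball estimate, not a refined distance bound; with it, your decomposition would go through along the same lines as the paper's proof.
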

\begin{figure}[h!]
  \centering
    \includegraphics[width=\columnwidth]{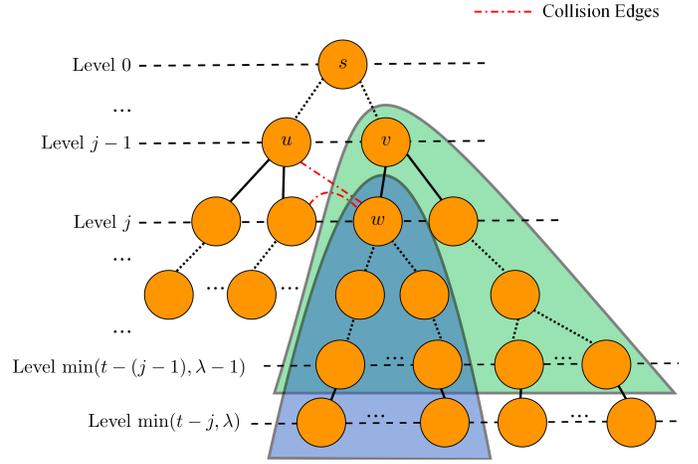}
   \caption{A pictorial example of upper bounds of $H$ }\label{fig:Hupperbound}
\end{figure}
\begin{proof}

Define a \emph{collision removed} breadth-first search tree to be a BFS tree on the graph with all collision nodes removed. Denote by ${\cal U}^{h}(v)$ the set of nodes in the {\bf collision removed} BFS tree from node $v$ with $h$ hops in $g_t$ and $U^h(v)=|{\cal U}^{h}(v)|.$ Denote by $\tilde{{\cal U}}^{h}(v)$ the set of nodes that are within $h$ hops from node $v$ on $g_t.$ Recall that a node $v$ is said to locate on level $j$ if $\dist_{\source v}=j.$ Denote by ${\cal C}_j$ the set of collision nodes on level $j$ in $g_t.$ Therefore, we have
\[
H\leq \left|\cup_{j=0}^t\cup_{v\in {\cal C}_j} \tilde{{\cal U}}^{t-d_{\source' v}}(v)\right|,
\]
where $\tilde{\cal U}^{t-d_{s'v}}(v)$ is the set of nodes that can be reached from $s'$ within $t$ hops via the collision node $v.$
Next, we will prove that
 \[
 \cup_{j=0}^t\cup_{v\in {\cal C}_j} \tilde{{\cal U}}^{t-d_{\source' v}}(v) = \cup_{j=0}^t\cup_{v\in {\cal C}_j} {\cal U}^{t-d_{\source' v}}(v)
 \]
 Since ${\cal U}^{t-d_{\source' v}}(v)\subset \tilde{{\cal U}}^{t-d_{\source' v}}(v),$ we have
 \[
 \cup_{j=0}^t\cup_{v\in {\cal C}_j} \tilde{{\cal U}}^{t-d_{\source' v}}(v) \supseteq \cup_{j=0}^t\cup_{v\in {\cal C}_j} {\cal U}^{t-d_{\source' v}}(v).
 \]
 We only need to show that
 \[
 \cup_{j=0}^t\cup_{v\in {\cal C}_j} \tilde{{\cal U}}^{t-d_{\source' v}}(v) \subseteq \cup_{j=0}^t\cup_{v\in {\cal C}_j} {\cal U}^{t-d_{\source' v}}(v).
 \]
  For any node $w\in \cup_{j=0}^t\cup_{v\in {\cal C}_j} \tilde{{\cal U}}^{t-d_{\source' v}}(v),$ we consider the following cases.
 \begin{itemize}
 \item If $w$ is a collision node, we have $w\in {\cal U}^{t-d_{\source' w}}(w).$ Hence
 $w\in \cup_{j=0}^t\cup_{v\in {\cal C}_j} {\cal U}^{t-d_{\source' v}}(v).$
 \item If $w$ is not a collision node, there exists $v'$ such that $w \in \tilde{{\cal U}}^{t-d_{\source' v'}}(v').$
 \begin{itemize}
 \item  If the shortest path from $w$ to $v'$ does not contain any other collision nodes, we have $w\in {\cal U}^{t-d_{\source' v'}}(v').$
 \item If the shortest path from $w$ to $v'$ contains other collision nodes, denote by $u$ the collision node on that path which is the closest to node $w.$ Therefore, there is no collision node on the shortest path from node $u$ to node $w.$ We have
 \[
 d_{uw}=d_{wv'}-d_{uv'}
 \]
 Note we have $w\in \tilde{{\cal U}}^{t-d_{\source' v'}}(v'),$ therefore $d_{wv'}\leq t-d_{\source' v'}.$ Hence, we have
 \[
 d_{uw}=d_{wv'}-d_{uv'}\leq t-d_{\source' v'}-d_{uv'}\leq_{(a)} t-d_{\source'u},
 \]
 where $(a)$ is due to the triangle inequality. Therefore, we have
 \[
 d_{uw}\leq t-d_{\source'u},
 \]
 and the shortest path from node $u$ to node $w$ contains no collision nodes.
 Hence,
 \[
 w\in {\cal U}^{t-d_{\source' u}}(u)
 \]

 \end{itemize}
 \end{itemize}

As a summary, we proved
\[
 \cup_{j=0}^t\cup_{v\in {\cal C}_j} \tilde{{\cal U}}^{t-d_{\source' v}}(v) = \cup_{j=0}^t\cup_{v\in {\cal C}_j} {\cal U}^{t-d_{\source' v}}(v)
\]
Now we can use the collision removed BFS tree to bound $H$ since the branch of collision nodes of traditional BFS tree are already counted in the collision removed BFS tree rooted at these collision nodes. For example, consider the collision removed BFS tree from node $w$ in Figure \ref{fig:Hupperbound}. We ignore the presence of node $u$ since the branch of node $u$ are already considered in the collision removed BFS tree rooted at node $u.$

Hence, we have
\begin{align*}
H&\leq \left|\cup_{j=0}^t\cup_{v\in {\cal C}_j} \tilde{{\cal U}}^{t-d_{\source' v}}(v)\right| \\
&=\left|\cup_{j=0}^t\cup_{v\in {\cal C}_j} {\cal U}^{t-d_{\source' v}}(v)\right|\\
&\leq \sum_{j=0}^t\sum_{v \in {\cal C}_j}U^{t-d_{\source' v}}(v)
\end{align*}

Since $U^{\lambda}(u)$ is an increasing function of $\lambda$ and $d_{\source'v}\geq |k-j|$ for node $v$ on level $j,$ we have
\[
H\leq \sum_{j=0}^t\sum_{v \in {\cal  C}_j}U^{t-|k-j|}(v).
\]
We next establish the lemma using the following steps.
\begin{itemize}
\item {\bf Step 1: Upper bound on $U^{t-|k-j|}(w).$}
Denote by $\hbox{par}(w)$ the parent of $w$ on the BFS tree from the source and denote by $\hbox{par}^i(w)$ the $i$th ancestor of $w$ on the BFS tree from the source. For example, $v$ is the first ancestor of $w$ ($\hbox{par}(w) = v$) and $\hbox{par}(v)$ is the second ancestor of $w$ ($\hbox{par}^2(w)= \hbox{par}(v)$) as shown in Figure \ref{fig:Hupperbound}. Denote by $\sigma^h(v)$ the number of nodes in the collision removed BFS subtree rooted in node $v$ with height $h$ without branch of $\hbox{par}(v).$

Consider node $w$ in Figure \ref{fig:Hupperbound} and ignore the presence of the collision nodes. Firstly, we remove the branch of the parent of $w.$ The remaining nodes on the tree are below the level of $w$. This is because the level of a node $w'$s neighbor can differ from node $w'$s level by at most one and those neighbors that are at the same or higher levels must be collision nodes. The height of the tree is no larger than the total number of hops $\lambda.$ On the other hand, the tree only contains the nodes within $t$ hops from the actual source $\source$  since the tree is based on the infection subnetwork.  Since $w$ locates on level $j,$ the height of the tree must be no larger than $t-j.$ Therefore, the maximum height of the tree is $\min(t-j,\lambda)$ and denote by $\sigma^{\min(t-j,\lambda)}(w)$ the total number of nodes in the tree as shown in Figure \ref{fig:Hupperbound}.

Next, we consider the branch of the parent of $w$ ($v=\hbox{par}(w)$) in Figure \ref{fig:Hupperbound}. Note $w$ has only one parent $v$ (all other parents are collision nodes thus removed). Since we considered the $\lambda$ hops of the removed collision BFS tree rooted at $w$ and it takes one hop from $w$ to $v,$ the branch of node $v$ in the collision removed BFS tree is contained in ${\cal U}^{\lambda-1}(v).$

Therefore, we have
\begin{align}
U^{\lambda}(w) \leq& \sigma^{\min(t-j,\lambda)}(w)+U^{\lambda-1}(v)\\
=&\sigma^{\min(t-j,\lambda)}(w)+U^{\lambda-1}(\hbox{par}(w))\label{eqn:Urecursive}
\end{align}
Repeatedly using Equation (\ref{eqn:Urecursive}), we have
\begin{align}
U^{\lambda}(w)\leq\sum_{i=0}^{\min(\lambda,j)}\sigma^{\min(t-(j-i),\lambda-i)}(\hbox{par}^i(w)).\label{eqn:Urepeat}
\end{align}
Note the maximum number of hops upward is no larger than $\lambda$ and the total number of levels above $w$ is no larger than $j.$ Therefore, we only need to consider $\min(\lambda,j)$ levels above $w$ in Equation (\ref{eqn:Urepeat}).

Intuitively, the upper bound on $U^{\lambda}(w)$ is a collection of trees rooted at level $j-i$ with height $\min(t-(j-i),\lambda-i),\forall i\leq j.$ For example, in Figure \ref{fig:Hupperbound}, the blue area shows the tree rooted in level $j$ with height $\min(t-j,\lambda)$ and the green area shows in tree rooted in level $j-1$ with height $\min(t-(j-1),\lambda-1).$  In this example, we consider the removed collision BFS tree rooted at $w.$ The blue area is the collision removed BFS tree from $w$ after further removing the branch from $v$ and the green area is the collision removed BFS tree from $v$ by further removing the branch from $\hbox{par}(v).$ The height is no larger than $t-(j-1)$ since node $v$ locates on level $j-1$ and we consider the $t$ hop neighborhood of $s.$ In addition, the height is no larger than $\lambda-1$ since it takes one hop from node $w$ to node $v$ and the total number of possible hops from node $w$ is $\lambda.$

According to $E_1,$ we have
\[
\sigma^{l}(v)\leq\sum_{h=0}^{l}[(1+\delta)\mu]^h,\forall v \in \nodes(g_t).
\]
Hence,
\begin{align}
U^{\lambda}(w)\leq
&\sum_{i=0}^{\min(\lambda,j)}\sum_{h=0}^{\min(\lambda-i,t-(j-i))}[(1+\delta)\mu]^{h}.
\end{align}
Consider $\lambda\in(0,t)$ and $j\in[1,t].$ We obtain an upper bound on $U^{\lambda}(w)$ by analyzing different ranges of $\lambda.$
\begin{itemize}
\item $\lambda < j.$ We have $\min(\lambda,j)=\lambda.$
\begin{itemize}
\item $\lambda < t-j.$ We have $\min(\lambda-i,t-(j-i))=\lambda-i.$
Hence,
\begin{align}
U^{\lambda}(w)\leq
&\sum_{i=0}^{\lambda}\sum_{h=0}^{\lambda-i}[(1+\delta)\mu]^{h}\\
\leq & \sum_{i=0}^{\lambda}2[(1+\delta)\mu]^{\lambda-i}\\
\leq & 4[(1+\delta)\mu]^{\lambda}.
\end{align}
\item $\lambda\geq t-j.$ When $i\leq \frac{\lambda-t+j}{2},$ we have $\lambda-i>t-(j-i).$ Therefore, $\min(\lambda-i,t-(j-i))=t-(j-i).$ Hence,
\begin{align}
&U^{\lambda}(w)\\
\leq
&\sum_{i=0}^{\left\lfloor\frac{\lambda-t+j}{2}\right\rfloor}\sum_{h=0}^{t-j+i}[(1+\delta)\mu]^{h} \\
&+ \sum_{i=\left\lceil\frac{\lambda-t+j}{2}\right\rceil}^{\lambda}\sum_{h=0}^{\lambda -i}[(1+\delta)\mu]^{h}\\
\leq & \sum_{i=0}^{\left\lfloor\frac{\lambda-t+j}{2}\right\rfloor}2[(1+\delta)\mu]^{t-j+i} \\
&+ \sum_{i=\left\lceil\frac{\lambda-t+j}{2}\right\rceil}^{\lambda}2[(1+\delta)\mu]^{\lambda -i }\\
\leq & 4[(1+\delta)\mu]^{t-j+\left\lfloor\frac{\lambda-t+j}{2}\right\rfloor}\\
&+4[(1+\delta)\mu]^{\lambda -\left\lceil\frac{\lambda-t+j}{2}\right\rceil }\\
\leq & 4[(1+\delta)\mu]^{t-j+\frac{\lambda-t+j}{2}}+4[(1+\delta)\mu]^{\lambda -\frac{\lambda-t+j}{2} }\\
\leq & 8[(1+\delta)\mu]^{\frac{\lambda+t-j}{2}}
\end{align}
\end{itemize}
\item $\lambda \geq j.$ We have $\min(\lambda,j)=j.$
\begin{itemize}
\item $\lambda < t-j.$ We have $\min(\lambda-i,t-(j-i))=\lambda-i.$
Hence,
\begin{align}
U^{\lambda}(w)\leq
&\sum_{i=0}^{j}\sum_{h=0}^{\lambda-i}[(1+\delta)\mu]^{h}\\
\leq & \sum_{i=0}^{j}2[(1+\delta)\mu]^{\lambda-i}\\
\leq & 4[(1+\delta)\mu]^{j}.
\end{align}
\item $\lambda\geq t-j.$ When $i\leq \frac{\lambda-t+j}{2},$ we have $\lambda-i>t-(j-i).$ Hence,
\begin{align}
&U^{\lambda}(w)\\
\leq
&\sum_{i=0}^{\left\lfloor\frac{\lambda-t+j}{2}\right\rfloor}\sum_{h=0}^{t-j+i}[(1+\delta)\mu]^{h} \\
&+ \sum_{i=\left\lceil\frac{\lambda-t+j}{2}\right\rceil}^{j}\sum_{h=0}^{\lambda -i}[(1+\delta)\mu]^{h}\\
\leq & \sum_{i=0}^{\left\lfloor\frac{\lambda-t+j}{2}\right\rfloor}2[(1+\delta)\mu]^{t-j+i} \\
&+ \sum_{i=\left\lceil\frac{\lambda-t+j}{2}\right\rceil}^{j}2[(1+\delta)\mu]^{\lambda -i }\\
\leq & 4[(1+\delta)\mu]^{t-j+\left\lfloor\frac{\lambda-t+j}{2}\right\rfloor}+4[(1+\delta)\mu]^{\lambda -\left\lceil\frac{\lambda-t+j}{2}\right\rceil }\\
\leq & 4[(1+\delta)\mu]^{t-j+\frac{\lambda-t+j}{2}}+4[(1+\delta)\mu]^{\lambda -\frac{\lambda-t+j}{2} }\\
\leq & 8[(1+\delta)\mu]^{\frac{\lambda+t-j}{2}}
\end{align}
\end{itemize}
\end{itemize}
As a summary, we have
\[
U^{\lambda}(w)\leq \begin{cases} 4[(1+\delta)\mu]^{\min(\lambda,j)}. & \mbox{if } \lambda< t-j \\
8[(1+\delta)\mu]^{\frac{\lambda+t-j}{2}}& \mbox{if } \lambda\geq t-j.
\end{cases}
\]
For simplicity, we define $U^{\lambda}_j$ to be the upper bound on $U^{\lambda}(w)$  for $w$ on level $j.$ We have
\begin{align}
U^{\lambda}_j= \begin{cases} 4[(1+\delta)\mu]^{\min(\lambda,j)}. & \mbox{if } \lambda< t-j \\
8[(1+\delta)\mu]^{\frac{\lambda+t-j}{2}}& \mbox{if } \lambda\geq t-j.
\end{cases}\label{eqn:upperboundU}
\end{align}
and $U^{\lambda}(w)\leq U^{\lambda}_j$ where the subscript means the level of the nodes and the superscript means the number of hops.

Hence, we have
\[
H\leq  \sum_{j=0}^t\sum_{v \in {\cal  C}_j}U^{t-|k-j|}(v)\leq \sum_{j=0}^t|{\cal C}_j| U^{t-|k-j|}_j
\]
\item{\bf Step 2: Upper bound on $|{\cal C}_j|.$} Recall that ${\cal C}_j$ is the set of collision nodes on level $j.$  Note one collision edge may connect two nodes on the same level or connect one node on level $j$ and one node on level $j-1.$ Therefore, we have
\[
|{\cal C}_j|\leq 2\collisionsize_{j+1},\quad \forall j\leq t-1
\]
For $j=t,$ since we only consider the $t$ hop neighborhood of the actual source $s,$ we have
\[
|{\cal C}_t|\leq 2\collisionsize_{t}
\]
Therefore, we have
\begin{align}
H&\leq 2\collisionsize_{t}U^{t-|k-t|}_{t}+\sum_{j=0}^{t-1}2\collisionsize_{j+1}U^{t-|k-j|}_{j}\\
&=\underbrace{2\collisionsize_{t}U^{k}_{t}}_{(a)}+\underbrace{ \sum_{j=1}^{t}2\collisionsize_{j}U^{t-|k-(j-1)|}_{j-1}}_{(b)}\label{eqn:Hupperbound2}
\end{align}
\item{\bf Step 3} We analyze part $(a)$ and part $(b)$ in equation (\ref{eqn:Hupperbound2}) separately.
\begin{itemize}
\item {\bf Step 3.a: Upper bound on part $(a)$ in Equation (\ref{eqn:Hupperbound2})}

Define
\[
\alpha' = \frac{\alpha}{2}+\frac{1}{4}.
\]
and we have $\alpha'\in(1/2, 3/4).$
Since $t\leq \frac{\log n}{(1+\alpha)\log \mu},$ $\alpha<1$ and $\delta<1,$ we have when $n$ is sufficiently large,
\begin{align}
&t\leq \frac{\log n}{(1+\alpha)\log \mu}\\
&t \leq \frac{\log n}{(1+\alpha')\log [(1+\delta)\mu]}\\
&[(1+\delta)\mu]^{(1+\alpha')t}\leq n,\label{eqn:lowerBoundOnN}
\end{align}

According to Equation (\ref{eqn:upperboundU}), since $k\geq t-t =0,$ we have
\[
U_{t}^k =8[(1+\delta)\mu]^{\frac{k+t-t}{2}} = 8[(1+\delta)\mu]^{\frac{k}{2}}
\]

Based on event $E_2,$  we have
\begin{align*}
2\collisionsize_{t}U^{k}_{t}\leq& \frac{64}{n}[(1+\delta)\mu]^{2t+1+\frac{k}{2}}
\end{align*}
Since $[(1+\delta)\mu]^{(1+\alpha')t}\leq n,$
\[
2\collisionsize_{t}U^{k}_{t}\leq 64[(1+\delta)\mu]^{(1-\alpha')t+1+\frac{k}{2}}
\]
Since $k\leq t,$ we have
\begin{align}
2\collisionsize_{t}U^{k}_{t}\leq 64[(1+\delta)\mu]^{(\frac{3}{2}-\alpha')t+1} \label{eqn:HupperboundpartA}
\end{align}

\item {\bf Step 3.b: Upper bound on part $(b)$ in Equation (\ref{eqn:Hupperbound2})}

Based on event $E_3,$ we have
\[
\sum_{j=1}^{t}2\collisionsize_jU_{j-1}^{t-|k-(j-1)|} = \sum_{j=\lfloor m^-\rfloor+1}^{t}2\collisionsize_jU_{j-1}^{t-|k-(j-1)|}
\]

Therefore, we only consider the cases when $j\geq \lfloor m^-\rfloor+1.$ We will show that
\[
t-|k-(j-1)| \geq t-(j-1),
\]
when $j\geq  \lfloor m^-\rfloor+1.$
As a consequence, we have
\begin{align*}
U_{j-1}^{t-|k-(j-1)|}&=8[(1+\delta)\mu]^{\frac{t-|k-(j-1)|+t-(j-1)}{2}}\\
&=8[(1+\delta)\mu]^{t-\frac{(j-1)+|k-(j-1)|}{2}}
\end{align*}

Based on $t\leq \frac{\log n}{(1+\alpha)\log \mu},$ we have
\begin{align}
\frac{t}{2}&\leq \frac{\log n}{2(1+\alpha)\log \mu}\\
&\leq \lfloor m^-\rfloor \label{eqn:m-lowerbound}
\end{align}
for sufficiently large $n.$ Therefore, we have $j-1\geq \lfloor m^-\rfloor>\frac{t}{2}.$

When $j-1\leq k\leq t,$ we have $|k-(j-1)|\leq \frac{t}{2}\leq j-1.$ When $0\leq k< j-1,$ we have $|k-(j-1)|=(j-1)-k\leq j-1.$ We have
\[
|k-(j-1)| \leq j-1, \forall k\in[0,t].
\]
Hence,
\[
t-|k-(j-1)| \geq t-(j-1).
\]
Therefore, for all the discussions in Step 3.b, based on Equation \ref{eqn:Hupperbound2}, we have
\[
U_{j-1}^{t-|k-(j-1)|}=8[(1+\delta)\mu]^{t-\frac{(j-1)+|k-(j-1)|}{2}}
\]

Based on event $E_2,$  we have
\begin{align*}
&\sum_{j=1}^{t}2\collisionsize_jU_{j-1}^{t-|k-(j-1)|}\\
\leq&128\mu\sum_{j= \lfloor m^-\rfloor+1}^{\lceil m^+\rceil-1}  [(1+\delta)\mu]^{t-\frac{j-1+|k-(j-1)|}{2}}\\
 +& \frac{64}{n}\sum_{j=\lceil m^+\rceil}^{t}[(1+\delta)\mu]^{2j+1+t-\frac{j-1+|k-(j-1)|}{2}}
\end{align*}
Since $[(1+\delta)\mu]^{(1+\alpha')t}\leq n,$
\begin{align*}
&\sum_{j=1}^{t}2\collisionsize_jU_{j-1}^{t-|k-(j-1)|}\\
\leq & 128\mu\sum_{j= \lfloor m^-\rfloor+1}^{\lceil m^+\rceil-1}  [(1+\delta)\mu]^{t-\frac{j-1+|k-(j-1)|}{2}} \\
&+ 64\sum_{j=\lceil m^+\rceil}^{t}[(1+\delta)\mu]^{2j+1-\alpha' t-\frac{j-1+|k-(j-1)|}{2}}\\
\end{align*}
Next, we discuss the upper bounds for different $k$ values. We first show several necessary inequalities.
We have
\begin{align}
m^+-m^-=\frac{2\log \mu+\log 8}{2\log[(1+\delta)\mu]}\leq 1+\frac{\log 8}{2\log \mu}< 2.\label{eqn:m+-inequality}
\end{align}
Recall, we consider the case where $t>\lfloor m^-\rfloor.$ As shown in \ref{eqn:m-lowerbound}, we have
\begin{align}
\lfloor m^-\rfloor\in\left[\frac{t}{2},t\right)\label{eqn:m-allbounds}
\end{align}
Hence, we have
\begin{align}
\lceil m^+\rceil \in\left[\frac{t}{2},t+3\right)\label{eqn:m+allbounds}
\end{align}
Then, we consider the following cases for different $k$ values. Recall that $k$ is the level of node $s'.$

\begin{itemize}
\item $k\leq \lfloor m^-\rfloor.$
We have
\begin{align*}
&\sum_{j=1}^{t}2\collisionsize_jU_{j-1}^{t-|k-(j-1)|}\\
\leq & 128\mu\sum_{j= \lfloor m^-\rfloor+1}^{\lceil m^+\rceil-1}  [(1+\delta)\mu]^{t-\frac{j-1+|k-(j-1)|}{2}} \\
&+ 64\sum_{j=\lceil m^+\rceil}^{t}[(1+\delta)\mu]^{2j+1-\alpha' t-\frac{j-1+|k-(j-1)|}{2}}\\
= & 128\mu\sum_{j= \lfloor m^-\rfloor+1}^{\lceil m^+\rceil-1}  [(1+\delta)\mu]^{t-\frac{j-1-(k-(j-1))}{2}} \\
&+ 64\sum_{j=\lceil m^+\rceil}^{t}[(1+\delta)\mu]^{2j+1-\alpha' t-\frac{j-1-(k-(j-1))}{2}}\\
= & 128\mu\sum_{j= \lfloor m^-\rfloor+1}^{\lceil m^+\rceil-1}  [(1+\delta)\mu]^{t-j+1+\frac{k}{2}} \\
&+ 64\sum_{j=\lceil m^+\rceil}^{t}[(1+\delta)\mu]^{-\alpha' t+j+2+\frac{k}{2}}\\
\leq & 256\mu [(1+\delta)\mu]^{t-\lfloor m^-\rfloor+\frac{k}{2}} \\
&+ 128[(1+\delta)\mu]^{(1-\alpha')t+2+\frac{k}{2}}\\
\leq_{(a)} & 256 [(1+\delta)\mu]^{\frac{3}{4}t}+ 128[(1+\delta)\mu]^{(\frac{3}{2}-\alpha')t+2}
\end{align*}
where $(a)$ is due to $k\leq \lfloor m^-\rfloor$ and $ \frac{t}{2}\leq\lfloor m^-\rfloor< t.$
\item $\lfloor m^-\rfloor+1\leq k\leq \lceil m^+\rceil -2.$ In this case, we have
\[
k\in[\frac{t}{2}+1,t+1),
\]
according to Inequalties (\ref{eqn:m+allbounds}) and (\ref{eqn:m-allbounds}).
Hence, we have
\begin{align*}
&\sum_{j=1}^{t}2\collisionsize_jU_{j-1}^{t-|k-(j-1)|}\\
\leq & 128\mu\sum_{j=\lfloor m^-\rfloor+1}^{k}  [(1+\delta)\mu]^{t-\frac{j-1+|k-(j-1)|}{2}}\\
&+128\mu\sum_{j=k+1}^{\lceil m^+\rceil -1}  [(1+\delta)\mu]^{t-\frac{j-1+|k-(j-1)|}{2}}\\
& + 64\sum_{j=\lceil m^+\rceil}^{t}[(1+\delta)\mu]^{2j+1-\alpha' t-\frac{j-1+|k-(j-1)|}{2}}\\
= & 128\mu\sum_{j=\lfloor m^-\rfloor+1}^{k}  [(1+\delta)\mu]^{t-\frac{j-1+(k-(j-1))}{2}}\\
&+128\mu\sum_{j=k+1}^{\lceil m^+\rceil -1}  [(1+\delta)\mu]^{t-\frac{j-1-(k-(j-1))}{2}} \\
&+ 64\sum_{j=\lceil m^+\rceil}^{t}[(1+\delta)\mu]^{2j+1-\alpha' t-\frac{j-1-(k-(j-1))}{2}}\\
= & 128\mu\sum_{j=\lfloor m^-\rfloor+1}^{k}  [(1+\delta)\mu]^{t-\frac{k}{2}}\\
&+128\mu\sum_{j=k+1}^{\lceil m^+\rceil -1}  [(1+\delta)\mu]^{t-j+1+\frac{k}{2}}\\
& + 64\sum_{j=\lceil m^+\rceil}^{t}[(1+\delta)\mu]^{-\alpha' t+j+2+\frac{k}{2}}\\
\leq_{(a)} & 256\mu[(1+\delta)\mu]^{t-\frac{k}{2}}\\
&+128[(1+\delta)\mu]^{(1-\alpha')t+2+\frac{k}{2}}\\
\leq & 256 [(1+\delta)\mu]^{\frac{3}{4}t+\frac{1}{2}}+ 128[(1+\delta)\mu]^{(\frac{3}{2}-\alpha')t+2},
\end{align*}
where $(a)$ is due to $m^+-m^-<2$ which we proved in Inequality (\ref{eqn:m+-inequality}) and $k\in[\frac{t}{2}+1,t+1).$
\item $k\geq \lceil m^+\rceil-1.$ In this case,
we have
\[
k\in[\frac{t}{2}-1,t],
\]
according to Inequalties (\ref{eqn:m+allbounds}) and (\ref{eqn:m-allbounds}).
 Hence, we have
\begin{align*}
&\sum_{j=1}^{t}2\collisionsize_jU_{j-1}^{t-|k-(j-1)|}\\
\leq & 128\mu\sum_{j= \lfloor m^-\rfloor+1}^{\lceil m^+\rceil-1}  [(1+\delta)\mu]^{t-\frac{j-1+|k-(j-1)|}{2}} \\
&+ 64\sum_{j=\lceil m^+\rceil}^{k}[(1+\delta)\mu]^{2j+1-\alpha' t-\frac{j-1+|k-(j-1)|}{2}}\\
&+64\sum_{j=k+1}^{t}[(1+\delta)\mu]^{2j+1-\alpha' t-\frac{j-1+|k-(j-1)|}{2}}\\
= & 128\mu\sum_{j= \lfloor m^-\rfloor+1}^{\lceil m^+\rceil-1}  [(1+\delta)\mu]^{t-\frac{j-1+(k-(j-1))}{2}} \\
&+ 64\sum_{j=\lceil m^+\rceil}^{k}[(1+\delta)\mu]^{2j+1-\alpha' t-\frac{j-1+(k-(j-1))}{2}}\\
&+64\sum_{j=k+1}^{t}[(1+\delta)\mu]^{2j+1-\alpha' t-\frac{j-1-(k-(j-1))}{2}}\\
= & 128\mu\sum_{j= \lfloor m^-\rfloor+1}^{\lceil m^+\rceil-1}  [(1+\delta)\mu]^{t-\frac{k}{2}} \\
&+ 64\sum_{j=\lceil m^+\rceil}^{k}[(1+\delta)\mu]^{2j+1-\alpha' t-\frac{k}{2}}\\
&+64\sum_{j=k+1}^{t}[(1+\delta)\mu]^{-\alpha' t+j+2+\frac{k}{2}}\\
\leq_{(a)} & 256\mu[(1+\delta)\mu]^{t-\frac{k}{2}}+128[(1+\delta)\mu]^{\frac{3k}{2}-\alpha' t+1}\\
&+128[(1+\delta)\mu]^{(1-\alpha')t+2+\frac{k}{2}}\\
\leq & 256[(1+\delta)\mu]^{\frac{3}{4}t+\frac{1}{2}}+128[(1+\delta)\mu]^{(\frac{3}{2}-\alpha')t+1}\\
&+128[(1+\delta)\mu]^{(\frac{3}{2}-\alpha')t+2}\\
\leq & 256[(1+\delta)\mu]^{\frac{3}{4}t+\frac{1}{2}}+256[(1+\delta)\mu]^{(\frac{3}{2}-\alpha')t+2}\\
\end{align*}
$(a)$ is due to $m^+-m^-<2$  which we proved in Inequality \ref{eqn:m+-inequality} and $k\in[\frac{t}{2}-1,t].$
\end{itemize}
Therefore, we obtain a universal bound for different $k.$
\begin{align}
\sum_{j=1}^{t}2\collisionsize_jU_{j-1}^{t-|k-(j-1)|}&\leq c'[(1+\delta)\mu]^{\frac{3}{4}t+\frac{1}{2}}\\
&+c'[(1+\delta)\mu]^{(\frac{3}{2}-\alpha')t+2},\label{eqn:HupperboundpartB}
\end{align}
for all $k\in[1,t],$ where $c'\geq 256.$
\end{itemize}

\end{itemize}

As a summary, based on Equations (\ref{eqn:Hupperbound2}),(\ref{eqn:HupperboundpartA}) and (\ref{eqn:HupperboundpartB})
\begin{align}
H&\leq c[(1+\delta)\mu]^{\frac{3}{4}t+\frac{1}{2}}+c[(1+\delta)\mu]^{(\frac{3}{2}-\alpha')t+2}\\
&= c[(1+\delta)\mu]^{\frac{3}{4}t+\frac{1}{2}}+c[(1+\delta)\mu]^{(\frac{5}{4}-\frac{\alpha}{2})t+2}
\end{align}
for all $k\in[1,t],$ where $c\geq 257.$
\end{proof}

\section{Proof of Theorem \ref{thm:approximateRatio}}\label{sec:approximateRate}

\begin{proof}
Similar to the proof of Theorem \ref{thm:sourceIsJordanER}, we assume $E_1,E_2$ and $E_3$ occur. The BND one node can have is bounded by the number of all infected nodes. Therefore, the upper bound on BND is the sum of degree of all infected nodes. The edges of one infected node compose three parts: (1) the edge which infects the node; (2) the edges between the node and its offsprings; (3) the collision edges attaching to the node. Therefore, the total degree of all the infected nodes is upper bounded by
\[
\sum_{i=0}^t Z^{\leq t}_i+\sum_{i=0}^t Z^{\leq t}_i[(1+\delta)\mu]+2\collisionsize_{t+1}
\]
To use $\sum_{i=0}^t Z^{\leq t}_i[(1+\delta)\mu]$ above as the upper bound on offsprings, we need to extend $E_1$ to the range of $\levelset_{t}.$ It is easy to check that
\[
E'_1=\{\forall v \in \levelset_{\obsTime}, \phi'(v)\in((1-\delta)\mu,(1+\delta)\mu)\}.
\]
happens with a high probability with the same proof of Lemma \ref{lem:ERAllOffspring}.

A lower bound on BND of the actual source is
\[
Z^t_t[(1-\delta)\mu]
\]
according to $E'_1.$
Therefore, we have
\begin{align}
&\frac{Z^t_t[(1-\delta)\mu]}{\sum_{i=0}^t Z^{\leq t}_i+\sum_{i=0}^t Z^{\leq t}_i[(1+\delta)\mu]+2\collisionsize_{t+1}}\\
&= \frac{(1-\delta)\mu}{\frac{\sum_{i=0}^t Z^{\leq t}_i}{Z^t_t}(1+(1+\delta)\mu)+\frac{2\collisionsize_{t+1}}{Z^t_t}}\\
&\geq_{(a)} \frac{(1-\delta)\mu}{\frac{1+(1+\delta)\mu}{1-\epsilon'}+\frac{2\collisionsize_{t+1}}{Z^t_t}}.\\
&\geq \frac{(1-\delta)}{\frac{\frac{1}{\mu}+(1+\delta)}{1-\epsilon'}+\frac{2\collisionsize_{t+1}}{Z^t_t\mu}}\\
&\geq_{(b)}\frac{(1-\delta)}{\frac{\delta''+(1+\delta)}{1-\epsilon'}+\delta'}\\
&\geq \frac{1-\delta}{1+1.1\delta},\label{eqn:lowerBoundOnDegreeCount}
\end{align}
where inequality (a) holds due to Lemma \ref{lem:infectedNodeRatio1}, inequality (b) is based on Lemma \ref{lem:infectedNodeRatio2} and $\delta'',\delta',\epsilon'$ can be arbitrarily small when $n\rightarrow \infty$.

In the proof of Lemma \ref{lem:ERAllOffspring}, we need $\mu>\frac{2+\delta}{\delta^2}\log n$.
Hence, we have
\[
\frac{Z^t_t[(1-\delta)\mu]}{\sum_{i=0}^t Z^{\leq t}_i+\sum_{i=0}^t Z^{\leq t}_i[(1+\delta)\mu]+2\collisionsize_{t+1}}\geq \frac{1-\delta}{1+1.1\delta}
\]

when $\mu> \frac{2+\delta}{\delta^2}\log n.$

Assume we want the ratio to be $\geq 1-x,$ we have
\[
\frac{1-\delta}{1+1.1\delta} = 1-x.
\]
Therefore
\[
\delta = \frac{x}{2.1-1.1x}
\]
Hence,
\[
\frac{2+\delta}{\delta^2}=\frac{1.32x^2-7.14x+8.82}{x^2}
\]
Therefore, when
\[
\mu > \frac{9}{x^2}\log n>\frac{1.32x^2-7.14x+8.82}{x^2}\log n
\]
we have the ratio $\geq 1-x.$

Note, the condition that $\alpha>\frac{1}{2}$ is not used in the high probability result of $E_1,E_2,E_3.$ Therefore, we only need $\alpha\in(0,1)$ for this theorem. Therefore, the theorem is proved.

\end{proof}
\begin{lemma}\label{lem:infectedNodeRatio1}
If the conditions in Theorem \ref{thm:approximateRatio} hold and events $E_1,$ $E_2$ and $E_3$ occur, we have given any $\epsilon>0$, for sufficiently large $n,$ the following inequality holds
\[
\frac{Z^t_t}{\sum_{i=0}^t Z^{\leq t}_i}\geq 1-\epsilon.
\]
\end{lemma}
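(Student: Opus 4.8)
The plan is to rewrite the target ratio as
\[
\frac{Z^t_t}{\sum_{i=0}^t Z^{\leq t}_i}=\frac{1}{1+\left(\sum_{i=0}^{t-1}Z^{\leq t}_i\right)\big/Z^t_t}
\]
and to show that the correction term tends to $0$. First I would observe that a node on level $t$ lies at graph-distance $t$ from $\source$ and hence cannot be infected before time $t$, so $Z^{\leq t}_t=Z^t_t$ and the last level contributes exactly $Z^t_t$ to the denominator. It therefore suffices to prove $\sum_{i=0}^{t-1}Z^{\leq t}_i\le \epsilon' Z^t_t$ for some $\epsilon'=\epsilon'(n)\to 0$.

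For the denominator I would invoke event $E_3$. Telescoping the per-subtree recursion $Z^j_j(v)\ge (1-\delta)^2\mu q\,Z^{j-1}_{j-1}(v)$ and summing over all $v\in{\cal Z}^1_1$ yields, for every $i$, the comparison
\[
Z^t_t\ge [(1-\delta)^2\mu q]^{t-i}\,Z^i_i ,
\]
and in particular $Z^t_t\ge[(1-\delta)^2\mu q]^t$. The essential point — and the reason a naive estimate fails — is that each lower-level infected count must be compared to $Z^t_t$ through this same growth factor $(1-\delta)^2\mu q$, not through the total-degree factor $(1+\delta)\mu$. If one instead bounds $Z^{\leq t}_i$ by the crude count $b_i\le[(1+\delta)\mu]^i$ of all level-$i$ nodes and divides by $[(1-\delta)^2\mu q]^t$, the ratio carries a factor $\left(\tfrac{1+\delta}{(1-\delta)^2 q}\right)^{t-1}$ whose base exceeds $1$; since $t=\Theta(\log n/\log\mu)$ this blows up precisely when $\mu$ is only $\Theta(\log n)$. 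So I would bound the infected population on level $i$ by its on-time value $Z^i_i$ (plus a late-arrival correction, discussed below) and use the displayed inequality to get $Z^{\leq t}_i\lesssim Z^t_t/[(1-\delta)^2\mu q]^{t-i}$.

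Summing the resulting geometric series in $k=t-i\ge 1$ then gives
\[
\frac{\sum_{i=0}^{t-1}Z^{\leq t}_i}{Z^t_t}\le \sum_{k=1}^{t}\frac{1}{[(1-\delta)^2\mu q]^{k}}\le \frac{2}{(1-\delta)^2\mu q},
\]
and since the hypothesis $\mu>\frac{9}{\delta^2}\log n$ forces $\mu\to\infty$ with $\delta,q$ constant, the right-hand side is an $\epsilon'\to 0$, proving the lemma for sufficiently large $n$. The step I expect to be the main obstacle is controlling the part of $Z^{\leq t}_i$ contributed by level-$i$ nodes infected \emph{after} time $i$: such a late infection needs a live path longer than the shortest path, hence a detour through a collision node, and so it is not captured by the on-time growth in $E_3$. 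I would bound the number of such nodes using the collision-edge estimates of event $E_2$ (the quantities $\collisionsize_j$) and fold them into the same geometric series, verifying that they stay lower-order relative to $Z^t_t$. Making this accounting tight while preserving the common exponential base $(1-\delta)^2\mu q$ throughout is the delicate part of the argument.
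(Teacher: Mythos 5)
Your reduction and your treatment of the on-time population are correct, and they coincide with what the paper does: the paper organizes the same bookkeeping as an induction on the time slot $m$, writing the ratio as $1/(A+B+1)$ with $A=\sum_{i=0}^{m-1}Z^{\leq m-1}_i/Z^m_m$ (your geometric series, via the induction hypothesis and the $E_3$ growth factor $(1-\delta)^2\mu q$) and $B=\sum_{i=0}^{m-1}Z^m_i/Z^m_m$ (the late arrivals). Your observations that $Z^{\leq t}_t=Z^t_t$ and that $\sum_{i=0}^{t-1}Z^i_i\leq \frac{2}{(1-\delta)^2\mu q}Z^t_t$ under $E_3$ are both sound and match the paper's ``Part A.''

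The genuine gap is the late-arrival term, which you explicitly defer, and which constitutes the bulk of the paper's proof --- it is also where the hypothesis $\obsTime\leq \frac{\log n}{(1+\alpha)\log\mu}$ is actually consumed. Your stated plan, to fold the collision contributions ``into the same geometric series \ldots preserving the common exponential base $(1-\delta)^2\mu q$,'' cannot work as written: a node infected after its level time is reached from a collision node through chains of ordinary offsprings, so these contributions propagate with the factor $(1+\delta)\mu$ per level, i.e., with exactly the oversized base $(1+\delta)/\bigl((1-\delta)^2 q\bigr)>1$ that you yourself correctly rejected when dismissing the naive bound; moreover the late arrivals need not follow a per-level geometric profile at all, so the per-level estimate $Z^{\leq t}_i\lesssim Z^t_t/[(1-\delta)^2\mu q]^{t-i}$ is not the right intermediate claim. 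What saves the argument in the paper is not the base of any series but the structure of $E_2$. The paper first proves the counting inequality
\[
\sum_{i=0}^{m-1}Z^m_i\;\leq\;\sum_{j=0}^{m}2\collisionsize_j\,[(1+\delta)\mu]^{m-j},
\]
(every late-infected node traces back, through non-collision offsprings, to some collision node --- this iteration itself requires an argument), and then runs a three-regime analysis: $\collisionsize_j=0$ for $j\leq\lfloor m^-\rfloor$ with $\lfloor m^-\rfloor\geq t/2$, so the deepest half of the levels contributes nothing; the band $\lfloor m^-\rfloor<j<\lceil m^+\rceil$ has width less than $2$ with $\collisionsize_j\leq 8\mu$, and its contribution is small against $Z^m_m\geq[(1-\delta)^2\mu q]^m$ because the propagation depth $m-\lfloor m^-\rfloor-1$ is at most about $m/2$, leaving a net factor $\mu^{-m/2}$; and for $j\geq\lceil m^+\rceil$ one has $\collisionsize_j\leq 4[(1+\delta)\mu]^{2j+1}/n$, where the factor $1/n$ combined with $[(1+\delta)\mu]^{(1+\alpha')t}\leq n$ --- this is precisely where the upper bound on $t$ enters --- defeats the $(1+\delta)\mu$ propagation. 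Identifying $E_2$ as the relevant event is correct, but it is the starting point of the hard part, not its resolution; without this regime-by-regime computation (or an equivalent one) the lemma is not proved.
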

\begin{proof}
For any $\epsilon>0,$ we use induction and assume that
\[
Z_{m-1}^{m-1}\geq (1-\epsilon)\sum_{i=0}^{m-1} Z^{\leq m-1}_i
\]
Consider time slot $m,$ we have
\begin{align*}
\frac{Z^m_m}{\sum_{i=0}^m Z^{\leq m}_i}&=\frac{Z^m_m}{\sum_{i=0}^{m} Z^{\leq m-1}_i+\sum_{i=0}^m Z^m_i}\\
&=_{(a)}\frac{Z^m_m}{\sum_{i=0}^{m-1} Z^{\leq m-1}_i+\sum_{i=0}^{m-1} Z^m_i+Z_m^m}\\
&=\frac{1}{\underbrace{\frac{\sum_{i=0}^{m-1} Z^{\leq m-1}_i}{Z_m^m}}_{\hbox{A}}+\underbrace{\frac{\sum_{i=0}^{m-1}Z_i^m}{Z^m_m}}_{\hbox{B}}+1}
\end{align*}
In (a) we use that $Z_m^{\leq m-1}=0$ and $Z_i^{\leq m}=Z_i^{\leq m-1}+Z_i^m.$

Use induction assumption for part A, we have
\begin{align}
\frac{Z^m_m}{\sum_{i=0}^m Z^{\leq m}_i}\geq & \frac{1}{\frac{Z^{m-1}_{m-1}}{(1-\epsilon) Z_m^m}+\underbrace{\frac{\sum_{i=0}^{m-1}Z_i^m}{Z^m_m}}_{\hbox{B}}+1}\label{eqn:induction}
\end{align}
Based on event $E_3$, we have
\[
\frac{Z^{m-1}_{m-1}}{Z^m_m}\leq \frac{1}{(1-\delta)^2\mu q}
\]
and
\[
Z^m_m\geq [(1-\delta)^2\mu q]^m.
\]
Next, we establish an upper bound on $\sum_{i=0}^{m-1}Z_i^m.$ Note $\sum_{i=0}^{m-1}Z_i^m$ represents the number of nodes which are from level $0$ to level $m-1$ and are infected at time $m.$ Denote by $C({\cal V})$ the set of offsprings of node set ${\cal V}$ who are not collision nodes and were infected by node ${\cal V}.$ Define $C^2({\cal V}) = C(C({\cal V})).$ Recall the number of collsion nodes from level $0$ to level $m-1$ is upper bounded by $2\collisionsize_m.$ We establish an upper bound as following
\[
\sum_{i=0}^{m-1}Z_i^m\leq 2\collisionsize_m+\left| C\left(\cup_{i=0}^{m-2}{\cal Z}_i^{m-1}\right)\right|.
\]
Similarly, we have
\[
\left|\cup_{i=0}^{m-2}{\cal Z}_i^{m-1}\right|\leq 2\collisionsize_{m-1}+\left| C\left(\cup_{i=0}^{m-3}{\cal Z}_i^{m-2}\right)\right|.
\]
Based on $E_1,$ we have
\begin{align*}
\sum_{i=0}^{m-1}Z_i^m\leq 2\collisionsize_m+ 2\collisionsize_{m-1}[(1+\delta)\mu]+ \left| C^2\left(\cup_{i=0}^{m-3}{\cal Z}_i^{m-2}\right)\right|
\end{align*}
Repeating the step above, we have

\[
\sum_{i=0}^{m-1}Z_i^m\leq\sum_{j=0}^{m}2\collisionsize_{j} [(1+\delta)\mu]^{m-j}
\]
Based on $E_2,$ we evaluate the upper bound for different values of $m.$
\begin{itemize}
\item $0<m\leq \lfloor m^-\rfloor.$  We have
\[
\sum_{i=0}^{m-1}Z_i^m=0.
\]
Hence,
\begin{align*}
\frac{Z^m_m}{\sum_{i=0}^m Z^{\leq m}_i}\geq & \frac{1}{\frac{Z^{m-1}_{m-1}}{(1-\epsilon) Z_m^m}+1}\\
\geq &\frac{1}{\frac{1}{(1-\epsilon) (1-\delta)^2\mu q}+1}
\end{align*}
For any $\epsilon>0,$ we have
\[
\frac{1}{(1-\epsilon) (1-\delta)^2\mu q}\leq \epsilon
\]
for sufficiently large $n.$

Therefore, we have
\[
\frac{Z^m_m}{\sum_{i=0}^m Z^{\leq m}_i}\geq (1-\epsilon).
\]

\item $\lfloor m^-\rfloor<m<\lceil m^+\rceil.$ We have
\begin{align*}
\sum_{i=0}^{m-1}Z_i^m\leq& \sum_{j=\lfloor m^-\rfloor+1}^{\lceil m^+\rceil-1} 2\collisionsize_j[(1+\delta)\mu]^{m-j}\\
\leq_{(a)}& 32\mu [(1+\delta)\mu]^{m-\lfloor m^-\rfloor-1}
\end{align*}
$(a)$ is based on the fact that $\collisionsize_j\leq 8\mu$ and $\lceil m^+\rceil-\lfloor m^-\rfloor\leq 2.$

Hence, we have
\begin{align*}
&\frac{\sum_{i=0}^{m-1}Z_i^m}{Z^m_m}\\
\leq &\frac{32\mu [(1+\delta)\mu]^{m-\lfloor m^-\rfloor-1}}{[(1-\delta)^2\mu q]^m}\\
\leq & \frac{32}{\mu} \frac{(1+\delta)^{m-\lfloor m^-\rfloor-1}}{(1-\delta)^{2m}q^m\mu^{\lfloor m^-\rfloor-1}}\\
=& \frac{32}{\mu} \left(\frac{(1+\delta)^{1-\frac{\lfloor m^-\rfloor-1}{m}}}{(1-\delta)^2q\mu^{\frac{\lfloor m^-\rfloor-1}{m}}}\right)^m\\
\leq & \frac{32}{\mu} \left(\frac{(1+\delta)^{1-\frac{\lfloor m^-\rfloor-1}{\lceil m^+\rceil}}}{(1-\delta)^2q\mu^{\frac{\lfloor m^-\rfloor-1}{\lceil m^+\rceil}}}\right)^m
\end{align*}
Note $\lceil m^+\rceil<\lfloor m^-\rfloor+2.$ For sufficiently large $n,$ we have
\[
\frac{\lfloor m^-\rfloor-1}{\lceil m^+\rceil}\geq 1-\frac{3}{\lfloor m^-\rfloor}\geq \frac{1}{2}.
\]
Hence we have
\[
\frac{\sum_{i=0}^{m-1}Z_i^m}{Z^m_m}\leq \frac{32}{\mu} \left(\frac{(1+\delta)^{\frac{1}{2}}}{(1-\delta)^2q\mu^{\frac{1}{2}}}\right)^m\leq \frac{\epsilon}{2}\]
for sufficiently large $n.$

In addition, we have
\[
\frac{1}{(1-\delta)^2q\mu}\leq \frac{\epsilon}{2}
\]
for sufficiently large $n.$
\begin{align}
\frac{Z^m_m}{\sum_{i=0}^m Z^{\leq m}_i}\geq & \frac{1}{\frac{Z^{m-1}_{m-1}}{(1-\epsilon) Z_m^m}+\frac{\sum_{i=0}^{m-1}Z_i^m}{Z^m_m}+1}\\
\geq &\frac{1}{
\frac{\epsilon}{2(1-\epsilon)}+\frac{\epsilon}{2}+1}\geq(1-\epsilon).
\end{align}
\item $\lceil m^+\rceil\leq m\leq t.$ Define
\[
\alpha' = \frac{\alpha}{2}+\frac{1}{4}.
\]
and we have $\alpha'\in(1/4, 3/4).$ Follow the same argument in Equation (\ref{eqn:lowerBoundOnN}), we obtain that
\begin{align}
[(1+\delta)\mu]^{(1+\alpha')t}\leq n.\label{eqn:LowerBoundOnN2}
\end{align}
We have
\begin{align*}
\sum_{i=0}^{m-1}Z_i^m\leq& \sum_{j=\lfloor m^-\rfloor+1}^{\lceil m^+\rceil-1} 2\collisionsize_j[(1+\delta)\mu]^{m-j}\\
&+\sum_{j=\lceil m^+\rceil}^{m-1} 2\collisionsize_j[(1+\delta)\mu]^{m-j}\\
\leq& 32\mu [(1+\delta)\mu]^{m-\lfloor m^-\rfloor-1}\\
&+\frac{8}{n}\sum_{j=\lceil m^+\rceil}^{m-1}[(1+\delta)\mu]^{m+j+1}\\
\leq & 32\mu [(1+\delta)\mu]^{m-\lfloor m^-\rfloor-1}+\frac{16}{n}[(1+\delta)\mu]^{2m}\\
\leq & 32\mu [(1+\delta)\mu]^{m-\lfloor m^-\rfloor-1}\\
&+16[(1+\delta)\mu]^{2m-(1+\alpha')t}
\end{align*}
The last inequality holds based on Inequality (\ref{eqn:LowerBoundOnN2}).

Hence, we have
\begin{align*}
&\frac{\sum_{i=0}^{m-1}Z_i^m}{Z^m_m}\\
\leq &\underbrace{\frac{32\mu [(1+\delta)\mu]^{m-\lfloor m^-\rfloor-1}}{[(1-\delta)^2 q \mu]^m}}_{(A)}+\underbrace{\frac{16[(1+\delta)\mu]^{2m-(1+\alpha')t}}{[(1-\delta)^2 q \mu]^m}}_{(B)}
\end{align*}
$(A)$ has been handled in the previous case. For sufficiently large $n$ we have $(A)\leq \frac{\epsilon}{4}.$

Next, we focus on $(B).$ Since $m\leq t$, for sufficiently large $n,$ we have
\[
\frac{m+1}{t}\leq 1+\frac{\alpha'}{2}.
\]
Hence, for sufficiently large $n$
\begin{align*}
&\frac{16[(1+\delta)\mu]^{2m-(1+\alpha')t}}{[(1-\delta)^2 q \mu]^m}\\
= & \frac{16}{\mu}\left(\frac{(1+\delta)^{\frac{2m}{t}-1-\alpha'}}{(1-\delta)^{2\frac{m}{t}}q^{\frac{m}{t}}}\mu^{\frac{m+1}{t}-1-\alpha'}\right)^t\\
\leq & \frac{16}{\mu}\left(\frac{(1+\delta)^{2-\alpha'}}{(1-\delta)^2 q}\mu^{-\alpha'/2}\right)^t\\
\leq & \frac{\epsilon}{4}.
\end{align*}
Hence, we have
\[
\frac{\sum_{i=0}^{m-1}Z_i^m}{Z^m_m}\leq \frac{\epsilon}{2}
\]
Following the analysis in the previous case, we have
\[
\frac{Z^m_m}{\sum_{i=0}^m Z^{\leq m}_i}\geq 1-\epsilon.
\]
\end{itemize}

As a summary, we proved that
\[
\frac{Z^t_t}{\sum_{i=0}^t Z^{\leq t}_i}\geq 1-\epsilon.
\]
\end{proof}

\begin{lemma}\label{lem:infectedNodeRatio2}
If the conditions in Theorem \ref{thm:approximateRatio} hold and events $E_1,$ $E_2$ and $E_3$ occur, we have given any $\epsilon>0$, for sufficiently large $n,$ the following inequality holds
\begin{align}
\frac{2\collisionsize_{t+1}}{Z^t_t\mu}\leq \epsilon \label{eqn:ratio}
\end{align}
\end{lemma}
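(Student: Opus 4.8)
The plan is to control the ratio by bounding its denominator below via event $E_3$ and its numerator above via event $E_2$, and then to argue that the exponential growth of the number of infected nodes at the last level overwhelms the number of collision edges. For the denominator, $E_3$ gives directly $Z^t_t \geq Z^1_1[(1-\delta)^2\mu q]^{t-1}\geq[(1-\delta)^2\mu q]^t$, hence $Z^t_t\mu\geq \mu[(1-\delta)^2\mu q]^t$; this is the only place I use $E_3$. For the numerator I need a bound on $\collisionsize_{t+1}$, which lies one level beyond the range covered by $E_2$. Conditioned on $E_1$ (extended to level $t$ exactly as $E'_1$ in the proof of Theorem \ref{thm:approximateRatio}), $\collisionsize_{t+1}$ is stochastically dominated by $\hbox{Bi}(4[(1+\delta)\mu]^{2(t+1)},p)$, so the same Chernoff computation used in Lemma \ref{lem:E1} extends the three-regime bound of $E_2$ to $j=t+1$; I would state this extension explicitly and then proceed by cases on where $t+1$ falls.

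If $t+1\leq\lfloor m^-\rfloor$, then $\collisionsize_{t+1}=0$ and the ratio vanishes. If $\lfloor m^-\rfloor<t+1<\lceil m^+\rceil$, then $\collisionsize_{t+1}\leq 8\mu$, so
\[
\frac{2\collisionsize_{t+1}}{Z^t_t\mu}\leq\frac{16}{[(1-\delta)^2\mu q]^t}\leq\frac{16}{(1-\delta)^2\mu q},
\]
which tends to $0$ because $\mu\to\infty$ and $\delta,q$ are constants. The only substantive case is $t+1\geq\lceil m^+\rceil$, where the extended bound gives $\collisionsize_{t+1}\leq \frac{4[(1+\delta)\mu]^{2t+3}}{n}$. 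Substituting both bounds yields
\[
\frac{2\collisionsize_{t+1}}{Z^t_t\mu}\leq\frac{8[(1+\delta)\mu]^{2t+3}}{n\,\mu\,[(1-\delta)^2\mu q]^t}.
\]

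To finish I would invoke the inequality $[(1+\delta)\mu]^{(1+\alpha')t}\leq n$ with $\alpha'=\frac{\alpha}{2}+\frac14$, established in the proof of Lemma \ref{lem:infectedNodeRatio1} (Equation (\ref{eqn:LowerBoundOnN2})), to eliminate $n$ from the denominator and collapse the right-hand side to
\[
8(1+\delta)^3\mu^2\left(\frac{(1+\delta)^{1-\alpha'}}{(1-\delta)^2 q\,\mu^{\alpha'}}\right)^t.
\]
The main obstacle is this last expression: a geometric factor in $t$ with base $\frac{(1+\delta)^{1-\alpha'}}{(1-\delta)^2 q\,\mu^{\alpha'}}$, multiplied by a polynomial prefactor $\mu^2$. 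Since $\alpha'>0$ is a constant and $\mu\to\infty$, the base is eventually below $1$, so the term decays geometrically in $t$; the delicate step is to confirm that this decay beats the $\mu^2$ prefactor. This is exactly where being in the regime $t+1\geq\lceil m^+\rceil$ is essential, since it forces $t\gtrsim m^+-1\approx\frac{\log n}{2\log\mu}$, so that $\mu^{\alpha' t}$ grows like a fixed power of $n$. Confirming that this growth dominates the $\mu^2$ prefactor, i.e.\ that $\mu^{2-\alpha' t}\to0$, is the one genuinely careful step (it amounts to checking how large $t$, equivalently how moderate $\mu$, must be); the remainder is a direct substitution of already-proven bounds.
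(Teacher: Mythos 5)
Your proposal is correct and follows essentially the same route as the paper's proof: lower-bound $Z^t_t$ via $E_3$, extend the collision-edge bound of $E_2$ to level $t+1$ by rerunning the Chernoff argument of Lemma \ref{lem:E1} conditioned on the extended event $E'_1$, split into regimes according to where $t+1$ falls relative to $\lfloor m^-\rfloor$ and $\lceil m^+\rceil$, and in the top regime substitute both bounds and eliminate $n$ using the hypothesis on $t$. (Incidentally, you are more careful than the paper in one spot: the paper states $Z^t_t\geq[(1-\delta)^2\mu]^t$, silently dropping the factor $q$ that $E_3$ actually carries; your version with $q$ is the right one and nothing changes downstream.) The one step you leave flagged as delicate --- making the geometric decay beat the $\mu^2$ prefactor --- is closed in the paper by a different factorization of the same quantity: rather than your form
\[
8(1+\delta)^3\mu^2\left(\frac{(1+\delta)^{1-\alpha'}}{(1-\delta)^2 q\,\mu^{\alpha'}}\right)^{t},
\]
the paper pulls the entire expression into the shape
\[
\frac{8}{\mu}\left(\frac{1+\delta}{(1-\delta)^2}\cdot\frac{1}{\mu^{\frac{\alpha t-3}{2t+3}}}\right)^{2t+3},
\]
so that the prefactor $8/\mu$ already tends to zero and one only needs the bracketed base to be at most $1$, which holds as soon as $\alpha t>3$ and $\mu$ is large; there is then nothing left to trade off against. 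Note this is not a free lunch: both your factorization and the paper's require $t$ to exceed a constant multiple of $1/\alpha$, and the paper disposes of this with the phrase ``for sufficiently large $t$,'' which is justified exactly by the mechanism you describe (being in the regime $t+1\geq\lceil m^+\rceil$ forces $t\geq m^+ -1$, and $m^+\rightarrow\infty$ when $\log\mu=o(\log n)$). So on this point your proposal makes the same appeal as the paper and is no less rigorous; the paper's rewriting merely makes the conclusion immediate once that appeal is granted.
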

\begin{proof}

Note the upper bound of $\collisionsize_{t+1}$ can be obtained by a same proof of Lemma \ref{lem:E1} and the conclusions are the same when we extend the range from $t$ to $t+1$. Based on Lemma \ref{lem:E3}, we have
\[
Z^t_t\geq [(1-\delta)^2\mu]^t.
\]

When $t< \lceil m^+\rceil,$ Inequality \ref{eqn:ratio} trivially holds.

For $t\geq \lceil m^+\rceil,$ we have
\begin{align}
&\frac{2\collisionsize_{t+1}}{Z^t_t\mu}\\
\leq & \frac{8[(1+\delta)\mu]^{2t+3}}{n\mu[(1-\delta)^2\mu]^t}\\
= & \frac{8(1+\delta)^{2t+3}}{(1-\delta)^{2t}}\frac{1}{\mu^{\alpha t -2}}\\
= & \frac{8}{\mu}\left(\frac{1+\delta}{(1-\delta)^2} \times\frac{1}{\mu^{\frac{\alpha t -3}{2t+3}}}\right)^{2t+3}
\end{align}
Note $\frac{\alpha t -3}{2t+3}>0$ for sufficiently large $t$. Therefore, we have
\[
\frac{2\collisionsize_{t+1}}{Z^t_t\mu}\leq \epsilon.
\]
For sufficiently large $n.$
\end{proof}

\section{Proof of Lemma \ref{thm:diameterER}}\label{sec:proofImpossibility}
We present the proof of Theorem 4.2 from \cite{DraMas_10} with some minor changes to provide a more specific lower bound on $\mu q.$ This proof is included for the sake of completeness and is not a contribution of this paper.
\begin{proof}
Given some $\epsilon > 0,$ define
\[
d_{j}^{\pm}=\begin{cases} (1\pm\epsilon)^j\mu^j &\hbox{if } j =1,2, \\
(1\pm\epsilon)^2(1\pm \frac{\epsilon}{\mu})^{j-2}\mu^j & \hbox{if } j=3,\cdots,D'.
\end{cases}
\]
where $D'= \left\lceil\frac{\log n}{ 2\log \mu}\right\rceil.$
Define
\[
\Gamma_i(u) = \{v: d^g_{uv} = i\},
\]
and
\[
d_i(u)=|\Gamma_i(u)|.
\]
We first prove the following lemma.
\begin{lemma}
Let $\epsilon>0$ be fixed. Define for all $u\in\{1,\cdots,n\}$ and all $i=1,\cdots,D',$ the event $E_i(u)$ by
\[
E_i(u)=\{d_i^-\leq d_i(u)\leq d_i^+\}.
\]
Assumes that $\gamma_l\log n\leq\mu<< \sqrt{n},$ for large enough $n$,we have
\[
\Pr(E_i(u))\geq 1-D'n^{-\frac{\gamma_l\epsilon^2}{2+\epsilon}},~ u\in\{1,\cdots,n\},~i=1,\cdots, D'.
\]
\end{lemma}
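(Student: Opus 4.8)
The plan is to prove the sub-lemma by induction on the level $i$, coupling the growth of the BFS neighbourhoods of $u$ to a binomial recursion. First I would reveal the graph by breadth-first search from $u$ and let $\mathcal F_{i-1}$ be the information generated by the first $i-1$ levels, i.e.\ all edges incident to $\Gamma_0(u)\cup\cdots\cup\Gamma_{i-2}(u)$. Writing $N_{i-1}=\sum_{j=0}^{i-1}d_j(u)$ for the number of already-explored vertices, a vertex $w$ that is still unexplored lies in $\Gamma_i(u)$ exactly when it has at least one edge into the current frontier $\Gamma_{i-1}(u)$; since none of $w$'s edges have been examined yet, these events are independent across $w$. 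Hence, conditionally on $\mathcal F_{i-1}$,
\[
d_i(u)\sim \mathrm{Bi}\!\left(n-N_{i-1},\,1-(1-p)^{d_{i-1}(u)}\right).
\]
This conditional-binomial description is the engine of the whole argument.

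Next I would estimate the conditional mean. Using $pd-\binom{d}{2}p^2\le 1-(1-p)^d\le pd$ together with, on the event $\bigcap_{j<i}E_j(u)$, the bounds $d_{i-1}(u)\in[d_{i-1}^-,d_{i-1}^+]$ and $N_{i-1}\le 2d_{i-1}^+\ll n$, the mean $\mathbb E[d_i(u)\mid\mathcal F_{i-1}]$ is pinched between $\mu d_{i-1}^+$ from above and $\mu d_{i-1}^-\,(1-O(\epsilon/\mu))$ from below, where the downward correction collects the vertex-depletion factor $N_{i-1}/n$ and the second-order term $\binom{d_{i-1}}{2}p^2$. I would then apply the Chernoff bound (Lemma~\ref{lem:chernoff}) to this conditional binomial at each level. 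At level $1$, $d_1(u)$ is dominated by $\mathrm{Bi}(n,p)$ with mean $\mu$; the two-sided deviation to $(1\pm\epsilon)\mu$ fails with probability at most $2\exp(-\frac{\epsilon^2}{2+\epsilon}\mu)$, the weaker upper tail producing the exponent $\frac{\epsilon^2}{2+\epsilon}\mu$, which is $\le n^{-\gamma_l\epsilon^2/(2+\epsilon)}$ because $\mu\ge\gamma_l\log n$. At level $2$ the affordable relative deviation is still $\epsilon$ while the conditional mean is $\asymp\mu^2$, so the failure probability is at most $\exp(-\frac{\epsilon^2}{2+\epsilon}\mu^2)$, far below the target. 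For $i\ge 3$ the affordable relative deviation drops to $\epsilon/\mu$, but the conditional mean has grown to $\asymp\mu^{\,i}\ge\mu^3$, so the Chernoff exponent is $\asymp\frac{(\epsilon/\mu)^2}{2+\epsilon/\mu}\mu^{\,i}\ge\frac{\epsilon^2}{2+\epsilon}\mu^{\,i-2}\ge\frac{\epsilon^2}{2+\epsilon}\mu$, again at most $n^{-\gamma_l\epsilon^2/(2+\epsilon)}$. This is precisely why the definition of $d_i^\pm$ switches from the $(1\pm\epsilon)$ factors at $i=1,2$ to the $(1\pm\epsilon/\mu)$ factors at $i\ge3$: the two regimes are calibrated so that every level fails with the same rate.

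Finally I would close the induction with a union bound: since $E_i(u)\supseteq\bigcap_{j=1}^{i}E_j(u)$,
\[
\Pr(E_i(u))\ge 1-\sum_{j=1}^{i}\Pr\!\left(\bar E_j(u)\,\Big|\,\textstyle\bigcap_{l<j}E_l(u)\right)\ge 1-2D'\exp\!\left(-\tfrac{\epsilon^2}{2+\epsilon}\mu\right)\ge 1-D'\,n^{-\gamma_l\epsilon^2/(2+\epsilon)}
\]
for $n$ large, absorbing the constant $2$ into the exponent. The main obstacle is not the concentration at any single level but controlling the \emph{accumulation} of the per-step multiplicative errors over all $D'$ levels: one has to verify that both the depletion factor $N_{i-1}/n$ and the second-order term $\binom{d_{i-1}}{2}p^2$ stay within the $\epsilon/\mu$ budget simultaneously for every $i\le D'$. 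This is exactly where the truncation at $D'=\lceil\log n/(2\log\mu)\rceil$ — which keeps $d_i^+=O(\mu^{\,i})=O(\sqrt n)$ so the explored neighbourhood never consumes a non-negligible fraction of the $n$ vertices — and the hypothesis $\mu\ll\sqrt n$ are indispensable.
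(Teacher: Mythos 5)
Your proposal follows essentially the same route as the paper's proof: the BFS-exploration conditional binomial ${\cal L}(d_i(u)\mid d_1(u),\dots,d_{i-1}(u))=\hbox{Bi}\bigl(n-1-\sum_{j<i}d_j(u),\,1-(1-p)^{d_{i-1}(u)}\bigr)$, stochastic comparison on the event $\bigcap_{j<i}E_j(u)$, Chernoff bounds calibrated with relative deviation $\epsilon$ at levels $1,2$ and $\epsilon/\mu$ at levels $i\ge 3$ (compensated by the conditional mean growing like $\mu^i$), and the iterated-conditioning union bound over the $D'$ levels. The technical caveat you flag---keeping the depletion factor and the second-order Taylor term within the $\epsilon/\mu$ budget so the lower-tail deviation parameter stays at least $(1-\delta)\epsilon/\mu$---is precisely the step the paper handles (somewhat tersely) via the bound $d_j^+\le(1+\epsilon)^2(1+o(1))\sqrt{n}$ together with the hypothesis $\mu\ll\sqrt{n}$, so your outline matches the paper's argument in both structure and level of rigor.
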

\begin{proof}
Let $u\in\{1,\cdots,n\}$ and $i\in\{1,\cdots,D'\}$ be fixed. Note that, conditional on $d_1(u),\cdots,d_{i-1}(u),d_i(u)$ admits a binomial distribution with parameters
\begin{align*}
&{\cal L}(d_i(u)|d_1(u),\cdots,d_{i-1}(u))\\
=&\hbox{Bi}(n-1-d_1(u)-\cdots-d_{i-1}(u),1-(1-p)^{d_{i-1}(u)})
\end{align*}
where ${\cal L}(X|{\cal F})$ is the distribution of the random variable $X$ conditional on the event ${\cal F}.$ Denote by $\bar{E}_i(u)$ the complement of ${E}_i(u).$ It readily follows that
\begin{align*}
&\Pr(\bar{E}_i(u)|{E}_1(u),\cdots,{E}_{i-1}(u))\\
\leq&\Pr(\hbox{Bi}(n,1-(1-p)^{d_{i-1}^+})\geq d_{i}^+)\\
+&\Pr(\hbox{Bi}(n-1-d_1^+-\cdots-d_{i-1}^+,1-(1-p)^{d_{i-1}^-})\leq d_{i}^-).
\end{align*}
Note that, for all $j<D',$ one has
\begin{align*}
& d_j^-\leq d_j^+\leq d_{D'-1}^+ \\
&= (1+\epsilon)^2\left(1+\frac{\epsilon}{\mu}\right)^{D'-3}\mu^{D'-1}\\
&=\left(\frac{\mu(1+\epsilon)}{\mu+\epsilon}\right)^2(\mu+\epsilon)^{D'-1}\\
&\leq \left(\frac{\mu(1+\epsilon)}{\mu+\epsilon}\right)^2(\mu+\epsilon)^{\frac{\log n}{2\log \mu}}\\
& = \left(\frac{\mu(1+\epsilon)}{\mu+\epsilon}\right)^2\exp\left(\frac{\log n}{2}\frac{\log(\mu+\epsilon)}{\log\mu}\right)\\
& = \left(\frac{\mu(1+\epsilon)}{\mu+\epsilon}\right)^2\exp\left(\frac{\log n}{2}\right)\\
&\times  \exp\left(\frac{\log n}{2}\left(\frac{\log(\mu+\epsilon)}{\log\mu}-1\right)\right)\\
&= \left(\frac{\mu(1+\epsilon)}{\mu+\epsilon}\right)^2\exp\left(\frac{\log n}{2}\right)\\
&\times \exp\left(\frac{\log n}{2}\frac{\log(\mu+\epsilon)-\log\mu}{\log\mu}\right)\\
&= \left(\frac{\mu(1+\epsilon)}{\mu+\epsilon}\right)^2\\
\exp\left(\frac{\log n}{2}\right)&\times \exp\left(\frac{\log n}{2}\frac{\log(1+\epsilon/\mu)}{\log\mu}\right)
\end{align*}
Note $\log(1+x)\leq x$ for $x\geq 0.$
We have
\begin{align*}
& \leq \left(\frac{\mu(1+\epsilon)}{\mu+\epsilon}\right)^2\sqrt{n}\exp\left(\frac{\log n}{2}\frac{\epsilon}{\mu\log\mu}\right)\\
&\leq (1+\epsilon)^3\sqrt{n}
\end{align*}
Since  $\mu\geq \gamma_1\log n$, we have
\begin{align*}
 &\leq \left(\frac{\mu(1+\epsilon)}{\mu+\epsilon}\right)^2\sqrt{n}\exp\left(\frac{1}{2}\frac{\epsilon}{\gamma_l\log\mu}\right)\\
 &\leq (1+\epsilon)^2\sqrt{n}\exp\left(\frac{1}{2}\frac{\epsilon}{\gamma_l\log\mu}\right)
\end{align*}
For sufficiently large $n,$ we have $\mu$ is sufficiently large and $\exp\left(\frac{1}{2}\frac{\epsilon}{\gamma_l\log\mu}\right)\rightarrow 1$ as $n\rightarrow \infty.$ Hence, we have
\[
=(1+\epsilon)^2(1+o(1))\sqrt{n}
\]
Next, we compute the mean of $\hbox{Bi}(n,1-(1-p)^{d_{i-1}^+})$ and $\hbox{Bi}(n-1-d_1^+-\cdots-d_{i-1}^+,1-(1-p)^{d_{i-1}^-}).$

Since $i-1\leq D'-1,$ we have $d^+_{i-1}p=d^+_{i-1}\frac{\mu}{n}\rightarrow 0$ as $n\rightarrow \infty.$
Based on Taylor expansion, we have
we have
\[
(1-p)^{d^+_{i-1}} = 1-d^+_{i-1}p + o(d^+_{i-1}p)
\]
Hence,
\begin{align*}
&n(1-(1-p)^{d^+_{i-1}}) \\
=& d^+_{i-1}pn-o(d^+_{i-1}pn) = (1-o(1))d^+_{i-1}\mu
\end{align*}
Note
\begin{align*}
&n-1-d^+_1-\cdots-d^+_{i-1} \geq n-D'(1+\epsilon)^2(1+o(1))\sqrt{n} \\
\geq &n-(1+\epsilon)^2(1+o(1))\log n\sqrt{n}
\end{align*}
Therefore
\begin{align*}
&(n-1-d^+_1-\cdots-d^+_{i-1})(1-(1-p)^{d^-_{i-1}})\\
&\geq (n-(1+\epsilon)^2(1+o(1)\log n\sqrt{n})(d^-_{i-1}p - o(d^-_{i-1}p))\\
&= d^-_{i-1}pn-o(d^-_{i-1}pn)-d^-_{i-1}p(1+\epsilon)^2(1+o(1)\log n\sqrt{n}\\
&+o(d^-_{i-1}p(1+\epsilon)^2(1+o(1)\log n\sqrt{n})\\
&\geq (1-o(1))d^-_{i-1}\mu
\end{align*}
Using the Chernoff bound, we have
\begin{align*}
&\Pr\left(\hbox{Bi}(n,1-(1-p)^{d^+_{i-1}})\geq d^+_i\right)\\
&\leq \exp\left(-\frac{\xi^2}{2+\xi}n\left(1-(1-p)^{d^+_{i-1}}\right)\right)
\end{align*}
where
\[
(1+\xi)n\left(1-(1-p)^{d^+_{i-1}}\right) = d^+_{i}
\]
Therefore,
\[
\xi = \frac{d^+_{i}}{(1-o(1))d^+_{i-1}\mu}-1 \geq \begin{cases} \epsilon &\hbox{if } j =1,2, \\
\frac{\epsilon}{\mu} & \hbox{if } j=3,\cdots,D'.
\end{cases}
\]
Therefore, when $i=1,2$ we have
\begin{align*}
&\Pr\left(\hbox{Bi}(n,1-(1-p)^{d^+_{i-1}})\geq d^+_i\right)\\
&\leq \exp\left(-\frac{\xi^2}{2+\xi}(1-o(1))d^+_{i-1}\mu\right)\\
&\leq \exp\left(-\frac{\xi^2}{2+\xi}(1-o(1))\mu\right)\\
&\leq \exp\left(-\frac{\xi^2}{2+\xi}(1-o(1))\gamma_l \log n\right)\\
& \leq n^{-\gamma_l(1-o(1))\frac{\epsilon^2}{2+\epsilon}}
\end{align*}
when $i>2,$ we have
\begin{align*}
&\Pr\left(\hbox{Bi}(n,1-(1-p)^{d^+_{i-1}})\geq d^+_i\right)\\
&\leq \exp\left(-\frac{\xi^2}{2+\xi}(1-o(1))d^+_{i-1}\mu\right)\\
\end{align*}
Note, since $i>2,$ we have $d^+_{i-1}\geq (1+\epsilon)^2\mu^2.$ Hence, we have
\begin{align*}
&\leq \exp\left(-\frac{\xi^2}{2+\xi}(1-o(1)(1+\epsilon)^2\mu^3\right)\\
&\leq \exp\left(-\epsilon^2(1-o(1))(1+\epsilon)\mu\right)\\
&\leq n^{-\epsilon^2(1+\epsilon)(1-o(1)\gamma_l}
\end{align*}
Therefore, we have for all $i\leq D',$
\[
\Pr\left(\hbox{Bi}(n,1-(1-p)^{d^+_{i-1}})\geq d^+_i\right)\leq n^{-\gamma_l(1-o(1))\frac{\epsilon^2}{2+\epsilon}}
\]

Similarly, using the Chernoff bound, we have
\begin{align*}
&\Pr(\hbox{Bi}(n-1-d_1^+-\cdots-d_{i-1}^+,1-(1-p)^{d_{i-1}^-})\leq d_{i}^-)\\
&\leq \exp\left(-\frac{\xi'^2}{2}(n-1-d_1^+-\cdots-d_{i-1}^+)(1-(1-p)^{d_{i-1}^-})\right)
\end{align*}
where
\[
(1-\xi')(n-1-d_1^+-\cdots-d_{i-1}^+)(1-(1-p)^{d_{i-1}^-}) = d^-_{i}
\]
Therefore,
\[
\xi' = 1-\frac{d^-_{i}}{(1-o(1))d^-_{i-1}\delta}\geq \begin{cases} (1-\delta)\epsilon &\hbox{if } j =1,2, \\
(1-\delta)\frac{\epsilon}{\mu} & \hbox{if } j=3,\cdots,D'.
\end{cases}
\]
for any fixed $\delta \in(0,1)$ when $n$ is sufficiently large.
Therefore, when $i=1,2$ we have
\begin{align*}
&\Pr(\hbox{Bi}(n-1-d_1^+-\cdots-d_{i-1}^+,1-(1-p)^{d_{i-1}^-})\\
&\leq \exp\left(-\frac{\xi'^2}{2}(1-o(1)d^-_{i-1}\mu\right)\\
&\leq \exp\left(-\frac{\xi'^2}{2}(1-o(1)\mu\right)\\
&\leq \exp\left(-\frac{\xi'^2}{2}(1-o(1)\gamma_l \log n\right)\\
& \leq n^{-\gamma_l(1-o(1))\frac{(1-\delta)^2\epsilon^2}{2}}
\end{align*}
when $i>2,$ we have
\begin{align*}
&\Pr(\hbox{Bi}(n-1-d_1^+-\cdots-d_{i-1}^+,1-(1-p)^{d_{i-1}^-})\\
&\leq \exp\left(-\frac{\xi'^2}{2}(1-o(1))d^-_{i-1}\mu\right)\\
\end{align*}
Note, since $i>2,$ we have $d^-_{i-1}\geq (1-\epsilon)^2\mu^2.$ Hence, we have
\begin{align*}
&\leq \exp\left(-\frac{\xi'^2}{2}(1-o(1)(1-\epsilon)^2\mu^3\right)\\
&\leq \exp\left(-\frac{1}{2}(1-\delta)^2\epsilon^2(1-o(1))(1+\epsilon)\mu\right)\\
&\leq n^{-\frac{1}{2}(1-\delta)^2\epsilon^2(1-o(1))(1+\epsilon)\gamma_l}
\end{align*}
Therefore, we have for all $i\leq D',$
\begin{align*}
&\Pr(\hbox{Bi}(n-1-d_1^+-\cdots-d_{i-1}^+,1-(1-p)^{d_{i-1}^-})\\
&\leq n^{-\frac{\gamma_l(1-o(1))(1-\delta)^2\epsilon^2}{2}}
\end{align*}
Next, using union bounds, we have
\begin{align*}
&\Pr(E_i(u))\\
&\geq \Pr(E_1(u),\cdots,E_i(u))\\
&\geq \Pr(E_1(u),\cdots,E_{i-1}(u))\\
&-\Pr(\bar{E}_i(u)|E_1(u),\cdots,E_{i-1}(u))\\
&\geq 1-\sum_{j=1}^{i}\Pr(\bar{E}_j(u)|E_1(u),\cdots,E_{j-1}(u))\\
&\geq 1-D'n^{-\frac{\gamma_l(1-o(1))(1-\delta)^2\epsilon^2}{2}}-D'n^{-\gamma_l(1-o(1))\frac{\epsilon^2}{2+\epsilon}}\\
& \geq 1-D'n^{-\frac{\gamma_l\epsilon^2}{2+\epsilon}}
\end{align*}
for sufficiently large $n.$
\end{proof}

Next, we consider the upper bound of the diameter.

For any arbitrary nodes $u,v$, note that
\begin{align*}
&\Pr(d^g_{uv}>2D'+1|\Gamma_1(u),\cdots,\Gamma_{D'}(u),\Gamma_1(v),\cdots,\Gamma_{D'}(v))\\
&\leq (1-p)^{d_{D'}(u)d_{D'}(v)}
\end{align*}
Note if their $D'$ neighborhood has non-empty intersection, we have $d^g_{uv}\leq 2D'.$
Therefore, we obtain that
\[
\Pr(d^g_{uv}>2D'+1)\leq \Pr(\bar{E}_{D'}(u))+\Pr(\bar{E}_{D'}(v))+(1-p)^{(d^-_{D'})^2}
\]
The last term is evaluated as follows:
\begin{align*}
&(1-p)^{(d_{D'}^-)^2}\\
&\leq \exp(-p(d_{D'}^-)^2)\\
&=\exp\left(-p\left((1-\epsilon)^2(1-\frac{\epsilon}{\mu})^{D'-2}\mu^{D'}\right)^2\right)\\
&=\exp\left(-p\left(\frac{1-\epsilon}{1-\frac{\epsilon}{\mu}}\right)^4(\mu -\epsilon)^{2D'}\right)\\
&\leq \exp\left(-p\left(\frac{1-\epsilon}{1-\frac{\epsilon}{\mu}}\right)^4(\mu -\epsilon)^{\log n /\log \mu}\right)\\
& = \exp\left(-pn^{\frac{\log(\mu -\epsilon)}{\log \mu}}\left(\frac{1-\epsilon}{1-\frac{\epsilon}{\mu}}\right)^4\right)\\
& = \exp\left(-pn^{1+\frac{\log(1-\epsilon/\mu)}{\log \mu}}\left(\frac{1-\epsilon}{1-\frac{\epsilon}{\mu}}\right)^4\right)\\
& = \exp\left(-\mu e^{\log n\frac{\log(1-\epsilon/\mu)}{\log \mu}}\left(\frac{1-\epsilon}{1-\frac{\epsilon}{\mu}}\right)^4\right)\\
&\leq  \exp\left(-\mu(1-\frac{\epsilon}{\mu})^{-2}\left(1-\epsilon\right)^4\right)\\
&\leq\exp\left(-\mu \left(1-\epsilon\right)^4\right)\\
&\leq n^{-\gamma_l\left(1-\epsilon\right)^4}
\end{align*}

Therefore, we have
\[
\Pr(d^g_{uv}>2D'+1)\leq n^{-\gamma_l\left(1-\epsilon\right)^4}+2D'n^{-\frac{\gamma_l\epsilon^2}{2+\epsilon}}
\]
Finally, we have
\begin{align*}
&\Pr(\hbox{Diamter}>2D'+1)\leq \sum_{u\neq v}\Pr(d^g_{uv}>2D'+1)\\
&\leq n^2 \times\left(n^{-\gamma_l\left(1-\epsilon\right)^4}+2D'n^{-\frac{\gamma_l\epsilon^2}{2+\epsilon}}\right)
\end{align*}
Therefore, we have
\[
\gamma_l> \max\left(\frac{2}{(1-\epsilon)^4},\frac{2(2+\epsilon)}{\epsilon^2}\right)
\]
Note $\frac{2}{(1-\epsilon)^4}-\frac{2(2+\epsilon)}{\epsilon^2}$ is a increasing function for $\epsilon(0,1)$ and $\max\left(\frac{2}{(1-\epsilon)^4},\frac{2(2+\epsilon)}{\epsilon^2}\right) \geq 23.35.$ The optimal value is when $\epsilon = 0.459.$
Therefore, when
\[
\gamma_l>24.
\]
we have
\begin{align*}
\Pr(\hbox{Diamter}>2D'+1)&\leq \sum_{u\neq v}\Pr(d^g_{uv}>2D'+1)\\
&\leq n^{-\delta_2}+2D'n^{-\delta_1}
\end{align*}
where $\delta_1$ and $\delta_2$ is fixed positive constant. Note $D'\leq \log n.$ We have
\[
\lim_{n}\Pr(\hbox{Diamter}>2D'+1) = 0.
\]
Note $2\lceil x/2 \rceil\leq \lceil x\rceil.$ Hence
\[
2D'+1\leq D+2
\]
Hence, we have
\[
\lim_{n}\Pr(\hbox{Diamter}\leq D+2) = 1.
\]
The theorem is proved.
\end{proof}

\section{Necessary Inequalities}
We use the following Chernoff bounds.
\begin{lemma}\label{lem:chernoff}
Let $X_1,X_2,\cdots,X_n$ be i.i.d Poisson trials such that $\Pr(X_i)=p_i$. Let $X=\sum_{i=1}^n X_i$ and $E(X)=\mu.$ For any $\delta>0,$ we have
\[
\Pr(X\geq(1+\delta)\mu)\leq \left(\frac{e^\delta}{(1+\delta)^{1+\delta}}\right)^\mu\leq \exp\left(-\frac{\delta^2\mu}{2+\delta}\right)
\]
and for $\delta\in(0,1)$
\[
\Pr(X\leq(1-\delta)\mu)\leq \exp\left(-\frac{\delta^2\mu}{2}\right)
\]
\end{lemma}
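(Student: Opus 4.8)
The plan is to prove both tail bounds by the standard exponential-moment (Chernoff) method: bound the relevant tail probability by the moment generating function of $X$ via Markov's inequality, exploit independence to factorize that moment generating function, optimize over the free exponential parameter to obtain the sharp bounds $\left(\frac{e^\delta}{(1+\delta)^{1+\delta}}\right)^\mu$ and $\left(\frac{e^{-\delta}}{(1-\delta)^{1-\delta}}\right)^\mu$, and finally convert these into the cleaner forms $\exp\!\left(-\frac{\delta^2\mu}{2+\delta}\right)$ and $\exp\!\left(-\frac{\delta^2\mu}{2}\right)$ using two elementary scalar inequalities. Here ``Poisson trials'' means the $X_i$ are independent Bernoulli variables with possibly distinct parameters $p_i=\Pr(X_i=1)$, so $\mu=E(X)=\sum_{i=1}^n p_i$.

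For the upper tail I would fix $t>0$ and apply Markov's inequality to the nonnegative variable $e^{tX}$, writing $\Pr(X\geq(1+\delta)\mu)=\Pr(e^{tX}\geq e^{t(1+\delta)\mu})\leq e^{-t(1+\delta)\mu}\,E(e^{tX})$. By independence the moment generating function factorizes as $E(e^{tX})=\prod_{i=1}^n E(e^{tX_i})=\prod_{i=1}^n\bigl(1+p_i(e^t-1)\bigr)$, and the elementary inequality $1+x\leq e^x$ applied with $x=p_i(e^t-1)$ gives $E(e^{tX})\leq \exp\!\bigl((e^t-1)\sum_i p_i\bigr)=\exp\!\bigl((e^t-1)\mu\bigr)$. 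Substituting and choosing $t=\log(1+\delta)$, which minimizes the resulting exponent, yields the first claimed bound $\left(\frac{e^\delta}{(1+\delta)^{1+\delta}}\right)^\mu$. The lower tail is symmetric: I would apply Markov to $e^{-tX}$ with $t>0$, obtaining $\Pr(X\leq(1-\delta)\mu)\leq \exp\!\bigl((e^{-t}-1)\mu+t(1-\delta)\mu\bigr)$, and optimize at $t=-\log(1-\delta)$ to get $\left(\frac{e^{-\delta}}{(1-\delta)^{1-\delta}}\right)^\mu$.

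The remaining work, which is the only nonroutine part, is the passage from the sharp bounds to the stated exponential bounds. Taking logarithms, the upper-tail claim reduces to the scalar inequality $(1+\delta)\log(1+\delta)-\delta\geq \frac{\delta^2}{2+\delta}$ for all $\delta\geq0$, and the lower-tail claim to $(1-\delta)\log(1-\delta)\geq -\delta+\frac{\delta^2}{2}$ for $\delta\in(0,1)$. I would establish the first by setting $h(\delta)=(1+\delta)\log(1+\delta)-\delta-\frac{\delta^2}{2+\delta}$, noting $h(0)=0$, and showing $h'(\delta)=\log(1+\delta)-\frac{\delta(4+\delta)}{(2+\delta)^2}\geq0$, which follows from a Taylor comparison of the two terms (both vanishing at $\delta=0$); hence $h\geq0$. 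The second inequality follows even more directly from the expansion $(1-\delta)\log(1-\delta)=-\delta+\frac{\delta^2}{2}+\frac{\delta^3}{6}+\cdots$, whose higher-order terms are all nonnegative on $(0,1)$. The main obstacle is therefore purely analytic rather than probabilistic: verifying the monotonicity of $h$ cleanly, since the Chernoff machinery itself is entirely standard once independence is used to factorize the moment generating function.
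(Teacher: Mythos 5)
Your proposal is correct, but its scope differs from what the paper actually proves. The paper does \emph{not} rederive the Chernoff machinery: it cites \cite{MitUpf_05} for the standard bounds $\Pr(X\geq(1+\delta)\mu)\leq\bigl(e^\delta/(1+\delta)^{1+\delta}\bigr)^\mu$ and $\Pr(X\leq(1-\delta)\mu)\leq\exp(-\delta^2\mu/2)$, and the entire written proof consists of the scalar comparison $\bigl(e^\delta/(1+\delta)^{1+\delta}\bigr)^\mu\leq\exp\bigl(-\frac{\delta^2\mu}{2+\delta}\bigr)$, which it reduces (after clearing the factor $(1+\delta)$) to $f(\delta)=2\delta-(2+\delta)\log(1+\delta)\leq 0$, proved by $f(0)=0$, $f'(0)=0$, $f''(\delta)=\frac{1}{1+\delta}\bigl(\frac{1}{1+\delta}-1\bigr)\leq 0$. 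You instead build everything from first principles (Markov on $e^{\pm tX}$, factorization of the MGF by independence, $1+x\leq e^x$, optimizing $t$), and then prove the same scalar fact in the equivalent form $h(\delta)=(1+\delta)\log(1+\delta)-\delta-\frac{\delta^2}{2+\delta}\geq 0$; note $(2+\delta)h(\delta)=-(1+\delta)f(\delta)$, so the two inequalities are identical up to a positive factor. Your route buys self-containedness (no external reference needed, and your lower-tail Taylor expansion $(1-\delta)\log(1-\delta)=-\delta+\sum_{k\geq 2}\frac{\delta^k}{k(k-1)}$ is a clean way to get the $\exp(-\delta^2\mu/2)$ bound directly); the paper's route is shorter because it only proves the one step not available in the textbook.

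One step of yours is under-justified: the claim that $h'(\delta)=\log(1+\delta)-\frac{\delta(4+\delta)}{(2+\delta)^2}\geq 0$ ``follows from a Taylor comparison'' is not immediate, since the series for $\log(1+\delta)$ alternates and the second term is not a polynomial. The claim is true, and the honest fix is one more differentiation, exactly parallel to the paper's computation: writing $h'(\delta)=\log(1+\delta)-1+\frac{4}{(2+\delta)^2}$, one gets
\begin{align*}
h''(\delta)=\frac{1}{1+\delta}-\frac{8}{(2+\delta)^3}\geq 0
\quad\Longleftrightarrow\quad
(2+\delta)^3\geq 8(1+\delta)
\quad\Longleftrightarrow\quad
4\delta+6\delta^2+\delta^3\geq 0,
\end{align*}
which holds for $\delta\geq 0$; combined with $h'(0)=0$ and $h(0)=0$ this closes the argument.
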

\begin{proof}
We only need to prove $\left(\frac{e^\delta}{(1+\delta)^{1+\delta}}\right)^\mu\leq \exp\left(-\frac{\delta^2\mu}{2+\delta}\right).$ All other bounds are proved in \cite{MitUpf_05}.
We need to show
\begin{align*}
\left(\frac{e^\delta}{(1+\delta)^{1+\delta}}\right)^\mu&\leq \exp\left(-\frac{\delta^2\mu}{2+\delta}\right)\\
\mu\left(\delta-(1+\delta)\log(1+\delta)\right)&\leq -\frac{\delta^2\mu}{2+\delta}\\
(2+\delta)\delta-(1+\delta)(2+\delta)\log(1+\delta)+\delta^2&\leq 0\\
(2\delta-(2+\delta)\log(1+\delta))(1+\delta)&\leq 0\\
2\delta-(2+\delta)\log(1+\delta)&\leq 0
\end{align*}
Denote by $f(\delta)=2\delta-(2+\delta)\log(1+\delta).$ We have
\[
f'(\delta)=2-\log(1+\delta)-\frac{2+\delta}{1+\delta}=1-\log(1+\delta)-\frac{1}{1+\delta}
\]
\[
f''(\delta)=-\frac{1}{1+\delta}+\frac{1}{(1+\delta)^2}=\frac{1}{1+\delta}(\frac{1}{1+\delta}-1)\leq 0
\]
Hence, $f'(\delta)\leq f'(0)=0.$ Therefore, we have
\[
f(\delta)\leq f(0)=0
\]
Hence we prove the lemma.
\end{proof}
We need the following bounds
\begin{lemma}\label{lem:exponentialBounds}
When $x>0,$ we have
\[
1-x\leq e^{-x}
\]
and when $x\in(0,\frac{\log 2}{2}),$
\[
1-x\geq e^{-2x}.
\]
\end{lemma}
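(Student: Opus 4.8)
The plan is to prove each of the two inequalities by an elementary monotonicity argument, in both cases anchoring at $x=0$ where the two sides agree and then controlling the sign of a single derivative. No machinery beyond the first-derivative test is needed, so I would keep the argument self-contained.

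For the first inequality I would set $g(x) = e^{-x} - (1-x)$ and note that $g(0) = 0$. Differentiating gives $g'(x) = 1 - e^{-x}$, which is strictly positive for every $x>0$ because $e^{-x}<1$ there. Hence $g$ is strictly increasing on $(0,\infty)$, so $g(x) > g(0) = 0$ for all $x>0$, which is precisely $1-x < e^{-x}$, and in particular $1-x \le e^{-x}$. (Equivalently, one can invoke convexity of $e^{-x}$: the line $1-x$ is its tangent at the origin, so the exponential lies on or above it everywhere; I would likely mention this as the conceptual reason but carry out the derivative computation for rigor.)

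For the second inequality I would run the same scheme on the restricted domain. Set $h(x) = (1-x) - e^{-2x}$, so that $h(0) = 0$ and $h'(x) = -1 + 2e^{-2x}$. The sign of $h'$ is governed by whether $2e^{-2x} \ge 1$, i.e. $e^{-2x}\ge \tfrac12$; taking logarithms gives $-2x \ge -\log 2$, that is $x \le \tfrac{\log 2}{2}$. Thus $h'(x) > 0$ throughout the open interval $\left(0,\tfrac{\log 2}{2}\right)$, with the derivative vanishing exactly at the right endpoint. Consequently $h$ is strictly increasing on this interval and $h(x) > h(0) = 0$, which yields $1-x > e^{-2x}$ and hence $1-x \ge e^{-2x}$ on $\left(0,\tfrac{\log 2}{2}\right)$, as claimed.

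There is no genuine obstacle here; the only point requiring the slightest care is recognizing that the threshold $\tfrac{\log 2}{2}$ appearing in the statement is precisely the location where $h'$ changes sign, which is exactly what forces the restriction on the domain for the second bound. Both steps are short computations, so I would present them compactly rather than belaboring the algebra.
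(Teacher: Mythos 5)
Your proposal is correct and follows essentially the same route as the paper: both anchor at $x=0$ and apply the first-derivative test, the paper using $f_1(x)=1-x-e^{-x}$ (decreasing for $x>0$) and $f_2(x)=1-x-e^{-2x}$ (increasing on $(0,\tfrac{\log 2}{2})$), while you merely flip the sign convention on the first function and add the tangent-line/convexity remark. No substantive difference; your argument is a valid proof of the lemma.
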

\begin{proof}
Let $f_1(x)=1-x-e^{-x}.$ We have
\[
f'_1(x)=-1+e^{-x}<0
\]
when $x>0.$
Hence, $f_1(x)\leq f_1(0)=0.$ Therefore, we have $1-x\leq e^{-x}.$

Let $f_2(x)=1-x-e^{-2x}.$ We have
\[
f'_2(x)=-1+2e^{-2x}.
\]
When $x<\frac{\log 2}{2},$ we have $f'_2(x)>0.$ Therefore $f_2(x)\geq f_2(0)=0.$ We have $1-x\geq e^{-2x}.$
\end{proof}
We obtain the following bound using the similar proof procedures. $\forall x>0, 1-\frac{1}{x}\leq \log(x)\leq x-1.$

\begin{lemma}
For $x\geq 2$ and integer $n\geq 0$ we have
\[
x^n \leq \sum_{i=0}^n x^i\leq 2x^n
\]
\end{lemma}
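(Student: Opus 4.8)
The plan is to handle the two inequalities separately, since the lower bound is immediate while the condition $x \geq 2$ enters only in the upper bound. For the lower bound, I would simply observe that every summand $x^i$ is positive (as $x \geq 2 > 0$) and that $x^n$ itself is one of the terms of $\sum_{i=0}^n x^i$. Discarding all the remaining nonnegative terms gives $\sum_{i=0}^n x^i \geq x^n$ at once, using nothing beyond positivity of $x$.

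For the upper bound I would use the closed form of the finite geometric series. Since $x \geq 2 \neq 1$, we have $\sum_{i=0}^n x^i = \frac{x^{n+1}-1}{x-1}$, and the target inequality $\frac{x^{n+1}-1}{x-1} \leq 2x^n$ is equivalent, after clearing the positive denominator $x-1>0$, to $x^{n+1}-1 \leq 2x^{n+1}-2x^n$, i.e. to $x^n(x-2)+1 \geq 0$. This holds trivially, because $x-2 \geq 0$ forces $x^n(x-2) \geq 0$, so adding $1$ keeps the expression positive. Alternatively, and perhaps more transparently, I would argue by induction on $n$: the base case $n=0$ reads $1 \leq 2$, and for the inductive step I would write $\sum_{i=0}^{n+1} x^i = x^{n+1} + \sum_{i=0}^{n} x^i \leq x^{n+1} + 2x^n = x^n(x+2)$ and then invoke $x+2 \leq 2x$ to conclude $x^n(x+2) \leq 2x^{n+1}$.

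The main (and essentially only) obstacle is the upper bound, and the single place where the hypothesis $x \geq 2$ is genuinely used is the step $x+2 \leq 2x$, equivalently $x^n(x-2) \geq 0$. This also shows why the constant $2$ cannot be improved in general: as $x \to 2^+$ and $n \to \infty$ the ratio $\sum_{i=0}^n x^i / x^n$ approaches $\frac{x}{x-1} = 2$, so the bound is tight at the endpoint $x=2$, and for $1 < x < 2$ the factor $2$ would have to be enlarged to $\frac{x}{x-1}$.
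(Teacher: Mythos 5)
Your proposal is correct, and your primary argument for the upper bound is essentially the paper's own: both use the closed form $\sum_{i=0}^n x^i = \frac{x^{n+1}-1}{x-1}$ and reduce to a sign check where $x\geq 2$ enters; the paper keeps the denominator and shows the numerator $x^n\bigl(2-\frac{1}{x^n}-x\bigr)$ is nonpositive, while you clear the positive denominator and observe $x^n(x-2)+1\geq 0$ --- the same computation in a slightly different arrangement. You go a bit beyond the paper in three useful ways: you explicitly dispatch the lower bound $x^n\leq\sum_{i=0}^n x^i$ (the paper's proof silently omits it, though it is trivial since $x^n$ is one of the positive summands); you give an alternative induction proof whose inductive step $x^{n+1}+2x^n\leq 2x^{n+1}$ isolates exactly where $x\geq 2$ is needed; and you note tightness of the constant $2$, since $\sum_{i=0}^n x^i/x^n\to\frac{x}{x-1}=2$ at $x=2$, which explains why the hypothesis cannot be weakened to $x>1$ without changing the constant. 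None of this changes the substance --- both routes are elementary and equivalent in strength --- but your version is the more complete write-up.
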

\begin{proof}
\begin{align*}
\sum_{i=0}^n x^n-2x^n&=\frac{x^{n+1}-1}{x-1}-2x^n\\
&=\frac{x^{n+1}-1-2x^{n+1}+2x^n}{x-1}\\
&=\frac{2x^n-1-x^{n+1}}{x-1}\\
&=\frac{x^n(2-\frac{1}{x^n}-x)}{x-1}\\
&\leq \frac{x^n(2-\frac{1}{x^n}-2)}{x-1}\\
&\leq 0
\end{align*}
Hence, we obtain the inequality in the lemma.
\end{proof}

\end{document}